\newtheorem{Claim}{Claim}
\newtheorem{theorem}{Theorem}
\newtheorem{definition}[]{Definition}
\newtheorem{corollary}[]{Corollary}
\newtheorem{remark}{Remark}
\newtheorem{conjecture}{Conjecture}
\newtheorem{lemma}{Lemma}
\newcommand{\proportional}{\textsc{cds-clearing}}
\newcommand{\pure}{\textsc{pure-circuit}}
\title{Clearing Financial Networks with Derivatives: From    Intractability to Algorithms}
\author{Stavros D. Ioannidis \and Bart de Keijzer \and Carmine Ventre}
\date{}
\date{Department of Informatics\\King's College London}
\begin{document}

\maketitle
\thispagestyle{empty}
\begin{abstract}
Financial networks raise a significant computational challenge in identifying insolvent firms and evaluating their exposure to systemic risk. This task, known as the \emph{clearing problem}, is computationally tractable when dealing with simple debt contracts \cite{eisenberg2001systemic}. However under the presence of certain derivatives called credit default swaps (CDSes) \cite{schuldenzucker2016clearing} the \emph{clearing problem} is $\textsf{FIXP}$-complete \cite{ioannidis2022strong}. Existing techniques only show $\textsf{PPAD}$-hardness for finding an \emph{$\epsilon$-solution} for the \emph{clearing problem} with CDSes within an unspecified small range for $\epsilon$ \cite{schuldenzucker2017finding}.

We present significant progress in both facets of the \emph{clearing problem}: (i) intractability of approximate solutions; (ii) algorithms and heuristics for computable solutions. Leveraging \pure\ \cite{DBLP:conf/focs/DeligkasFHM22}, we provide the first explicit inapproximability bound for the \emph{clearing problem} involving CDSes. 
Our primal contribution is a reduction from \pure\ which establishes that finding approximate solutions is \textsf{PPAD}-hard within a range of roughly 5\%.

To alleviate the complexity of the \emph{clearing problem}, we identify two meaningful restrictions of the class of financial networks motivated by regulations: (i) the presence of a central clearing authority; and, (ii) the restriction to \emph{covered} CDSes. We provide the following results:
\begin{enumerate}
\item[$i$)] The \textsf{PPAD}-hardness of approximation persists when central clearing authorities are introduced.
\item[$ii$)] An optimisation-based method for solving the \emph{clearing problem} with central clearing authorities.
\item[$iii$)] A polynomial-time algorithm when the two restrictions hold simultaneously. 
\end{enumerate}
\end{abstract}
\newpage
\setcounter{page}{1}
\section{Introduction}
How does the collapse of an established financial institution affect the stability of a financial market? Can a bank's exposure to such a contagion risk be efficiently measured by clearing authorities? What factors should legislatures consider when formulating economic policies? These are just some exemplar questions that are paramount to the study of systemic risk in finance. In this context, financial networks have emerged as the framework of reference. These networks are modelled as graphs whose nodes represent financial institutions (such as banks) and edges represent active economic commitment between nodes. The central computational challenge for financial networks, that underpins many, if not all, the questions considered in the literature in this area, is the \emph{clearing problem}. This amounts to computing the percentage of the liabilities that each bank should pay if the system had to be cleared at once -- intuitively, the smaller the ratio, the more exposed the bank is to systemic risk due to defaults in the network. In the literature, this ratio is referred to as the ``\emph{clearing recovery rate}."

The finance industry is infamous for its 
inventiveness in coming up with hedging and/or lucrative new instruments that \emph{derive} their value from  underlying assets. Chief amongst these derivative contracts is the so-called Credit Default Swap (CDS) for which the value of the underlying asset is based upon the recovery rate of a third institution, called reference bank. Originally conceived to protect the creditor from potential default of its debtor, CDSes have been increasingly used as speculative contracts to wager about the solvency of other institutions. They attracted the interest of regulators both in the US and the EU being amongst the causes of the Great Financial Crisis of 2007 and the European sovereign debt crisis of 2011. The picture for the computational complexity of the clearing problem is less clean when CDSes enter the frame. 

In their seminal work Eisenberg and Noe \cite{eisenberg2001systemic}, establish the mathematical formulation of the \emph{clearing problem}. Their approach, which constitutes the baseline of research in this area, initiates the study of the problem from a fixed point computation perspective. The upshot of their work, is that \emph{clearing recovery rates} correspond to  fixed points of a certain function $f$,  whenever the liabilities are simple contracts requiring the payment of a certain value (known as notional) unconditionally. More importantly, they provide a polynomial time algorithm for computing the clearing recovery rate vectors in case of networks with simple debt contracts. Despite the solid foundational base of this work in the financial literature, the reality that emerges from the presence of credit default swaps cannot be captured.

The addition of credit default swaps to the model is attributed to Schuldenzucker, Seuken and Battinston \cite{schuldenzucker2016clearing}. The authors present a graph-like model for financial networks that contain simple debt and CDS contracts. Similarly, the \emph{clearing problem} takes the form of a fixed point computation over a more complex function $f$ that always admits fixed points. The main observation for the new model on top of its involved combinatorial structure, is the numerical irrationality of the clearing recovery rates; a combination that constitutes the exact solution of the problem out of reach. To correctly define the computational question in this case, one then needs to resort to a notion of approximation.

Proposed approximation notions in the literature involve the computation of either a `weak' (almost) fixed point $x$ such that $|x -f(x)|< \epsilon$ along all dimensions or a `strong' (near) fixed point $y$ satisfying $|y-z|<\epsilon$ where $z$ is such that $z=f(z)$ (that is, $z$ is a fixed point). Both versions can be appealing for systemic risk in financial networks and have been considered in the literature: Ioannidis, de Keijzer and Ventre showed that the exact version of the clearing problem involving CDSes is \textsf{FIXP}-complete, which immediately yields \textsf{FIXP}$_a$-completeness for strong approximations whenever the banks pay proportionally or by prioritising debts \cite{ioannidis2022strong,ioannidis2022financial}.

In \cite{schuldenzucker2017finding} it  is proved that there a exit a constant $\epsilon$, for which computing  weak approximations within this constant is \textsf{PPAD}-complete, whenever the banks pay proportionally. The authors construct a reduction from the 
$\epsilon$-\emph{Generalised-Circuit} problem proved by Rubinstein to be \textsf{PPAD}-hard for some constant $\epsilon$ \cite{rubinstein2015inapproximability}. This statement excludes the possibility of constructing a polynomial time approximation scheme (\textsf{PTAS}).

\vspace{2mm}
\noindent {\bf Contribution.}
In this paper we want to understand whether there are values of $\epsilon$ for which computing a suitable approximation of the clearing recovery rates can be done in polynomial time. In particular, we are interested in 
the intractability of the problem; for which values of $\epsilon$ do the hardness results carry over? We study  weak approximations and establish the following result.

\begin{center}
\begin{minipage}[c]{0.95\textwidth}
    \textbf{Main Theorem 1 (informal).} Computing $\epsilon$-weak approximations of clearing recovery rates is \textsf{PPAD}-hard for $\epsilon \leq \frac{3-\sqrt{5}}{16} \approx 0.048$.
    \end{minipage}
    \end{center}


We exploit the connection of the clearing problem to the \textsf{PPAD} complexity class, along with the recent advancements in the area of  total function search problems \textsf{TFNP}, regarding constant inapproximability. Our proof leverages \emph{Pure-Circuit} -- a toolkit introduced by Deligkas, Fearnley, Hollender and Melissourgos \cite{DBLP:conf/focs/DeligkasFHM22} -- for showing constant inapproximability in \textsf{PPAD}. The problem features a set of variables which can take one of three different values 0, 1 or garbage and a circuit constructed from connecting gates of the following three types: NOT, OR and PURIFY. The first two basically follow boolean logic whereas the PURIFY gate has one input and two outputs and makes sure to duplicate the pure bit in input or produce at least one pure bit in output if the input is garbage. The main result of \cite{DBLP:conf/focs/DeligkasFHM22} is that \emph{Pure-Circuit} is \textsf{PPAD}-complete.

We construct a direct reduction from \emph{Pure-Circuit} to the \emph{clearing problem}. The proof is technically involved and can be divided in the following parts: (i) A ``hypothetical'' encoding of the recovery rate space $[0,1]$ into three disjoint subintervals that intuitively map recovery rates to either a pure bit or garbage. (ii) The construction of three ``financial gates'' whose behaviour under weak approximate clearing vectors must simulate the function of \emph{Pure-Circuit} gates after decoding the recovery rates of specific banks. (iii) A reverse engineering process that fixes the initial encoding for the recovery rate values and results in the final specified constant $\epsilon$. Here we note two things. The proposed reduction is optimal, in the sense that it generates a family of possible encodings for the recovery rate values, upon which we choose the one that maximises the inapproximability parameter $\epsilon$. The intricate structure of the proposed networks combined with the restriction of computations to intervals within the $[0,1]$ range, significantly increases the complexity of the analysis. To bypass this obstacle we define a first-order language denoted as $L(\text{R},\text{F},\text{C})$, designed to handle combined arithmetic operations involving intervals and numbers within the range $[0,1]$. This is a basic tool in our analysis since it combines concise representations with a high level of expressiveness.
The upshot is that if we were able to approximate the clearing problem better than (roughly) $0.048$ then the encoding would guarantee that we could solve \emph{Pure-Circuit}. 
%

In the second part we evaluate the effectiveness of policies introduced in the wake of the aforementioned crises, the introduction of Central CDS Debtors (CCDs), a construct reminiscent of a \emph{central clearing counterparty} (CCP), and the \emph{ban of naked CDSes}. 

\begin{center}
\begin{minipage}[c]{0.95\textwidth}
    \textbf{Main Theorem 2 (informal).} There is an exponential-time algorithm for optimising any objective function of interest with respect to the clearing recovery rates if CCDs are mandated. 
    \end{minipage}
    \end{center}

Both the regulatory frameworks in EU and US 
require the use of a CCP for a large part of the over-the-counter derivatives market \cite{schuldenzucker2016clearing}. This means that if two banks $u$ and $v$ want to sign a derivative (CDS in our context) they need to involve a CCP which will sign one CDS contract with $u$ and another CDS with $v$. A CCP is typically 
so well capitalised that it can absorb shocks in the market. A Central CDS debtor is a CCP which is not creditor of any CDS contract, a strengthening of the current notion of CCP where the risk of the original CDS from $u$ to $v$ is not only absorbed downstream towards $v$ but also upstream vis-a-vis $u$ (e.g., the original CDS is only substituted with a CDS between the CCP and $v$ whereas the liability with $u$ is a simple debt contract). Such a policy requirement restricts quite significantly the network structure and holds out the hope for efficient computation of (good approximations of) clearing rates. Interestingly, under the presences of CCDs, all clearing vectors are rational. Unfortunately, our inapproximability result applies also to the case in which CCDs ought to be used.

We propose an approach for computing exact clearing vectors. 
Specifically, for financial networks that involve CCDs, we devise a Mixed-Binary Linear Program.  Within this program, there is a combination of real-valued variables for the recovery rates and binary decision variables tailored to indicate a bank's solvency status. The feasible solutions of this program correspond to the clearing recovery rate vectors of the network. This framework is highly adaptable for optimising any linear objective function of interest related to the clearing vector. An immediate implication is an exponential-time algorithm for computing clearing recovery rate vectors when CCDs are mandated. To the best of our knowledge, this is the first explicit algorithm for the clearing problem with derivatives.

\begin{center}
\begin{minipage}[c]{0.95\textwidth}
    \textbf{Main Theorem 3 (informal).} There is a polynomial-time algorithm computing exact clearing recovery rates if CCDs are mandated and naked CDSes are banned. 
    \end{minipage}
    \end{center}
    
A naked CDS is a purely speculative contract; 
its counterparties do not have any other interest in the reference bank if not the CDS itself. However, regulators could ban their existence and only allow to buy a CDS if a corresponding (debt) exposure exists -- 
i.e., to only have so-called covered CDS. We exploit the topological structure of covered CDS and CCDs and compute exact clearing recovery rates for any such financial network in polynomial time. 

\vspace{2mm}
\noindent \textbf{Significance of our results.} 
The \emph{clearing problem} is among the first conjectured by the authors in \cite{DBLP:conf/focs/DeligkasFHM22}, to fall into the category of problems for which a transition from an undefined small constant inapproximability parameter to an explicit higher bound can be achieved via their proposed tool, i.e., \emph{Pure-Circuit}. Following the current of research in the intersection of finance and computation and aligning it with the cutting-edge research direction marked by \cite{DBLP:conf/focs/DeligkasFHM22}, we prove the conjecture correct. We provide the first explicit inapproximability bound for the \emph{clearing problem} for financial networks with credit default swaps. The significance of our result is not restricted to specifying the inapproximability bound, but also to its comparison with the state of the art. We specify the initial (yet unknown) inapproximability bound to $\epsilon < 1/150 \approx 0.0067$ (cf. Appendix \ref{apx:C}). Therefore, our main theorem \ref{main_theorem} is a seven-fold improvement over the state of the art. Another important added value is that our result holds for any ``reasonable'' \emph{payment scheme} (see Corollary \ref{corollary1} below). 
 
This paper is the pivotal step from complexity to algorithmic results for the \emph{clearing problem}. Our optimisation approach is novel, and provides a versatile tool for proving higher lower bounds and upper bounds. The proposed algorithms are the first positive results in the area. The exponential algorithm suggests a clearing mechanism for central clearing authorities. Moreover, it solves default ambiguity problems that were previously noted in the literature \cite{papp2022default} (cf. Remark \ref{expremark} below for details). It is important to emphasise that the exponent in the algorithm's running time is contingent solely on the number of banks in the network and remains unaffected by the actual size of the parameters within the networks structure (see Remark \ref{expremark} below). The efficient algorithm we propose, when high-capitalised clearing authorities are coupled with \emph{covered CDSes}, aligns with legislative policies on prohibiting the use of ``naked CDSes." Our analytical proof supports with evidence policies by the EU for sovereign debt in the wake of the Eurozone crisis, (see \cite{EUBusiness}) and previous research \cite{ioannidis2022strong,schuldenzucker2017finding}.

\vspace{2mm}
\noindent \textbf{Related Work.} \label{sec:related} The \emph{clearing problem} was introduced by Eisenberg and Noe \cite{eisenberg2001systemic}. They introduced a model addressed to financial networks with simple debt contracts. The authors presented the first polynomial-time algorithm for computing exact clearing recovery rate vectors whenever banks pay proportionally. Other payment policies and variations of the model were defined by Rogers and Veraart \cite{rogers2013failure}. Concepts of financial systems with simple debt contracts were investigated in \cite{schuldenzucker2021portfolio,papp2021debt,DBLP:conf/sagt/HoeferW22}

Schuldenzucker, Seuken and Battiston \cite{schuldenzucker2016clearing}, added credit default swaps to the model. In \cite{schuldenzucker2017finding}, the authors showed that finding a weak approximate clearing vector in financial networks containing both debt contracts and CDSes is \textsf{PPAD}-complete. Ioannidis, de Keijzer and Ventre \cite{ioannidis2022strong,ioannidis2022financial,DBLP:journals/tcs/IoannidisKV23} showed that computing an exact solution to the  clearing problem with CDSes whenever the banks pay proportionally or under priorities is \textsf{FIXP}-complete. Further studies on financial networks with CDSes are presented by Papp and Wattenhofer \cite{papp2022default,DBLP:conf/icalp/PappW20a,DBLP:conf/innovations/PappW21}.

A game-theoretic approach to financial networks is introduced by Bertschinger, Hoefer and  Schmand \cite{bertschinger2020strategic}. Financial Network games and strategic behaviour of financial firms is presented by Kanellopoulos, Kyropoulou and Zhou  \cite{kanellopoulos2021financial,DBLP:conf/atal/KanellopoulosKZ22}.The strategic framework of prioritising debts in financial networks with CDSes is due to Papp and Wattenhofer \cite{DBLP:conf/icalp/PappW20a}.

The complexity of total search function problems \textsf{TFNP} was first considered by Megiddo and Papadimitriou \cite{DBLP:journals/tcs/MegiddoP91} while the class \textsf{PPAD} was defined by Papadimitriou~\cite{papadimitriou1994complexity,yannakakis2009equilibria} and gained significant attention from the papers of Daskalakis, Goldberg and Papadimitriou \cite{daskalakis2009complexity} and Chen, Deng and Teng \cite{DBLP:journals/jacm/ChenDT09} for computing Nash equilibrium in strategic-formed games. The \textsf{FIXP} complexity class was introduced by Etessami and Yannakakis \cite{etessami2010complexity,yannakakis2009equilibria} for studying strong approximation to Nash equilibrium in strategic-formed games. Recent advancements related to the \textsf{FIXP} class can be found in \cite{DBLP:conf/icalp/BatziouHH21,DBLP:journals/siamcomp/FilosRatsikasGHLP23,DBLP:conf/focs/Filos-RatsikasH21,DBLP:journals/jcss/GoldbergH21,ioannidis2022strong,DBLP:journals/tcs/IoannidisKV23,hansen2021computational}. \emph{Generalised-Circuit} was introduced by \cite{DBLP:journals/jacm/ChenDT09} and used by Rubinstein \cite{rubinstein2015inapproximability} to establish a first constant inapproximability method. Most recently Deligkas, Fearnley, Hollender and Melissourgos \cite{DBLP:conf/focs/DeligkasFHM22} suggested \emph{Pure-Circuit}, a tool for showing stronger constant inapproximability results for \textsf{PPAD}. Applications of the problem are presented in \cite{DBLP:conf/focs/DeligkasFHM22,deligkas2023tight}.
\begin{section}{Preliminaries}
{\bf Financial Networks.} A financial network consists of a set of financial entities (which we refer to as \emph{banks} for convenience), interconnected through a set of financial contracts.
 Let $N = \{1,\ldots,n\}$ be the set of $n$ banks. 
Each bank $i \in N$ has a certain amount of \emph{external assets}, denoted by $e_i \in \mathbb{Q}_{\geq 0}$. We let $e = (e_1, \ldots ,e_n)$ be the vector of all external assets. We consider two types of liabilities among banks:  \emph{debt contracts} and \emph{credit default swaps (CDSes)}. A {debt contract} requires one bank $i$ (the debtor) to pay another bank $j$ (the creditor) 
a certain amount $c_{i,j} \in \mathbb{Q}_{\geq 0}$.
A CDS requires a debtor $i$ to pay a creditor $j$ on condition that a third bank called the \emph{reference bank} $R$ is in default, meaning that $R$ cannot 
fully pay its liabilities. 

We associate with each bank $i$ a variable $r_i \in [0,1]$, called \emph{recovery rate}, which represents the proportion of liabilities it is able to pay. Having $r_i = 1$ means that bank $i$ can fully pay its liabilities, while $r_i < 1$ indicates that $i$ is in default and can only pay off a fraction of $r_i$ of its liabilities. In a CDS with debtor $i$, creditor $j$ and reference bank $R$, bank $i$ is obliged to pay bank $j$ an amount of $(1-r_R)c_{i,j}^R$, where $c_{i,j}^R \in \mathbb{Q}_{\geq 0}$ is the face value of the CDS. We do not allow any bank to have a debt contract with itself, and assume that all three banks in any CDS are distinct. The value $c_{i,j}$ ($c_{i,j}^R$, respectively) of a debt contract (CDS, respectively) is also called the \emph{notional} of the respective contract. Finally, we let $c = (c_{i,j},c_{i,j}^k)_{i,j,k \in N}$ be the collection of all contracts' notionals, in the form of a vector. Note that, here, we identify the absence of a contract as a contract with notional $0$.

The financial network $\mathcal{F}$ is represented as the triplet $(N,e,c)$.\footnote{We let $\mathcal{DC}_\mathcal{F}$ denote the set of all pairs of banks participating in a debt contract in $\mathcal{F}$. Similarly, $\mathcal{CDS}_\mathcal{F}$ denotes the set of all triplets participating in a CDS in $\mathcal{F}$. We drop $\mathcal{F}$ from the notation when it is clear from context.} The \emph{contract graph} of $\mathcal{F}=(N,e,c)$ is defined as a coloured directed multi-graph $G_{\mathcal{F}} = (V,A)$, where $V = N$ and $A = (\cup_{k\in N} A_k)\cup A_0$ where $A_{0} = \{(i,j) \mid i,j \in N \wedge c_{i,j} \neq 0\}$ and $A_k = \{(i,j) \mid i,j \in N \wedge c_{i,j}^k \neq 0\}$.
Each arc $(i,j) \in A_{0}$ is coloured blue and each $(i,j) \in A_{k}$ orange.  
For all $(i,j,R) \in \mathcal{CDS}$ we draw a dotted orange line from node $R$ to arc $(i,j) \in A_R$, denoting that $R$ is the reference bank of the CDS between $i$ and $j$. 
(Strictly speaking, this means that $G_{\mathcal{F}}$ is representable as a coloured directed hypergraph with arcs of size 2 and 3 rather than as a simple graph.)
Finally, we label each arc with the notional of the corresponding contract, and each node with the external assets of the corresponding bank.

\vspace{2mm}
\noindent {\bf Payment schemes.} Banks pay their debts by applying a prespecified rule, called a \emph{payment scheme} which we denote by $\mathbb{P}$. We consider payment schemes that satisfy generalisations \cite{schuldenzucker2016clearing} of two fundamental conditions introduced in~\cite{eisenberg2001systemic} (i.) \emph{Limited Liability}: A bank with sufficient assets to pay all its debts is obliged to do so. 
(ii.) \emph{Absolute Priority}: A defaulting bank submits all of its assets to its creditors.
 
The most studied payment scheme in the literature is the \emph{proportional payment scheme}, where each bank $i$ pays the $r_i$ proportion of each liability, leaving a $(1-r_i)$ fraction of each liability unpaid (studied in, e.g., \cite{ioannidis2022strong, schuldenzucker2016clearing, schuldenzucker2017finding}). 
Recall that the amount a debtor $i$ has to pay to creditor $j$ in a CDS with reference bank $R$ is given by $(1-r_R)c_{i,j}^R$. Consequently, to compute a bank's liabilities, knowledge is required of the recovery rates of other banks (in particular in case $i$ is involved in one or more CDSes as the debtor). 

Given a \emph{recovery rate} vector $r = (r_i)_{i\in N}$, we can express the \emph{liabilities}, \emph{payments}, and \emph{assets} of a financial network where banks pay \emph{proportionally} as follows.  
\begin{description}
\item[Liabilities.] The liability of a bank $i \in N$ to a bank $j \in N$ is $l_{i,j}(r) = c_{i,j} + \sum_{k \in N}(1-r_k)c_{i,j}^k.$ That is we sum up the liabilities from the debt contract and all CDS contracts between $i$ and $j$. We denote by $l_i(r)$ the total liabilities of $i$:
$l_i(r) = \sum_{j \in N}l_{i,j}(r)$. 
\item[Payments.] The payment bank $i$ submits to bank $j$ is denoted by $p_{i,j}(r) = r_i \cdot l_{i,j}(r)$.
\item[Assets.] The assets of a bank $i$ are defined as $a_i(r) = e_i + \sum_{j \in N}p_{j,i}(r)$, namely the sum of its external assets and incoming payments 
\end{description}
We are interested in specific recovery rate vectors called \emph{clearing}. 
\begin{definition}[Clearing recovery rate vector~(\emph{CRRV})] Given a financial network $(N,e,c)$, a recovery rate vector $r \in [0,1]^n$ is called \emph{clearing} iff for all banks $i \in N$, 
\begin{equation}\label{eq:clearing}
    r_i = \min \left\{1,\frac{a_i(r)}{l_i(r)}\right\}, \text{if $l_i(r) > 0$, and $r_i = 1$ if $l_i(r) = 0$.} 
\end{equation}
\end{definition}
For compatibility with  \cite{schuldenzucker2017finding,ioannidis2022strong} we consider only \emph{non-degenerate} networks.\footnote{The hardness result presented in this paper remains unaffected by this condition.}

\begin{definition}[Non-degeneracy \cite{schuldenzucker2017finding,ioannidis2022strong}]\label{def:nondegenerate}
A financial network is \emph{non-degenerate} iff every reference bank is the debtor of at least one debt contract and every CDS debtor either has positive external assets or is debtor in at least one debt contract.
\end{definition}

The computational problem of computing a clearing vector is defined in \cite{schuldenzucker2017finding,ioannidis2022strong}.

\begin{definition}[\proportional~\cite{ioannidis2022strong}]
Given a non-degenerate financial network $\mathcal{F} = (N,e,c)$, compute a clearing recovery rate vector ({CRRV}) whenever the banks pay proportionally. 
\end{definition}

\proportional\ is a total search problem~\cite{schuldenzucker2017finding} where the solutions of an instance $I$ correspond to the fixed points of the following function $(f_I)_i(r)=a_i(r)/\max\{a_i(r), l_i(r)\},\forall i\in N$~\cite{ioannidis2022strong}

Due to the \emph{non-degeneracy} condition the above function is well-defined.\footnote{The function $f_I(r)_i$ is well-defined for nodes $i$ that are not sinks with 0 external assets. Sink nodes $j$
have recovery rate $r_j=1$, cf. \eqref{eq:clearing}. Hence, we can turn these variables $r_j$'s to constants in $f_I$, thus bypassing $0/0$ cases and preserving the continuity of $f_I$.} It is proved that \proportional\ is \textsf{FIXP}-complete (strong apx $\textsf{FIXP}_a$ -complete)~\cite{ioannidis2022strong}. For the  \emph{weak} version it is established that there exist a small unspecified constant under which the problem is \textsf{PPAD}-complete \cite{schuldenzucker2017finding}. 

In this paper we study the \emph{weak} approximation version of the \proportional\ . Given 
$I\in \proportional$ we define the \emph{weak} $\epsilon$-approximate clearing recovery rate vector  ($\epsilon$-\emph{CRRV}) of $I$ as follows: %


\begin{definition}[Weak $\epsilon$-approximate clearing recovery rate vector~($\epsilon$-\emph{CRRV})]\label{epsilon-crrv}
Given a financial network $I = (N,e,c)$, a recovery rate vector $r\in [0,1]^n$ is called weak $\epsilon$-approximate clearing iff for all banks $i\in N$, it holds that (i.) if $e_i > \sum_{j \in N \setminus \{i\}} \left( c_{i,j} + \sum_{k \in N\setminus\{i,j\}} c_{i,j}^k\right)$, then $r_i = 1$; (ii.) otherwise, $ \|r_i-f_I(r_i)\|_{\infty} \leq \epsilon$,
 where $f_I$ is the function induced by \eqref{eq:clearing} with respect to instance $I$ defined above.
 \end{definition}
 Condition \emph{(i.)} guarantees that a bank $i$ with a ``trivial'' recovery rate value of $1$ under any clearing vector, is indeed set to $1$ under this notion of weak $\epsilon$-approximate clearing. For those banks, $r_i$ can be omitted from $f_I$ as a variable, so that our notion of an $\epsilon$-\emph{CRRV} can strictly correspond to a weak $\epsilon$-approximate fixed point of the function $f_I$ in the standard sense.

Next we define $\epsilon$-\proportional\ the approximate version of \proportional\ .

\begin{definition}[$\epsilon$-\proportional]
Given a financial network $\mathcal{F} \in \proportional$, compute a weak $\epsilon$-approximate clearing recovery rate vector ($\epsilon$-CRRV) whenever banks pay proportionally.
\end{definition}
\end{section} 

\section{Inapproximability of clearing vectors}
We present the \pure\ problem from which we reduce to $\epsilon$-$\proportional$. 
The main result establishes that $\epsilon$-\proportional\ is \textsf{PPAD}-hard for $\epsilon \leq \frac{3 - \sqrt 5}{16} \approx 0.048$. The polynomial-time reduction is based on simulating the function of gates in a \pure\ instance with weakly approximate clearing vectors of specific financial subnetworks. Towards this, we provide a mapping of the $[0,1]$ space to the values $\{0,1,\perp \}$, which depends on a parameter $\delta\in (0,1/2)$. We proceed by simulating each gate $g$ with a financial network denoted as $\mathcal{F}_g$, in the sense that an $\epsilon$-\emph{CRRV} of $\mathcal{F}_g$ can be mapped back to an assignment x of values for variables in $g$ using the aforementioned mapping. Finally we choose the mapping that maximises the inapproximability level. We generalise our result to a wider set of payment schemes.
\subsection{ The PURE-CIRCUIT problem }
 An instance $I = (V,G)$ of \pure\ consists of a set of gates $G$ and variables $V$. The gates are represented by tuples $g = (T,u,w)$ or $g = (T,u,v,w)$, where $T$ represents the gate type and $u,v,w \in V$ are either input or output variables, depending on $T$. Our analysis is based on the following gates:
 \begin{itemize}[topsep=0pt, noitemsep]

 \item NOT-gate: $(\text{NOT},u,w)$, where  $u$ is the input variable and $w$ is the output variable.

 \item OR-gate: $(\text{OR},u,v,w)$, where $u$ and $v$ are the input variables and $w$ is the output variable.
 \item PURIFY-gate: $(\text{PURIFY},u,v,w)$, where $u$ is the input and $v,w$ are the output variables.
 \end{itemize}
 
A graphical illustration is presented in Appendix \ref{purecircuitinstance}. Each variable appearing in a gate $g$ is assigned a value from the set $\{0,1,\perp\}$, where the value $\perp$ is termed ``garbage''. We denote an assignment of values to variables by the function $\text{x}: V \mapsto \{0,1,\perp\}$. We are interested in assignments that \emph{satisfy} the above gates.
 \begin{itemize}[topsep=0pt, noitemsep]

 \item An assignment x satisfies gate $(\text{NOT},u,w)$ iff:
 \begin{enumerate}[topsep=0pt, noitemsep]
 \item $\text{x}[u] = 0 \rightarrow \text{x}[w] = 1$
 \item $\text{x}[u] = 1\rightarrow \text{x}[w] =0$
 \item $\text{x}[u] =\ \perp\ \rightarrow \text{x}[w] \in \{0,1,\perp\}$
 \end{enumerate}

 \item  An assignment x satisfies gate $(\text{OR},u,v,w)$ iff: 
 \begin{enumerate}[topsep=0pt, noitemsep]
 \item $\text{x}[u] = \text{x}[v] = 0 \rightarrow \text{x}[w] = 0$.
 \item $\text{x}[u] = 1 \text{ and } \text{x}[v] \in \{0,1,\perp\} \rightarrow \text{x}[w] = 1$
  \item $\text{x}[u] \in \{0,1,\perp\}  \text{ and } \text{x}[v] = 1  \rightarrow \text{x}[w] = 1$
  \item Else $\text{x}[w] = \{0,1,\perp\}$
 \end{enumerate}

 \item   An assignment x satisfies gate $(\text{PURIFY},u,v,w)$ iff:
 \begin{enumerate}[topsep=0pt, noitemsep]
 \item $\text{x}[u] = 0 \rightarrow \text{x}[v] = \text{x}[w] = 0$.
 \item  $\text{x}[u] = 1 \rightarrow \text{x}[v] = \text{x}[w] = 1$.
 \item $\text{x}[u] =\ \perp\ \rightarrow \text{x}[v]  \in \{0,1\} \text{ or } \text{x}[w] \in \{0,1\}$.
 \end{enumerate}
 \end{itemize}
 
Below we define the version of \pure\  and the main theorem we use in this paper.

\begin{definition}[\pure]\label{defpure}
An instance $I = (V,G)$ of \pure\ consists of a set of variables $V$, gates $G$ of the form $g = (T,u,v,w)$ or $g = (T,u,w)$, where $T \in \{\emph{NOT},\emph{OR},\emph{PURIFY}\}$ and no two gates share the same output variable. A solution is an assignment $\emph{x}:V\mapsto \{0,1,\perp\}$ that satisfies all gates in $G$.
\end{definition}

\begin{theorem}[\cite{DBLP:conf/focs/DeligkasFHM22}, Corollary 2.2]
\pure\ is \textsf{PPAD}-complete.
\end{theorem}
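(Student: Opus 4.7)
The statement is a cited result from prior work, so the task is to outline a natural proof strategy that establishes both directions of completeness. The plan is to show (i) \textsf{PPAD}-membership via reduction to a problem known to lie in \textsf{PPAD}, and (ii) \textsf{PPAD}-hardness via reduction from a problem known to be \textsf{PPAD}-hard, namely $\epsilon$-\gcircuit\ of Rubinstein \cite{rubinstein2015inapproximability}.

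For membership I would reduce \pure\ to $\epsilon$-\gcircuit\ for a suitably chosen constant $\epsilon$. The encoding maps each three-valued variable $v \in \{0,1,\perp\}$ to a real variable $r_v \in [0,1]$ through three disjoint sub-intervals: $[0,\tfrac{1}{3}-\epsilon]$ coding $0$, $[\tfrac{2}{3}+\epsilon,1]$ coding $1$, and the buffer region $(\tfrac{1}{3}-\epsilon,\tfrac{2}{3}+\epsilon)$ coding $\perp$. The NOT gate is realised by an affine inversion $r_w \approx 1 - r_u$, and the OR gate by a thresholded maximum combined with a clipping gadget. The PURIFY gate requires the most care: one builds a gadget that, whenever $r_u$ lies in the middle interval, forces at least one of $r_v,r_w$ out of the middle interval into either the $0$- or $1$-coding region; this can be achieved with a combination of amplification gates that push apart the two outputs whenever the input is ambiguous. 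Since $\epsilon$-\gcircuit\ lies in \textsf{PPAD}, membership of \pure\ follows.

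For hardness I would reduce from $\epsilon$-\gcircuit. Each real-valued variable of the source instance is represented by a bundle of Boolean \pure\ variables forming a bit-encoding; every arithmetic gate is then replaced by a combinatorial gadget composed of NOT, OR, and PURIFY gates that simulates the required approximate computation on the encoded bits. The PURIFY gate is the linchpin of the construction, because it is the only mechanism allowing intermediate ``garbage'' values to be regenerated into reliable pure bits that can be propagated deeper into the simulated circuit; without it, errors would compound across layers and preclude any correct simulation.

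The main obstacle will be the design and analysis of the PURIFY-based gadgets. The gate offers only a weak promise -- \emph{at least one} of its two outputs is pure whenever the input is garbage -- so the simulation must be robust against adversarial placement of garbage values across a potentially long chain of PURIFY gates. This typically forces a parallel-replication strategy together with a non-compositional case analysis that tracks, along every path through the simulated circuit, where pure signals are guaranteed to survive. Once this is handled, polynomial size of the reduction and correctness of the simulated semantics should follow by induction over the structure of the source circuit.
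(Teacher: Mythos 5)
This statement is imported verbatim from \cite{DBLP:conf/focs/DeligkasFHM22} (Corollary 2.2); the present paper offers no proof, so the comparison must be with the cited proof. Your membership half is fine in spirit: encoding $\{0,1,\perp\}$ as three disjoint sub-intervals of $[0,1]$ and realising NOT/OR/PURIFY by affine, threshold and two-offset-threshold gadgets is the standard argument (it closely mirrors the financial gadgets of this very paper), and any polynomial-time reduction to a problem in \textsf{PPAD}, be it $\epsilon$-\gcircuit\ or approximate Brouwer, suffices for membership.

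The hardness half, however, contains a genuine gap, and it is also not the route taken in the cited work, which proves \textsf{PPAD}-hardness of \pure\ by a direct reduction from End-of-Line, deliberately avoiding real-valued encodings and Rubinstein's machinery; indeed the known implication between the two problems runs in the opposite direction (\pure\ hardness is used to derive explicit constants for $\epsilon$-\gcircuit, as recounted in Appendix \ref{apx:C}). Concretely: (i) in a satisfying assignment of your constructed \pure\ instance, entire bit-bundles may be set to $\perp$; a Boolean sub-circuit whose inputs are garbage imposes no constraint on its outputs, and the PURIFY gate only guarantees that one of its two outputs is pure with an \emph{arbitrary} value --- purity is not correctness --- so the decoded real values can be arbitrary and the corresponding generalised-circuit constraints (addition, multiplication by a constant, comparison) need not hold even approximately. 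The "parallel-replication strategy" and "non-compositional case analysis" you invoke are exactly the missing content: no gadget is given that forces any satisfying \pure\ assignment to decode to an $\epsilon$-approximate \gcircuit\ solution, and that is the entire difficulty. (ii) The closing appeal to "induction over the structure of the source circuit" is unavailable, because generalised circuits are constraint systems with feedback --- cycles are precisely what lets them encode fixed points --- so there is no DAG structure to induct on. (iii) The brittle comparison gate of \gcircuit\ is unconstrained when its two arguments are $\epsilon$-close, and the solution-mapping in your reduction must tolerate this in both directions; the sketch does not address it. As written, the hardness direction is a plan rather than a proof; the direct End-of-Line reduction of \cite{DBLP:conf/focs/DeligkasFHM22} is the established way to obtain the result.
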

\subsection{Reducing \pure\ to \texorpdfstring{$\epsilon$}{Epsilon}-\proportional\  } 
  \smallskip \noindent \textbf{The mapping $m_{\delta}$.} 
  We construct a mapping of $[0,1]$ to  $\{0,1, \perp\}$.
  The domain of this mapping reflects potential recovery rates of banks in a financial network, and these are mapped to values for variables of \pure\ instances. The mapping will be used to translate recovery rate vectors of a financial network to assignments x of a corresponding \pure\ instance. Fix $\delta$ to be any value in $(0,\frac{1}{2})$, and consider the following concrete mapping $m_{\delta}$:
 \begin{itemize}[topsep=0pt, noitemsep]
       
        \item If $r \in 
         \Bigl[0,\frac{1}{2}- \delta\Bigr]$, then $m_\delta(r) = 0$;

        \item If $r \in \Bigl(\frac{1}{2}-\delta, \frac{1}{2} + \delta\Bigr)$, then $m_\delta(r) = \perp$;
       
        \item If $r \in \Bigl[\frac{1}{2} + \delta ,1\Bigr]$, then $m_{\delta}(r) = 1$.
    \end{itemize}
    \vspace{3mm}
We will use $m_{\delta}(r)$ throughout the remainder of the section. A concrete choice of $\delta$ will be defined later, as it will follow from analysis what the optimal choice of $\delta$ is (in the sense of achieving the strongest inapproximability result).

\smallskip \noindent \textbf{A first-order language\footnote{In defining the language we followed the notation in \cite{DBLP:series/mcs/Fitting90}.}.}%
Let $L(\textbf{R},\textbf{F},\textbf{C})$ be a \emph{first-order language} where $\textbf{R} = \{\preceq\}$ is a \emph{relation symbol set}, $\textbf{F} = \{\pm,+,-,\cdot,/\}$ is a \emph{function symbol set} and $\textbf{C} = \emptyset$ a \emph{constant symbol set}. The \emph{model} $M =  \langle \textbf{D},\textbf{I}\rangle$ consists of the \emph{domain} $\textbf{D} = \{[x,y]\lvert x, y\in\mathbb{R}\text{ and }x\leq y\}\cup\{\emptyset\}$. We denote ``\textbf{x}'' variables that represent intervals and ``$x$'', variables that represent numbers\footnote{Strictly speaking  $\textbf{x}^\textbf{A}$ is the interval represented by the symbol $\textbf{x}$ under an assignment \textbf{A} of the symbol to the domain \textbf{D}. We rather use $\textbf{x}$ to also refer to the interval itself in order to prevent notational overflow.}. The \emph{interpretation} $\textbf{I}$ associates symbol $\preceq$ to the relation: 
\begin{equation}\label{relation}
\preceq^{\textbf{I}} = \Bigl\{(\textbf{x},\textbf{y})\mid \inf\{\textbf{x}\}\leq \inf\{\textbf{y}\} \text{ and } \sup\{\textbf{x}\}\leq \sup\{\textbf{y}\}\Bigr\} 
\end{equation}
Function symbols operating on two variables that represent numbers are interpreted to the corresponding mathematical operation within the real number field. 
\begin{itemize}

\item If $x,\epsilon$ represent real numbers with $\epsilon>0$ then 
\begin{equation}\label{pm operation on numbers}
x\pm^{\textbf{I}} \epsilon =
\begin{cases}

  \Bigl[x-\epsilon,x+\epsilon\Bigr]\cap \Bigl[0,1\Bigr], & \text{if } x\in\Bigl[0,1\Bigr]\\

\Bigl[1-\epsilon,1\Bigr]\cap\Bigl[0,1\Bigr], & \text{if } x > 1\\

  \vspace{2mm}
  \Bigl[0,\epsilon\Bigr]\cap\Bigl[0,1\Bigr],& \text{ if } x < 0\\
\end{cases}
\end{equation}
\item If variable $\textbf{x}$ represents an interval within $[0,1]$ and $\epsilon > 0$ represents a real number then

\begin{equation}\label{pm operation on interval}
\textbf{x}\pm^{\textbf{I}} \epsilon = \Bigl[\inf\{\textbf{x}\}-\epsilon,\sup\{\textbf{x}\}+\epsilon\Bigr]\cap\Bigl[0,1\Bigr]
\end{equation}

\item If variable $\textbf{x}$ represents an interval within $[0,1]$ then 
\begin{equation}\label{interval substraction}
1-^\textbf{I}\textbf{x} = \Bigl[1-\sup\{\textbf{x}\},1-\inf\{\textbf{x}\} \Bigr] 
\end{equation}
\item If $\textbf{x} = x\pm\epsilon_1$ and $\textbf{y} = y\pm\epsilon_2$ represent intervals  within the range $[0,1]$ then
\begin{equation}\label{interval addition}
\textbf{x}+^\textbf{I}\textbf{y} =
\begin{cases}
\vspace{1mm}
     \Bigl[ \inf\{\textbf{x}\}+\inf\{\textbf{y}\},\sup\{\textbf{x}\}+\sup\{\textbf{y}\}\Bigr]\cap\Bigl[0,1\Bigr],&\text{ if } x+y\in[0,1]\\
     \vspace{1mm}
     \Bigl[1-(\epsilon_1+\epsilon_2),1)\Bigr]\cap\Bigl[0,1\Bigr],&\text{ if } x+y>1\\
     \vspace{1mm}
      \Bigl[0,\epsilon_1+\epsilon_2\Bigr]\cap\Bigl[0,1\Bigr],&\text{ if } x+y<0
     \end{cases}
\end{equation}

\item If variable $l\geq 1$ and $\textbf{x} = x\pm\epsilon$ represents an interval within $[0,1]$, then 
\begin{equation}\label{number interval mult}
l\cdot^{\textbf{I}}\textbf{x} =
\begin{cases}

\vspace{1mm}
 \Bigr[l\cdot \inf\{\textbf{x}\}, l\cdot \sup\{\textbf{x}\}\Bigr]\cap\Bigl[0,1\Bigr], & \text{ if }l\cdot x \in [0,1] \\
 
 \vspace{1mm}
 \Bigl[1-l\epsilon,1 \Bigr]\cap\Bigl[0,1\Bigr],& \text{ if } l\cdot x > 1\\

 \vspace{1mm}
 \Bigl[0,l\epsilon \Bigr]\cap\Bigl[0,1\Bigr],& \text{ if } l\cdot x <0
 \end{cases}
\end{equation}
\end{itemize}

\begin{remark} Operation $\pm^\textbf{I}$ creates a closed $\epsilon$-ball around $x$, restricted to values that fall within $[0,1]$, with the quirk that the ball is centered to 1 if $x>1$ and 0 if $x<0$. The result of $\pm^{\textbf{I}}$ on two numbers is always a range within $[0,1]$. All functions map intervals within $[0,1]$ to intervals within $[0,1]$.\footnote{The other cases are mapped to $\emptyset$.} 
\end{remark}

A \emph{valid substitution} $\sigma$, 
is a mapping of a term $\tau$ to a term $\sigma(\tau)$, that satisfies the property $(\sigma(\tau))^\textbf{I,A} = \tau^\textbf{I,A}$, where \textbf{A} is an assignment of variables in $\tau$ to values in \textbf{D}\footnote{We allow a certain degree of informality, without compromising accuracy, in using assignments \textbf{A}. A precise description of them falls outside the scope of the paper.}. In essence, a substitution between two terms signifies that, under the interpretation \textbf{I} and any assignment \textbf{A}, both terms refer to the same value within the domain \textbf{D}. We use the following lemmas in our proof. (Appendix \ref{apx:C}).

\begin{lemma}\label{lemma1}The following are valid substitutions.
 \begin{enumerate}
  \label{sigma1-}\item[$\sigma_{1-}$:] If $\textbf{x} = x\pm\epsilon$ then $\sigma_{1-}(1-\textbf{x}) = (1-x) \pm \epsilon$.
  \label{sigmapm}\item[$\sigma_\pm$:]If $\textbf{x} = x\pm\epsilon_1$ and $\epsilon_2>0$, then $\sigma_{\pm}(\textbf{x}\pm \epsilon_2) =  x\pm(\epsilon_1+\epsilon_2)$.
  \label{sigma+}\item[$\sigma_{+}$:] If $\textbf{x} = (x\pm \epsilon_1)$ and $\textbf{y} = (y \pm \epsilon_2)$, then $\sigma_{+}(\textbf{x} + \textbf{y}) = (x+y)\pm (\epsilon_1+\epsilon_2
   )$. 
  \label{sigma*} \item[$\sigma_{*}$:] If  $\textbf{x} = x\pm\epsilon$ and $l\geq1$, then   $\sigma_{*}(l\cdot \textbf{x}) = l\cdot x \pm l\cdot \epsilon$.
   \end{enumerate}
\end{lemma}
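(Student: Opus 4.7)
The plan is to verify each of the four substitutions by direct unfolding of the definitions \eqref{pm operation on numbers}--\eqref{number interval mult} and then doing a case analysis on where the ``centre'' value ($x$, $x+y$, or $l\cdot x$) sits relative to $[0,1]$. For each claim, I would fix an arbitrary assignment $\textbf{A}$ and show that both the LHS and the RHS terms, when evaluated under $\textbf{I}$, yield the same subset of $[0,1]$; by the definition of a valid substitution given just before the lemma, this suffices.

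For $\sigma_{1-}$, I would apply \eqref{interval substraction} to obtain $1-\textbf{x}=[\,1-\sup\{\textbf{x}\},\,1-\inf\{\textbf{x}\}\,]$ and split into the cases $x\in[0,1]$, $x>1$, $x<0$; in each case the endpoints of $1-\textbf{x}$ match those of $(1-x)\pm\epsilon$ computed via \eqref{pm operation on numbers}, using that $1-(x>1)<0$ and $1-(x<0)>1$. For $\sigma_{\pm}$, I would unfold $\textbf{x}\pm\epsilon_2$ with \eqref{pm operation on interval}, yielding $[\inf\{\textbf{x}\}-\epsilon_2,\sup\{\textbf{x}\}+\epsilon_2]\cap[0,1]$, and then substitute the endpoints of $\textbf{x}=x\pm\epsilon_1$ from \eqref{pm operation on numbers}; a short calculation in each of the three cases on $x$ shows the result agrees with $x\pm(\epsilon_1+\epsilon_2)$. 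For $\sigma_{+}$, I would invoke \eqref{interval addition} and break into the three cases $x+y\in[0,1]$, $x+y>1$, $x+y<0$, checking in each that the endpoints of $\textbf{x}+\textbf{y}$ coincide with the endpoints produced by \eqref{pm operation on numbers} for $(x+y)\pm(\epsilon_1+\epsilon_2)$; the cases where $x+y$ and $x,y$ individually straddle the boundary require a careful use of $\epsilon_1+\epsilon_2$ as combined radius. For $\sigma_{*}$, I would apply \eqref{number interval mult} on $l\cdot\textbf{x}$ and split on the sign/size of $l\cdot x$, using $l\ge 1$ to control the expansion of the radius from $\epsilon$ to $l\cdot\epsilon$.

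The main obstacle will be the boundary bookkeeping: because each operation intersects with $[0,1]$ and ``re-centres'' to $0$ or $1$ when the nominal value is out of range, one must verify that the LHS and RHS produce the same truncated interval in each sub-case rather than the naive symbolic equality. In particular, in $\sigma_{+}$ and $\sigma_{*}$ the sub-case where the centre lies inside $[0,1]$ but the inflated interval crosses $0$ or $1$ needs one to check that the intersection $[\cdot,\cdot]\cap[0,1]$ produced on either side is the same interval; this is where the definitions are tailored so that the lemma holds, but it is the step most prone to slip. Once these cases are verified, the four substitutions follow, and subsequent arguments in the reduction may safely manipulate ``$x\pm\epsilon$'' terms using ordinary arithmetic on the centres and additive propagation of the radii.
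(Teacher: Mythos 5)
Your plan matches the paper's own proof: the paper also verifies each substitution by showing $(\sigma(\tau))^{\textbf{I,A}}=\tau^{\textbf{I,A}}$ through direct unfolding of the operations \eqref{pm operation on numbers}--\eqref{number interval mult} and a three-way case split on the position of the centre ($x$, $x+y$, or $l\cdot x$) relative to $[0,1]$, comparing $\sup$ and $\inf$ of the resulting truncated intervals in each case. The boundary bookkeeping you flag is handled there exactly as you anticipate, via the $\min/\max$ expressions built into the interpreted operations, so your approach is correct and essentially identical.
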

 
\begin{lemma}\label{lemma2}
Assume variables $x,y$ represent numbers in \textbf{D} with $x\geq y$ and let $\textbf{x} = x\pm\epsilon$ and $\textbf{y} = y\pm\epsilon$, for $\epsilon>0$. For all assignments \textbf{A} that satisfy these assumptions we say $\textbf{y}\preceq \textbf{x}$ is valid  if $( \textbf{y}^{\textbf{I,A}}, \textbf{x}^{\textbf{I,A}} )\in \preceq^\textbf{I}$.
\end{lemma}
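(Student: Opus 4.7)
The plan is to unfold the relation $\preceq^{\textbf{I}}$ from equation \eqref{relation}, which requires establishing two scalar inequalities: $\inf\{\textbf{y}^{\textbf{I,A}}\}\le\inf\{\textbf{x}^{\textbf{I,A}}\}$ and $\sup\{\textbf{y}^{\textbf{I,A}}\}\le\sup\{\textbf{x}^{\textbf{I,A}}\}$. The argument then reduces to tracking how the scalar hypothesis $x\ge y$ propagates through the piecewise definition \eqref{pm operation on numbers} of $\pm^{\textbf{I}}$, together with the clipping to $[0,1]$ that is built into that definition.

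First, I would dispatch the generic case where both $x,y\in[0,1]$. Here $\textbf{x}=[\max\{0,x-\epsilon\},\min\{1,x+\epsilon\}]$ and $\textbf{y}=[\max\{0,y-\epsilon\},\min\{1,y+\epsilon\}]$. Since $x\ge y$ implies $x-\epsilon\ge y-\epsilon$ and $x+\epsilon\ge y+\epsilon$, and both $\max\{0,\cdot\}$ and $\min\{1,\cdot\}$ are monotone non-decreasing, both required inequalities follow immediately.

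Next I would handle the out-of-range boundary cases. If $x>1$, then $\textbf{x}=[\max\{0,1-\epsilon\},1]$ is the rightmost admissible $\epsilon$-ball, so $\sup\{\textbf{y}\}\le 1=\sup\{\textbf{x}\}$ always holds, while the inf inequality follows by a short subcase analysis on whether $y\in[0,1]$ (use $y\le 1$), $y>1$ (identical intervals), or $y<0$ (then $\inf\{\textbf{y}\}=0$). Symmetrically, if $y<0$ then $\textbf{y}=[0,\min\{1,\epsilon\}]$ is the leftmost admissible ball, $\inf\{\textbf{y}\}=0\le\inf\{\textbf{x}\}$ is automatic, and the sup inequality follows from $x\ge 0$, with the remaining possibility $x<0$ collapsed by $x\ge y$ into $\textbf{x}=\textbf{y}$, trivially settling the case.

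The proof is essentially mechanical; the only genuine pitfall is that the clipping to $[0,1]$ could, in principle, invert one of the inequalities, but this is ruled out uniformly by monotonicity of $\max\{0,\cdot\}$ and $\min\{1,\cdot\}$. Together with Lemma \ref{lemma1}, this lemma will be invoked repeatedly in the reduction from \pure\ to bound the recovery rates at the outputs of the financial gates in terms of those of their inputs.
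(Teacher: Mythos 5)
Your proposal is correct and follows essentially the same route as the paper's own proof: unfold $\preceq^{\textbf{I}}$ into the two inequalities $\inf\{\textbf{y}\}\leq\inf\{\textbf{x}\}$ and $\sup\{\textbf{y}\}\leq\sup\{\textbf{x}\}$, then do a case analysis on the piecewise definition \eqref{pm operation on numbers} using monotonicity of the clipping operations $\max\{0,\cdot\}$ and $\min\{1,\cdot\}$. If anything, your split (both in $[0,1]$, then $x>1$, then $y<0$) treats the mixed case $x\in[0,1],\,y<0$ a bit more explicitly than the paper's case analysis, but the argument is the same.
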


\smallskip \noindent \textbf{Gate simulation.} For each gate $g \in \{\text{NOT},\text{OR},\text{PURIFY}\}$ we construct a financial network $\mathcal{F}_g$ that contains a set of banks that include a bank $b_u$ for each input or output variable $u$ of $g$. We refer to a bank $b_u$ as an \emph{input bank} iff $u$ is an input variable of $g$, and we refer to bank $b_u$ as an \emph{output bank} iff $u$ is an output variable of $g$. Two notable characteristics of all $\mathcal{F}_g$ networks are that all CDS debtor banks possess enough external assets so that their recovery rate is equal to 1 in any $\epsilon$-\emph{CRRV} and no bank has more than one outgoing arc.

We proceed by describing each of the financial networks and we analyse the recovery rate of the output banks under any $\epsilon$-\emph{CRRV}, given a value of the input bank. We denote $r_i$ the recovery rate of bank $i$ and \textbf{r}$_i$ the range of the recovery rate values under $\epsilon$-\emph{CRRV}.  

\begin{remark}
It is significant to realise that the error parameter $\epsilon > 0 $, defines a range of values within which the recovery rate values for banks fall under an $\epsilon$-CRRV. Definition \ref{epsilon-crrv} combined with the fact that all recovery rate values are restricted within $[0,1]$, implies that obtaining an $\epsilon$-CRRV for $\epsilon>\frac{1}{2}$ is achieved by setting $r_i = \frac{1}{2}$ for all banks $i$. Consequently we consider parameter values $\epsilon$ that fall within $(0,\frac{1}{2})$. 
\end{remark}

\smallskip \noindent \fbox{\textbf{NOT-gate.}}
We simulate the gate $(\text{NOT},u,w)$ with the financial network $\mathcal{F}_{\text{NOT}}$, defined in Figure \ref{fig:NOTGATE}. Here, the input bank is $b_u$ and the output bank is $b_w$. 

   \begin{figure}[htbp!]
    \centering
    \scalebox{1}{
\begin{tikzpicture}
[shorten >=1pt,node distance=2cm,initial text=]
\tikzstyle{every state}=[draw=black!50,very thick]
\tikzset{every state/.style={minimum size=0pt}}
\tikzstyle{accepting}=[accepting by arrow]

\node(1) {$b_u$};
\node(2)[right of=1]{1};
\draw[blue,->,very thick] (1)--node[midway,black,yshift=2mm]{1}(2);

\node(3)[below of=1,yshift=5mm]{$2$};
\node[teal,left of=3,xshift=1.5cm]{$\frac{2}{1+2\delta}$};
\node(4)[right of=3]{3};
\node(5)[right of=4]{4};
\draw[blue,->,very thick] (4)--node[midway,black,yshift=2mm]{1}(5);
\draw[orange,->,very thick,snake=snake] (3)--node[midway,black,yshift=4mm,xshift=3mm]{$\frac{2}{1+2\delta}$}(4);
\path[orange,->,draw,dashed,thick] (1) -- ($ (3) !.5! (4) $);

\node(6)[below of =3]{5};
\node(7)[right of=6]{6};
\draw[orange,->,very thick,snake=snake](6)--node[midway,black,yshift=4mm,xshift=-3mm]{$\frac{1+2\delta}{4\delta}$}(7);
\node[teal,left of=6,xshift=1.5cm]{$\frac{1+2\delta}{4\delta}$};
\node(8)[right of=7]{7};
\draw[blue,->,very thick] (7)--node[midway,black,yshift=2mm]{1}(8);
\path [orange,->,draw,dashed,thick] (4) -- ($ (6) !.5! (7) $);

\node(9)[below of =6]{8};
\node[teal,left of=9,xshift=1.7cm]{1};
\node(10)[right of=9]{$b_w$};
\draw[orange,->,very thick,snake=snake] (9)--node[midway,black,yshift=3mm]{1}(10);
\node(11)[right of =10]{9};
\draw[blue,->,very thick] (10)--node[midway,black,yshift=2mm]{1}(11);
\path [orange,->,draw,dashed,thick] (7) -- ($ (9) !.5! (10) $);
\end{tikzpicture}}
\caption{The financial network $\mathcal{F}_{\text{NOT}}$ that simulates a NOT-gate.}
    \label{fig:NOTGATE}
    \end{figure}
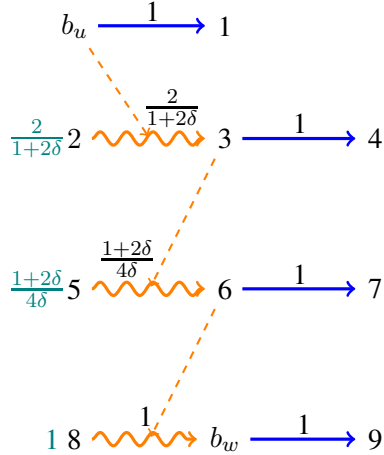
    
Assume that $r = (r_i)_{i\in N}$ is an $\epsilon$-\emph{CRRV} of $\mathcal{F}_{\text{NOT}}$. We will generate the atomic formulas of $L(\text{R},\text{F},\text{C})$ that, when interpreted through \textbf{I}, defines the range of recovery rate values for the banks of $\mathcal{F}_{\text{NOT}}$ under $r=(r_i)_{i\in N}$. Bank 3 receives a payment of $p_{2,3}(r) = \frac{2}{1+2\delta}\cdot[1-r_{b_u}]$ from bank 2 and holds a liability $l_{3,4}(r) = 1$ towards bank 4. The recovery rate values for bank 3 lie within the following range:
\begin{equation}
\text{range}(r_3) =  \Bigl[\min\Bigl(1,\frac{2}{1+2\delta}\cdot[1-r_{b_u}]\Bigr)-\epsilon,\min\Bigl(1,\frac{2}{1+2\delta}\cdot[1-r_{b_u}]\Bigr)+\epsilon\Bigr]\cap\Bigl[0,1\Bigr]
\end{equation}
Applying the proposed notation, the set of values mentioned above can be succinctly represented by the atomic formula $\textbf{r}_3 = \frac{2}{1+2\delta}\cdot[1-r_{b_u}]\pm \epsilon$. Notice that, as $r = (r_i)_{i\in N}$ is presumed to be an $\epsilon$-\emph{CRRV} of $\mathcal{F}_{\text{NOT}}$, the payment received by bank 6 from bank 5 is contingent on the range of $r_3$. Specifically, each value $\rho$ within $\text{range}(r_3)$, establishes a corresponding payment, $p_{5,6}(r) =\frac{1+2\delta}{4\delta}\cdot[1-\rho]$, from bank 5 to bank 6. Consequently, all possible recovery rate values for bank 6 must lie within the range:
\begin{equation}
 \text{range}(r_6) =\bigcup_{\rho\in \text{range}(r_3)} \Bigl[\min\Bigl(1,\frac{1+2\delta}{4\delta}\cdot [1-\rho]\Bigr)-\epsilon, \min\Bigl(1,\frac{1+2\delta}{4\delta}\cdot [1-\rho]\Bigr)+\epsilon\Bigr]\cap\Bigl[0,1\Bigr]
 \end{equation}
 which is represented by the atomic formula $\textbf{r}_{6}$ = $\frac{1+2\delta}{4\delta}\cdot[1-\textbf{r}_3]\pm\epsilon$. Similarly, bank $b_w$ under the $\epsilon$-\emph{CRRV}, receives a payment of $p_{8,b_{w}}(r) = 1-\rho$, where $\rho$ is a value in range$(r_6)$, thus we get
 \begin{equation}
 \text{range}(r_{b_w}) = \bigcup_{\rho\in \text{range}(r_6)}\Bigl[\min\Bigl(1, 1-\rho\Bigr)-\epsilon, \min\Bigl(1, 1-\rho\Bigr)+\epsilon\Bigr]\cap\Bigl[0,1\Bigr]
 \end{equation}
 which is generated by the atomic formula $\textbf{r}_{b_w} = [1-\textbf{r}_6] \pm \epsilon$. To establish the connection between the range of recovery rate values of $b_w$ w.r.t $r_{b_u}$, we apply a series of \emph{valid substitutions} starting  from the formula $\textbf{r}_{b_w} = [1-\textbf{r}_6] \pm \epsilon$. We presen the sequence of substitutions  below.
\begin{equation*}
\begin{split}
 \textbf{r}_{b_w} = [1-\textbf{r}_6] \pm \epsilon & =  \Bigl[1-\Bigl(\frac{1+2\delta}{4\delta}\cdot [1-\textbf{r}_3]\pm\epsilon\Bigr)\Bigr]\pm\epsilon 
  \\ & = \Bigl[1-\Bigl(\frac{1+2\delta}{4\delta}\cdot \underbrace{\Bigl[1-\frac{2}{1+2\delta}\cdot(1-r_{b_u})\pm\epsilon}_{\textbf{r}_3}\Bigr]\pm\epsilon\Bigr)\Bigr]\pm\epsilon\\
 & \xrightarrow{\sigma_{*}(\frac{1+2\delta}{4\delta}\cdot \textbf{r}_3)}\Bigl[1-\Bigl(\underbrace{\Bigl[\frac{1}{2\delta}\cdot r_{b_u}+\frac{-1+2\delta}{4\delta}\pm\frac{1+2\delta}{4\delta}\epsilon\Bigr]}_{\textbf{x}}\pm\epsilon\Bigr)\Bigr]\pm\epsilon\\
 & \xrightarrow{\sigma_{\pm}(\textbf{x}\pm\epsilon)}\Bigl[1-\Bigl(\underbrace{\frac{1}{2\delta}\cdot r_{b_u}+\frac{-1+2\delta}{4\delta}\pm\frac{1+6\delta}{4\delta}\epsilon}_{\textbf{y}}\Bigr)\Bigr]\pm\epsilon\\
 & \xrightarrow{\sigma_{1-}(1-\textbf{y})}\Bigl[\underbrace{-\frac{1}{2\delta}\cdot r_{b_u}+\frac{1+2\delta}{4\delta}\pm\frac{1+6\delta}{4\delta}\epsilon}_{\textbf{z}}\Bigr]\pm\epsilon\\
 & \xrightarrow{\sigma_{\pm}(\textbf{z}\pm\epsilon)}-\frac{1}{2\delta}\cdot r_{b_u}+\frac{1+2\delta}{4\delta}\pm\frac{1+10\delta}{4\delta}\epsilon
\end{split}
\end{equation*}
Eventually the range for the recovery rate values of the output bank $b_w$ in any $\epsilon$-\emph{CRRV} is generated from an assignment \textbf{A} where $r_{b_u}\in[0,1]\text{ and }\delta,\epsilon\in (0,\frac{1}{2})$ and by interpreting the following formula w.r.t. \textbf{I}: 
 \begin{equation}\label{eq:NOT}
    \textbf{r}_{b_w} = -\frac{1}{2\delta}\cdot r_{b_u}+ \frac{1+2\delta}{4\delta} \pm \frac{1+10\delta}{4\delta}\epsilon.
\end{equation}

We proceed by computing the interval within which $r_{b_w}$ belongs in any $\epsilon$-\emph{CRRV}, depending on the interval where the recovery rate of the input bank $r_{b_u}$ belongs according to the mapping $m_\delta$. In all assignments \textbf{A} assumed in the proof of the claims, both $\delta$ and $\epsilon$ belong in $(0,\frac{1}{2})$.  
    \begin{Claim}\label{NOT-CLAIM}
    Consider the financial network $\mathcal{F}_{\text{NOT}}$ of Figure \ref{fig:NOTGATE} and let $r = (r_i)_{i\in N}$ be an $\epsilon$-CRRV:
    \begin{enumerate}[topsep=0pt, noitemsep]
    \item If $r_{b_u} \in \Bigl[0,\frac{1}{2}-\delta\Bigr]$ then $r_{b_w} \in \Bigl[1-\frac{1+10\delta}{4\delta}\epsilon,1\Bigr]$.

    \vspace{1mm}
    \item If $r_{b_u} \in \Bigl[\frac{1}{2}+\delta,1\Bigr]$ then $r_{b_w}\in \Bigl[0,\frac{1+10\delta}{4\delta}\epsilon\Bigr]$.
    \end{enumerate}
     \end{Claim}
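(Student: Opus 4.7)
The plan is to rely on the formula \eqref{eq:NOT} already derived in the excerpt,
\[
\textbf{r}_{b_w} = -\frac{1}{2\delta}\cdot r_{b_u}+ \frac{1+2\delta}{4\delta} \pm \frac{1+10\delta}{4\delta}\epsilon,
\]
which, under the interpretation $\textbf{I}$, captures the entire set of permissible recovery-rate values for $b_w$ in any $\epsilon$-CRRV once $r_{b_u}$ is fixed. Writing $c(r_{b_u}) := -\frac{1}{2\delta}\cdot r_{b_u}+ \frac{1+2\delta}{4\delta}$ for the centre, the task reduces to determining which branch of the piecewise definition of $\pm^{\textbf{I}}$ on real numbers applies for each of the two input ranges, and then reading off the resulting interval in $[0,1]$.

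For case (1), I would compute $c$ at the endpoints of $\bigl[0,\tfrac{1}{2}-\delta\bigr]$. Straightforward arithmetic gives $c\bigl(\tfrac{1}{2}-\delta\bigr)=1$ and $c(0)=\tfrac{1+2\delta}{4\delta}$, which strictly exceeds $1$ because $\delta<\tfrac{1}{2}$. Since $c$ is affine and strictly decreasing in $r_{b_u}$, it follows that $c(r_{b_u})\geq 1$ on the whole interval. Hence the ``$x>1$'' clause of $\pm^{\textbf{I}}$ is the one that fires, and it produces exactly the interval $\bigl[1-\tfrac{1+10\delta}{4\delta}\epsilon,\,1\bigr]$ asserted in item (1).

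Case (2) is symmetric. Evaluating at the endpoints of $\bigl[\tfrac{1}{2}+\delta,1\bigr]$ yields $c\bigl(\tfrac{1}{2}+\delta\bigr)=0$ and $c(1)=\tfrac{-1+2\delta}{4\delta}<0$ (again using $\delta<\tfrac{1}{2}$), so by monotonicity $c(r_{b_u})\leq 0$ throughout. The ``$x<0$'' clause of $\pm^{\textbf{I}}$ then returns $\bigl[0,\,\tfrac{1+10\delta}{4\delta}\epsilon\bigr]$, matching item (2).

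I do not expect a genuine analytic obstacle: the heavy lifting, namely the chain of valid substitutions $\sigma_{1-},\sigma_{\pm},\sigma_{*}$ from Lemma \ref{lemma1} that condenses the compositional formula into \eqref{eq:NOT}, has already been carried out above the claim. The only delicate point is correctly identifying, for each input range, which branch of the piecewise interpretation of $\pm^{\textbf{I}}$ is active; this is settled by the short boundary computations above together with the affine (hence monotone) dependence of $c$ on $r_{b_u}$.
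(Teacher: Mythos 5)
Your proposal is correct and follows essentially the paper's own route: both start from \eqref{eq:NOT} and reduce the claim to checking that the centre $-\frac{1}{2\delta}r_{b_u}+\frac{1+2\delta}{4\delta}$ is at least $1$ on $\bigl[0,\tfrac12-\delta\bigr]$ and at most $0$ on $\bigl[\tfrac12+\delta,1\bigr]$; the paper packages this via Lemma \ref{lemma2} and the relation $\preceq$, while you read off the corresponding branch of $\pm^{\textbf{I}}$ directly, which is the same arithmetic. Only cosmetic caveat: at the boundary points where the centre equals exactly $1$ (resp.\ $0$) it is the $x\in[0,1]$ clause of \eqref{pm operation on numbers} that applies rather than the $x>1$ (resp.\ $x<0$) clause, but after intersecting with $[0,1]$ the resulting interval is the same, so the conclusion is unaffected.
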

      \begin{proof}
    We use (\ref{eq:NOT}), that expresses range($r_{b_w}$) in any $\epsilon$-\emph{CRRV} in terms of $r_{b_u}$. 
    \begin{enumerate}[topsep=0pt, noitemsep]
\item In any assignment \textbf{A} of variables in \ref{eq:NOT} to values  in \textbf{D} where $r_{b_u} \leq \frac{1}{2}-\delta$, from Lemma \ref{lemma2} and after calculations, it holds that $ \Bigl(1\pm \frac{1+10\delta}{4\delta}\epsilon\Bigr)^\textbf{I,A} \preceq^\textbf{I} \Bigl(-\frac{1}{2\delta}\cdot r_{b_u}+ \frac{1+2\delta}{4\delta}\pm \frac{1+10\delta}{4\delta}\epsilon\Bigr)^\textbf{I,A}$, namely $1\pm\frac{1+10\delta}{4\delta}\epsilon\preceq\textbf{r}_w$ is \emph{valid}. Interpreting the last relation according to \textbf{I}, from \ref{pm operation on numbers} and \ref{relation} we conclude that under any $\epsilon$-\emph{CRRV}, if $r_{b_u}\in\Bigl[0,\frac{1}{2}-\delta\Bigr]$ then $r_{b_w}\in \Bigl[1-\frac{1+10\delta}{4\delta}\epsilon,1\Bigr]$. 

\vspace{1mm}
\item In any assignment \textbf{A} of variables in \ref{eq:NOT} to values  in \textbf{D} where $r_{b_u}\geq \frac{1}{2}+\delta$, from Lemma \ref{lemma2} and after calculations, it holds that $\Bigl(- \frac{1}{2\delta}\cdot r_{b_u} + \frac{1+2\delta}{4\delta} \pm \frac{1+10\delta}{4\delta}\epsilon\Bigr)^\textbf{I,A}\preceq^\textbf{I} \Bigl(0\pm\frac{1+10\delta}{4\delta}\epsilon\Bigr)^{\textbf{I,A}}$, namely $\textbf{r}_{b_w}\preceq 0\pm\frac{1+10\delta}{4\delta}\epsilon$ is \emph{valid}. Applying \textbf{I} and from  \ref{pm operation on numbers} and \ref{relation}, under any $\epsilon$-\emph{CRRV} if $r_{b_u}\in\Bigl[\frac{1}{2}+\delta,1\Bigr]$ then $r_{b_w}\in \Bigl[0,\frac{1+10\delta}{4\delta}\epsilon\Bigr]$
\qedhere
\end{enumerate}
   \end{proof}

\noindent \fbox{\textbf{OR-gate.}} We simulate the gate $(\text{OR},u,v,w)$ with the financial network $\mathcal{F}_{\text{OR}}$ of Figure \ref{fig:ORGATE}. The network consists of two input banks $b_u$ and $b_v$ corresponding to the input variables $u,v$ respectively, and an output bank $b_w$ corresponding to the output variable $w$. 

\begin{figure}[htbp!]
    \centering
    \scalebox{1}{
\begin{tikzpicture}
[shorten >=1pt,node distance=2cm,initial text=]
\tikzstyle{every state}=[draw=black!50,very thick]
\tikzset{every state/.style={minimum size=0pt}}
\tikzstyle{accepting}=[accepting by arrow]

\node(1) {$b_u$};
\node(2)[right of=1]{1};
\draw[blue,->,very thick] (1)--node[midway,black,yshift=2mm]{1}(2);
\node(3)[below of=1,yshift=5mm]{$2$};
\node[teal,left of=3,xshift=1.5cm]{$\frac{2}{1+2\delta}$};
\node(4)[right of=3]{$3$};
\node(5)[right of=4]{4};
\draw[blue,->,very thick] (4)--node[midway,black,yshift=2mm]{1}(5);
\draw[orange,->,very thick,snake=snake] (3)--node[midway,black,yshift=4mm,xshift=3mm]{$\frac{2}{1+2\delta}$}(4);
\path [orange,->,draw,dashed,thick] (1) -- ($ (3) !.5! (4) $);

\node(6)[below of =3]{5};
\node(7)[right of=6]{6};
\draw[orange,->,very thick,snake=snake] (6)--node[midway,black,yshift=4mm,xshift=-3mm]{$\frac{1+2\delta}{4\delta}$}(7);
\node[teal,left of=6,xshift=1.5cm]{$\frac{1+2\delta}{4\delta}$};
\path [orange,->,draw,dashed,thick] (4) -- ($ (6) !.5! (7) $);

\node[right of= 2,xshift=1.5cm](11) {$b_v$};
\node(12)[right of=11]{7};
\draw[blue,->,very thick] (11)--node[midway,black,yshift=2mm]{1}(12);

\node(13)[below of=11,yshift=5mm]{8};
\node[teal,left of=13,xshift=1.5cm]{$\frac{2}{1+2\delta}$};
\node(14)[right of=13]{9};
\node(15)[right of=14]{10};
\draw[blue,->,very thick] (14)--node[midway,black,yshift=2mm]{1}(15);
\draw[orange,->,very thick,snake=snake] (13)--node[midway,black,yshift=4mm,xshift=3mm]{$\frac{2}{1+2\delta}$}(14);
\path [orange,->,draw,dashed,thick] (11) -- ($ (13) !.5! (14) $);

\node(16)[below of =13]{12};
\node(17)[right of=16]{11};
\draw[orange,->,very thick,snake=snake] (17)--node[midway,black,yshift=4mm,xshift=-3mm]{$\frac{1+2\delta}{4\delta}$}(16);
\node[teal,right of=17,xshift=-1.5cm]{$\frac{1+2\delta}{4\delta}$};
\path [orange,->,draw,dashed,thick] (14) -- ($ (16) !.5! (17) $);

\node(20)[left of=16,xshift=2mm]{$b_w$};
\draw[blue,->,very thick](7)--node[midway,black,yshift=2mm]{1}(20);
\draw[blue,->,very thick](16)--node[midway,black,yshift=2mm]{1}(20);

\node(21)[above of=20,yshift=-7mm]{$13$};
\draw[blue,->,very thick](20)--node[midway,black,xshift=-2mm]{1}(21);

\end{tikzpicture}}
\caption{The financial network $\mathcal{F}_{\text{OR}}$ that simulates an OR-gate}
    \label{fig:ORGATE}
\end{figure}
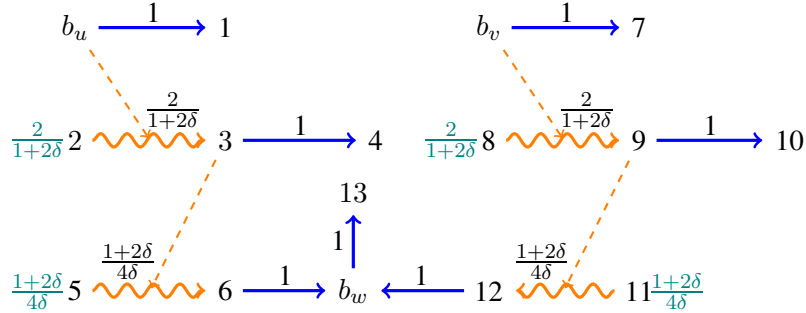

Assume that $r = (r_i)_{i\in N}$ is an $\epsilon$-\emph{CRRV} of $\mathcal{F}_{\text{OR}}$ and let the recovery rates of the input banks $b_{u},b_{v}$ to be $r_{b_u}\text{ and }r_{b_v}$ respectively. The range of recovery rate values for banks 3 and 9 in $\mathcal{F}_{\text{OR}}$ matches that of bank 3 within the network $\mathcal{F}_{\text{NOT}}$. Symmetrically range($r_6$) and range($r_{12}$) in $\mathcal{F}_{\text{OR}}$ matches range($r_6$) within the network $\mathcal{F}_{\text{NOT}}$. Both 6 and 12 submit to $b_w$ payments of at most $\rho_1,\rho_2$ belonging to $\text{range}(r_6)$ and $\text{range}(r_{12})$ respectively. Consequently the range of recovery rate values of the output bank $b_w$ is:
\begin{equation}
\text{range}(r_{b_w}) = \bigcup_{\substack{\rho_1\in \text{range}(r_6)\\ \rho_2\in{\text{range}}(r_{12})}}\Bigl[\min\Bigl(1,\rho_1+\rho_2\Bigr)-\epsilon, \min\Bigl(1,\rho_1+\rho_2\Bigr)+\epsilon \Bigr]\cap\Bigl[0,1\Bigr] 
\end{equation}
which is generated  from an assignment \textbf{A} where $r_{b_u}\in[0,1]\text{ and }\delta,\epsilon\in (0,\frac{1}{2})$ by applying \textbf{I} to the formula,
\begin{equation}\label{eq:OR}
\textbf{r}_{b_w} = (\textbf{r}_6 + \textbf{r}_{12}) \pm \epsilon
\end{equation} 
where for $\textbf{r}_6$ (and similarly $\textbf{r}_{12}$),
\begin{equation*}
\begin{split}
 \textbf{r}_{6} = \frac{1+2\delta}{4\delta}\cdot[1-\textbf{r}_3] \pm \epsilon
 & = \Bigl(\frac{1+2\delta}{4\delta}\cdot\Bigl[1-\Bigl(\underbrace{\frac{2}{1+2\delta}\cdot[1-r_{b_u}]\pm\epsilon\Bigr)}_{\textbf{r}_3}\Bigr]\Bigr)\pm \epsilon \\
 & \xrightarrow{\sigma_{1-}(1-\textbf{r}_3)}\Bigl(\frac{1+2\delta}{4\delta}\cdot\Bigl[\underbrace{\frac{2}{1+2\delta}\cdot r_{b_u}+\frac{-1+2\delta}{1+2\delta}\pm \epsilon}_{\textbf{x}}\Bigr]\Bigr)\pm\epsilon\\
 & \xrightarrow{\sigma_{*}(\frac{1+2\delta}{4\delta}\cdot\textbf{x})}\Bigl( \underbrace{\frac{1}{2\delta}\cdot r_{b_u}+\frac{-1+2\delta}{4\delta}\pm\frac{1+2\delta}{4\delta}\epsilon}_{\textbf{y}}\Bigr)\pm\epsilon\\
 & \xrightarrow{\sigma_{\pm}(\textbf{y}\pm\epsilon)}\frac{1}{2\delta}\cdot r_{b_u}+\frac{-1+2\delta}{4\delta}\pm\frac{1+6\delta}{4\delta}\epsilon
 \end{split}
\end{equation*}

We proceed by computing the interval where $r_{b_w}$ belongs in any $\epsilon$-\emph{CRRV}, depending on the interval of the mapping $m_\delta$ where the recovery rate of the input bank $r_{b_u}$ belongs. In all assignments \textbf{A}, both $\delta$ and $\epsilon$ belong in $(0,\frac{1}{2})$.     
\begin{Claim}\label{OR-CLAIM}
Consider the financial network $\mathcal{F}_{\text{OR}}$ of Figure \ref{fig:ORGATE} and let $r$ be an $\epsilon$-CRRV.
\begin{enumerate}[topsep=0pt, noitemsep]
\item If $r_{b_u} \in \Bigl[\frac{1}{2}+\delta,1\Bigr]$ or $r_{b_v} \in \Bigl[\frac{1}{2}+\delta,1\Bigr]$ then $r_{b_w} \in \Bigl[1-\frac{1+10\delta}{4\delta}\epsilon,1\Bigr]$.

\vspace{2mm}
\item If $r_{b_u},r_{b_v} \in \Bigl[0,\frac{1}{2}-\delta\Bigr]$ then $r_{b_w} \in \Bigl[0,\frac{1+8\delta}{2\delta}\epsilon\Bigr]$.
\end{enumerate}
\end{Claim}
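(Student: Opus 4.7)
The plan is to mirror the structure of the proof of Claim \ref{NOT-CLAIM}, but to handle the two cases of Claim \ref{OR-CLAIM} asymmetrically. The reason is that (\ref{eq:OR}) combined with the subformula for $\textbf{r}_6$ and its symmetric counterpart $\textbf{r}_{12}$ yields a tight centre only when both inputs lie in the ``$0$''-region; for the ``$1$''-case the natural substitution chain loses too much by propagating the unconstrained error of the other input.

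For case 2, I would apply a single chain of valid substitutions to (\ref{eq:OR}). Starting from $\textbf{r}_{b_w} = (\textbf{r}_6 + \textbf{r}_{12})\pm\epsilon$, the substitution $\sigma_{+}$ combines the two expressions displayed just after (\ref{eq:OR}) and a subsequent $\sigma_{\pm}$ absorbs the outer $\pm\epsilon$, giving
\begin{equation*}
\textbf{r}_{b_w} \;=\; \tfrac{1}{2\delta}(r_{b_u}+r_{b_v}) \;+\; \tfrac{-1+2\delta}{2\delta} \;\pm\; \tfrac{1+8\delta}{2\delta}\epsilon.
\end{equation*}
When $r_{b_u},r_{b_v}\in[0,\tfrac{1}{2}-\delta]$, we have $r_{b_u}+r_{b_v}\le 1-2\delta$, so the centre is $\le 0$. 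By Lemma \ref{lemma2} this means $\textbf{r}_{b_w}\preceq 0\pm\tfrac{1+8\delta}{2\delta}\epsilon$ is valid, and the ``$x<0$'' branch of (\ref{pm operation on numbers}) together with (\ref{relation}) gives $r_{b_w}\in[0,\tfrac{1+8\delta}{2\delta}\epsilon]$, exactly the bound in part 2.

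For case 1, I would avoid combining $\textbf{r}_6$ and $\textbf{r}_{12}$ symbolically and instead argue directly from the $\epsilon$-CRRV condition at $b_w$, using the clipping that happens inside $\textbf{r}_6$. By symmetry of $\mathcal{F}_{\text{OR}}$ it suffices to treat $r_{b_u}\in[\tfrac{1}{2}+\delta,1]$. Then the centre of $\textbf{r}_6=\tfrac{1}{2\delta} r_{b_u}+\tfrac{-1+2\delta}{4\delta}\pm\tfrac{1+6\delta}{4\delta}\epsilon$ is at least $\tfrac{1+2\delta}{4\delta}+\tfrac{-1+2\delta}{4\delta}=1$, so the ``$x>1$'' branch of (\ref{pm operation on numbers}) forces $r_6\ge 1-\tfrac{1+6\delta}{4\delta}\epsilon$ under any $\epsilon$-CRRV. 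Since bank $b_w$ has liability $1$ (the debt contract to bank $13$) and receives payments $r_6\cdot 1+r_{12}\cdot 1\ge 1-\tfrac{1+6\delta}{4\delta}\epsilon$, we get $f_I(r)_{b_w}\ge \min\{1,\,1-\tfrac{1+6\delta}{4\delta}\epsilon\}=1-\tfrac{1+6\delta}{4\delta}\epsilon$, and the $\epsilon$-CRRV condition gives $r_{b_w}\ge f_I(r)_{b_w}-\epsilon\ge 1-\tfrac{1+10\delta}{4\delta}\epsilon$, which together with $r_{b_w}\le 1$ yields part 1.

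The main obstacle is recognising this asymmetry: if one tries to prove case 1 by the same symbolic substitution chain as case 2, the contribution $\tfrac{1}{2\delta}r_{b_v}$ from the unconstrained input carries a $\pm\tfrac{1+6\delta}{4\delta}\epsilon$ error that is not cancelled, and the resulting bound $\tfrac{1+8\delta}{2\delta}\epsilon$ is weaker than the target $\tfrac{1+10\delta}{4\delta}\epsilon$. Exploiting the clipping of $\textbf{r}_6$ to the interval $[1-\tfrac{1+6\delta}{4\delta}\epsilon,1]$ before summation, and then adding a single $\epsilon$ slack from the $\epsilon$-CRRV at $b_w$, is what delivers the tight bound.
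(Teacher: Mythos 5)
Your proof is correct and takes essentially the same route as the paper's: for part 2 the paper bounds each of $\textbf{r}_6,\textbf{r}_{12}$ by $0\pm\frac{1+6\delta}{4\delta}\epsilon$ and then sums, while you combine them via $\sigma_{+}$ first, which yields the identical bound $\frac{1+8\delta}{2\delta}\epsilon$. For part 1, your semantic argument (the clipping of $\textbf{r}_6$ at $1$ giving $r_6\geq 1-\frac{1+6\delta}{4\delta}\epsilon$, the nonnegativity of $r_{12}$, and one extra $\epsilon$ from the $\epsilon$-CRRV condition at $b_w$) is precisely the content of the paper's domination step $(1\pm\frac{1+6\delta}{4\delta}\epsilon)\pm\epsilon\preceq(\textbf{r}_6+\textbf{r}_{12})\pm\epsilon$, just unpacked outside the interval formalism.
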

 \begin{proof}[Proof of Claim 2]
We use (\ref{eq:OR}), that expresses range($r_{b_w}$) in any $\epsilon$-\emph{CRRV} in terms of $r_{b_u},r_{b_v}$.
\begin{enumerate}[topsep=0pt, noitemsep]
\item In any assignment \textbf{A} of variables in \ref{eq:OR} to values in \textbf{D} let w.l.o.g  $r_{b_u} \geq \frac{1}{2}+\delta$. Since according to (\ref{eq:OR}), $\textbf{r}_{b_{w}} = (\textbf{r}_{6} + \textbf{r}_{{12}}) \pm \epsilon$, it is easy to verify that $\Bigl(1 \pm \frac{1+6\delta}{4\delta}\epsilon) \pm \epsilon\Bigr)^\textbf{I,A} \preceq^\textbf{I} \Bigl((\textbf{r}_{6} + \textbf{r}_{12}) \pm \epsilon \Bigr)^\textbf{I,A}$, where by applying substitution $\sigma_{\pm}$ we conclude that $1 \pm \frac{1+10\delta}{4\delta}\epsilon \preceq \textbf{r}_{b_w}$ is \emph{valid}. This implies that under any $\epsilon$-\emph{CRRV}, if $r_{b_u}\in\Bigl[\frac{1}{2}+\delta,1\Bigr]$ then $r_{b_w}\in \Bigl[1-\frac{1+10\delta}{4\delta}\epsilon, 1\Bigr]$.

\vspace{1mm}
\item In any assignment \textbf{A} of variables in \ref{eq:OR} to values in \textbf{D} where both $r_{b_u},r_{b_v} \leq \frac{1}{2}-\delta$, it follows from the above analysis by similar reasoning that both $\textbf{r}_6,\textbf{r}_{12} \preceq 0 \pm \frac{1+6\delta}{4\delta}\epsilon$ are valid. Consequently from (\ref{eq:OR}) we can easily get that $\textbf{r}_{b_w} \preceq (0 \pm 2\frac{1+6\delta}{4\delta}\epsilon)\pm\epsilon$, which by applying $\sigma_{\pm}$ implies that $\textbf{r}_{b_w} \preceq 0 \pm \frac{1+8\delta}{2\delta}\epsilon$ is \emph{valid}. This establishes that under any $\epsilon$-\emph{CRRV}, if both $r_{b_u},r_{b_v}\in\Bigl[\frac{1}{2}+\delta,1\Bigr]$ then $r_{b_w}\in \Bigl[0,\frac{1+8\delta}{2\delta}\epsilon\Bigr]$.\qedhere
\end{enumerate}
\end{proof}

\noindent \fbox{\textbf{PURIFY-gate.}} We simulate the gate $(\text{PURIFY},u,v,w)$ with the financial network of Figure \ref{fig:PURIFYGATE}. The network has one input bank denoted as $b_u$ that corresponds to the input variable $u$ and two output banks denoted as $b_v, b_w$ corresponding to the output variables $v,w$ respectively. 

\begin{figure}[htbp!]
    \centering
    \scalebox{1}{
\begin{tikzpicture}
[shorten >=1pt,node distance=2cm,initial text=]
\tikzstyle{every state}=[draw=black!50,very thick]
\tikzset{every state/.style={minimum size=0pt}}
\tikzstyle{accepting}=[accepting by arrow]
\node(1) {$b_u$};
\node(2)[right of=1]{1};
\draw[blue,->,very thick] (1)--node[midway,black,yshift=2mm]{1}(2);

\node(3)[below right of=1]{5};
\node[teal,left of=3,xshift=1.5cm]{2};
\node(4)[right of=3]{6};
\node(5)[right of=4]{7};
\draw[blue,->,very thick] (4)--node[midway,black,yshift=2mm]{1}(5);
\draw[orange,->,very thick,snake=snake] (3)--node[midway,black,yshift=4mm]{2}(4);

\node(6)[below left of=1]{2};
\node(7)[left of=6]{3};
\draw[orange,->,very thick,snake=snake] (6)--node[midway,black,yshift=4mm,xshift=-3mm]{$\frac{2}{1+2\delta}$}(7);
\node(8)[left of=7]{4};
\draw[blue,->,very thick] (7)--node[midway,black,yshift=2mm]{1}(8);
\node[teal,right of=6,xshift=-1.5cm]{$\frac{2}{1+2\delta}$};
\path [orange,->,draw,dashed,thick] (1) -- ($ (3) !.5! (4) $);
\path [orange,->,draw,dashed,thick] (1) -- ($ (6) !.5! (7) $);

\node(9)[below of=4,xshift=1cm]{9};
\node[teal,right of=9,xshift=-1.5cm]{$\frac{1}{2\delta}$};
\node(10)[left of=9]{$b_w$};
\draw[orange,->,very thick,snake=snake] (9)--node[midway,black,yshift=-4mm]{$\frac{1}{2\delta}$}(10);
\node(11)[below of=7,xshift=-1cm]{8};
\node[teal,left of=11,xshift=1.5cm]{$\frac{1+2\delta}{2\delta}$};
\node(12)[right of=11]{$b_v$};
\draw[orange,->,very thick,snake=snake] (11)--node[midway,black,yshift=-4mm]{$\frac{1+2\delta}{2\delta}$}(12);
\node(13)[right of=12,xshift=4mm]{10};

\draw[blue,->,very thick] (12)--node[midway,black,yshift=2mm]{1}(13);
\draw[blue,->,very thick] (10)--node[midway,black,yshift=2mm]{1}(13);
\path [orange,->,draw,dashed,thick] (7) -- ($ (11) !.5! (12) $);
\path [orange,->,draw,dashed,thick] (4) -- ($ (9) !.5! (10) $);

\end{tikzpicture}}
\caption{The financial network $\mathcal{F}_\text{PURIFY}$ that simulates a PURIFY-gate}
    \label{fig:PURIFYGATE}
\end{figure}
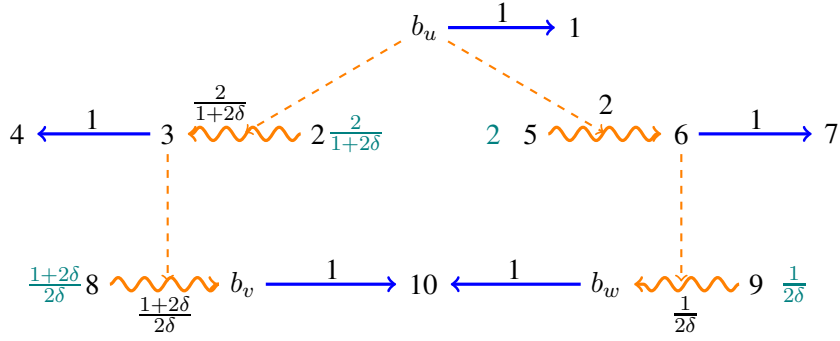

Assume that $r = (r)_{i\in N}$ is an $\epsilon$-\emph{CRRV} of $\mathcal{F}_{\text{PURIFY}}$ and let the recovery rate of the input bank $b_{u}$ be $r_{b_u}$ and the recovery rates of the output banks $b_u, b_w$ be $r_{b_u}, r_{b_w}$ respectively.
\begin{itemize}
\item Let the \emph{Left branch}, be the subnetwork that spans nodes $\{b_u,1,2,3,4,8,b_v,10\}$ . \item Let the \emph{Right branch}, be the subnetwork that spans nodes $\{b_u,1,5,6,7,5,9,b_w,10\}$.
\end{itemize}
We proceed with analysing the intervals for the recovery rates $r_{b_v},r_{b_w}$ in any $\epsilon$-\emph{CRRV} as a function of $r_{b_u}$. Our analysis consists of two parts, one for each branch.

\begin{itemize}[topsep=0pt, noitemsep]
 \item
 
 \underline{\textbf{\emph{Left} branch}} : Bank 3 receives a payment of $\frac{2}{1+2\delta}\cdot [1-r_{b_u}]$ from bank 2 and holds a liability of 1 towards bank 4. The recovery rate values for bank 3 under $r = (r_i)_{i\in N}$ belong in the range:
 \begin{equation}
 \text{range}(r_3) = \Bigl[\min\Bigl(1,\frac{2}{1+2\delta}\cdot[1-r_{b_u}]\Bigr)-\epsilon, \min\Bigl(1,\frac{2}{1+2\delta}\cdot[1-r_{b_u}]\Bigr)+\epsilon \Bigr]\cap\Bigl[0,1\Bigr] 
 \end{equation}
 which is represented by the atomic formula $\textbf{r}_3 = \frac{2}{1+2\delta}\cdot[1-r_3]\pm \epsilon$. Bank $b_{v}$ receives a payment of $\frac{1+2\delta}{2\delta}\cdot[1-\rho]$ from bank 8, where $\rho\in \text{range}(r_3)$. The range of recovery rate values for bank $b_v$ is
 \begin{equation}
 \text{range}(r_{b_v}) =\bigcup_{x\in \text{range}(r_3)} \Bigl[\min\Bigl(1,\frac{1+2\delta}{2\delta}\cdot [1-\rho]\Bigr)-\epsilon, \min\Bigl(1,\frac{1+2\delta}{2\delta}\cdot [1-\rho]\Bigr)+\epsilon\Bigr]\cap\Bigl[0,1\Bigr]
 \end{equation}
 which is represented by the atomic formula $\textbf{r}_{b_v} = \frac{1+2\delta}{2\delta}\cdot[1-\textbf{r}_3]\pm \epsilon$. The atomic formulas and the substitutions that generate the interval for the values of $r_{b_v}$ under the $\epsilon$-\emph{CRRV} are listed below.

\begin{equation*}
\begin{split}
 \textbf{r}_{b_v} = \frac{1+2\delta}{2\delta}\cdot[1-\textbf{r}_3] \pm \epsilon
 & =  \Bigl(\frac{1+2\delta}{2\delta}\cdot\Bigl[1-\Bigl(\underbrace{\frac{2}{1+2\delta}\cdot[1-r_{b_u}]\pm\epsilon\Bigr)}_{\textbf{r}_3}\Bigr]\Bigr)\pm \epsilon \\
 & \xrightarrow{\sigma_{1-}(1-\textbf{r}_3)}\Bigl(\frac{1+2\delta}{2\delta}\cdot\Bigl[\underbrace{\frac{2}{1+2\delta}\cdot r_{b_u}+ \frac{-1+2\delta}{1+2\delta}\pm \epsilon}_{\textbf{x}}\Bigr]\Bigr)\pm\epsilon\\
 & \xrightarrow{\sigma_{*}(\frac{1+2\delta}{2\delta}\cdot\textbf{x})}\Bigl( \underbrace{\frac{1}{\delta}\cdot r_{b_u}+\frac{-1+2\delta}{1+2\delta}\pm\frac{1+2\delta}{2\delta}\cdot\epsilon}_{\textbf{y}}\Bigr)\pm\epsilon\\
 & \xrightarrow{\sigma_{\pm}(\textbf{y}\pm\epsilon)}\frac{1}{\delta}\cdot r_{b_u} + \frac{-1+2\delta}{2\delta} \pm \frac{1+4\delta}{2\delta}\epsilon
 \end{split}
\end{equation*}
 
From the above analysis, the range for the recovery rate values of the output bank $b_v$ in any $\epsilon$-\emph{CRRV} is generated from an assignment \textbf{A} where $r_{b_u}\in[0,1]\text{ and }\delta,\epsilon\in (0,\frac{1}{2})$ and the interpretation of the following formula according to \textbf{I}: %
 \begin{equation}\label{eq:PUR_1}
 \textbf{r}_{b_{v}} = \frac{1}{\delta}\cdot r_{b_u} + \frac{-1+2\delta}{2\delta} \pm \frac{1+4\delta}{2\delta}\epsilon .
 \end{equation}
 
\item \underline{\textbf{\emph{Right} branch}}: Bank 6 receives a payment of $2\cdot[1-r_{b_u}]$ from bank 5 and since it holds a liability of 1 towards bank 7 the range of recovery rate values is given by the following expression:
 \begin{equation}
 \text{range}(r_6) = \Bigl[\min\Bigl(1,2\cdot[1-r_{b_u}]\Bigr)-\epsilon, \min\Bigl(1,2\cdot[1-r_{b_u}]\Bigr)+\epsilon \Bigr]\cap\Bigl[0,1\Bigr] 
 \end{equation}
 which is represented by the atomic formula $\textbf{r}_6 = 2\cdot[1-r_3]\pm \epsilon$. Bank $b_{w}$ receives a payment of $\frac{1}{2\delta}\cdot[1-\rho]$ from bank 9, where $\rho\in \text{range}(r_6)$. Consequently the range of recovery rate values for bank $b_w$ under the $\epsilon$-\emph{CRRV} $r = (r_i)_{i\in N}$ is:
 \begin{equation}
 \text{range}(r_{b_w}) =\bigcup_{\rho\in \text{range}(r_6)} \Bigl[\min\Bigl(1,\frac{1}{2\delta}\cdot [1-\rho]\Bigr)-\epsilon, \min\Bigl(1,\frac{1}{2\delta}\cdot [1-\rho]\Bigr)+\epsilon\Bigr]\cap\Bigl[0,1\Bigr]
 \end{equation}
 which is represented by the atomic formula $\textbf{r}_{b_w} = \frac{1}{2\delta}\cdot[1-\textbf{r}_6]\pm \epsilon$. The atomic formulas and the substitutions that define the range where the values for $r_{b_w}$ belong under the $\epsilon$-\emph{CRRV} are listed below.

 \begin{equation*}
\begin{split}
 \textbf{r}_{b_w} = \frac{1}{2\delta}\cdot[1-\textbf{r}_6] \pm \epsilon
 & = \Bigl(\frac{1}{2\delta}\cdot\Bigl[1-\Bigl(\underbrace{2\cdot[1-r_{b_u}]\pm\epsilon\Bigr)}_{\textbf{r}_6}\Bigr]\Bigr)\pm \epsilon \\
 & \xrightarrow{\sigma_{1-}(1-\textbf{r}_6)}\Bigl(\frac{1}{2\delta}\cdot\Bigl[\underbrace{-1 + 2\cdot r_{b_u}\pm \epsilon}_{\textbf{x}}\Bigr]\Bigr)\pm\epsilon\\
 & \xrightarrow{\sigma_{*}(\frac{1}{2\delta}\cdot\textbf{x})}\Bigl( \underbrace{\frac{1}{\delta}\cdot r_{b_u}-\frac{1}{2\delta}\pm\frac{1}{2\delta}\epsilon}_{\textbf{y}}\Bigr)\pm\epsilon\\
 & \xrightarrow{\sigma_{\pm}(\textbf{y}\pm\epsilon)}\frac{1}{\delta}\cdot r_{b_u} - \frac{1}{2\delta} \pm \frac{1+2\delta}{2\delta}\epsilon
 \end{split}
\end{equation*}

\vspace{2mm}
The range for the recovery rate values of the output bank $b_w$ in any $\epsilon$-\emph{CRRV} follows from an assignment \textbf{A} where $r_{b_u}\in[0,1]\text{ and }\delta,\epsilon\in (0,\frac{1}{2})$ and the interpretation of the following formula according to \textbf{I}: %
\begin{equation}\label{eq:PUR_2}
 \textbf{r}_{b_w} = \frac{1}{\delta}\cdot r_{b_u} - \frac{1}{2\delta} \pm \frac{1+2\delta}{2\delta}\epsilon .
 \end{equation}
\end{itemize}

We compute the range for $r_{b_v}, r_{b_w}$ with respect to $r_{b_u}$ are as follows:
\begin{Claim}\label{PURIFY-CLAIM}
Consider the financial network $\mathcal{F}_{\text{PURIFY}}$ of Figure \ref{fig:PURIFYGATE} and let $r$ be an $\epsilon$-CRRV.
\begin{enumerate}[topsep=0pt, noitemsep]
\vspace{2mm}
\item If $r_{b_u} \in \Bigl[0,\frac{1}{2}-\delta\Bigr]$ then $r_{b_v} \in \Bigl[0,\frac{1+4\delta}{2\delta}\epsilon\Bigr]$ and $r_{b_w} \in \Bigl[0,\frac{1+2\delta}{2\delta}\epsilon\Bigr]$.

\vspace{2mm}
\item If $r_{b_u} \in \Bigl[\frac{1}{2}+\delta,1\Bigr]$ then $r_{b_v}\in \Bigl[1-\frac{1+4\delta}{2\delta}\epsilon,1\Bigr]$ and $r_{b_w} \in \Bigl[1-\frac{1+2\delta}{2\delta}\epsilon,1\Bigr]$.

\vspace{2mm}
\item If $r_{b_u} \in \Bigl(\frac{1}{2}-\delta,\frac{1}{2}+\delta\Bigr)$ then $r_{b_v} \in \Bigl[1-\frac{1+4\delta}{2\delta}\epsilon,1\Bigr]$ or $r_{b_w} \in \Bigl[0,\frac{1+2\delta}{2\delta}\epsilon\Bigr]$.
\end{enumerate}
\end{Claim}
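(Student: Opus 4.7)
The plan is to mirror the proof of Claim~\ref{NOT-CLAIM} by applying Lemma~\ref{lemma2} to the atomic formulas (\ref{eq:PUR_1}) and (\ref{eq:PUR_2}), once per case. Under any assignment $\textbf{A}$ with $\delta,\epsilon\in(0,1/2)$, an appropriately oriented $\preceq$-comparison gets translated by the interpretation $\textbf{I}$ -- specifically rules (\ref{pm operation on numbers}) and (\ref{relation}) -- into the desired subinterval for the recovery rate of the target output bank. Because the two branches of $\mathcal{F}_{\text{PURIFY}}$ were analysed independently and yielded one linear formula each, cases 1 and 2 follow by straightforward substitution, while case 3 will require a further case split.

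For case 1, assume $r_{b_u}\in[0,\tfrac{1}{2}-\delta]$. The numerical centre of (\ref{eq:PUR_1}) is $\tfrac{1}{\delta}r_{b_u}+\tfrac{-1+2\delta}{2\delta}$, which at $r_{b_u}=\tfrac{1}{2}-\delta$ evaluates to $0$ and is $\le 0$ throughout the interval; analogously the centre of (\ref{eq:PUR_2}) is $\le -1$. Lemma~\ref{lemma2} then yields that $\textbf{r}_{b_v}\preceq 0\pm\tfrac{1+4\delta}{2\delta}\epsilon$ and $\textbf{r}_{b_w}\preceq 0\pm\tfrac{1+2\delta}{2\delta}\epsilon$ are both valid, which by (\ref{pm operation on numbers}) and (\ref{relation}) gives exactly the stated containments. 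Case 2 ($r_{b_u}\in[\tfrac{1}{2}+\delta,1]$) is symmetric: the two centres now satisfy $\ge 1$, so the reversed $\preceq$-inequalities $1\pm\tfrac{1+4\delta}{2\delta}\epsilon\preceq\textbf{r}_{b_v}$ and $1\pm\tfrac{1+2\delta}{2\delta}\epsilon\preceq\textbf{r}_{b_w}$ hold, again translated by $\textbf{I}$ into the required upper subintervals.

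The main obstacle is case 3, where $r_{b_u}$ lies in the ``garbage'' interval $(\tfrac{1}{2}-\delta,\tfrac{1}{2}+\delta)$ and neither branch uniformly saturates. The key observation is that both centres are linear in $r_{b_u}$ and cross their respective saturation thresholds at the common point $r_{b_u}=\tfrac{1}{2}$: the left-branch centre equals $1$ there and the right-branch centre equals $0$. I would therefore split on whether $r_{b_u}\ge\tfrac{1}{2}$ or $r_{b_u}<\tfrac{1}{2}$. In the first subcase the centre of (\ref{eq:PUR_1}) is $\ge 1$, so Lemma~\ref{lemma2} delivers $1\pm\tfrac{1+4\delta}{2\delta}\epsilon\preceq\textbf{r}_{b_v}$ and hence $r_{b_v}\in[1-\tfrac{1+4\delta}{2\delta}\epsilon,1]$; in the second the centre of (\ref{eq:PUR_2}) is $<0$, yielding $\textbf{r}_{b_w}\preceq 0\pm\tfrac{1+2\delta}{2\delta}\epsilon$ and hence $r_{b_w}\in[0,\tfrac{1+2\delta}{2\delta}\epsilon]$. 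Since exactly one subcase holds for any $r_{b_u}$ in the garbage interval, we obtain the disjunctive conclusion and the claim follows.
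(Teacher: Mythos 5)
Your proposal is correct and follows essentially the same route as the paper: apply Lemma \ref{lemma2} to the derived formulas (\ref{eq:PUR_1}) and (\ref{eq:PUR_2}), noting that their centres saturate at $0$ (resp.\ $1$) for the pure-input cases, and handle the garbage interval by splitting at $r_{b_u}=\tfrac{1}{2}$, which is exactly the paper's argument for Statement 3. Your explicit computation that both centres cross their thresholds at $r_{b_u}=\tfrac{1}{2}$ just makes the paper's implicit reasoning more transparent.
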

\begin{proof}[Proof of Claim 3]
Similar to the previous claims, the proof follows from a case analysis of (\ref{eq:PUR_1}) and (\ref{eq:PUR_2}) upon values of $r_{b_u}$.
\begin{enumerate}[topsep=0pt, noitemsep]
\item In any assignment \textbf{A} of variables in \ref{eq:PUR_1} to values in \textbf{D} where $r_{b_u} \leq \frac{1}{2}-\delta$,
from Lemma \ref{lemma2} and after calculations we get that in any $\epsilon$-\emph{CRRV}, $\Bigl(\frac{1}{\delta}\cdot r_{b_u} + \frac{-1+2\delta}{2\delta} \pm \frac{1+4\delta}{2\delta}\epsilon\Bigr)^\textbf{I,A}\preceq^\textbf{I} \Bigl(0\pm\frac{1+4\delta}{2\delta}\epsilon\Bigr)^\textbf{I,A}$, namely $\textbf{r}_{b_v} \preceq 0\pm\frac{1+4\delta}{2\delta}\epsilon$ is \emph{valid}. This establishes that if $r_{b_u}\in \Bigl[0,\frac{1}{2}-\delta \Bigr]$ then $r_{b_v}\in \Bigl[0,\frac{1+4\delta}{2\delta}\epsilon\Bigr]$. Similarly combining \ref{eq:PUR_2} with Lemma \ref{lemma2} and the fact that $r_{b_u}\leq \frac{1}{2}$, we conclude that $\textbf{r}_{b_w} \preceq 0 \pm \frac{1+2\delta}{2\delta}\epsilon$ is \emph{valid}. This establishes that whenever $r_{b_u}\in \Bigl[0,\frac{1}{2}-\delta\Bigr]$ then $r_{b_w}\in \Bigl[0,\frac{1+2\delta}{2\delta}\epsilon\Bigr]$.

\vspace{1mm}
\item In any assignment \textbf{A} of variables where $r_{b_u} \geq \frac{1}{2}+\delta$, trivially $r_{b_u} \geq \frac{1}{2}$ thus considering (\ref{eq:PUR_1}) and Lemma \ref{lemma2} and after calculations it holds that $\Bigl(1\pm\frac{1+4\delta}{2\delta}\epsilon\Bigr)^\textbf{I,A} \preceq^\textbf{I}  \Bigl(\frac{1}{\delta}r_{b_u} + \frac{-1+2\delta}{2\delta}\pm \frac{1+4\delta}{2\delta}\epsilon\Bigr)^\textbf{I,A}$, namely $ 1\pm\frac{1+4\delta}{2\delta}\epsilon \preceq \textbf{r}_{b_v}$ is \emph{valid}. This establishes that whenever $r_{b_u}\in \Bigl[\frac{1}{2}+\delta, 1\Bigr]$ then $r_{b_v}\in \Bigl[1-\frac{1+4\delta}{2\delta}\epsilon,1\Bigr]$. Similarly combining the initial assumption with (\ref{eq:PUR_2}) and Lemma \ref{lemma2}, after calculations it holds that $1 \pm \frac{1+2\delta}{2\delta}\epsilon \preceq \textbf{r}_{b_w}$ is \emph{valid}. This establishes that whenever $r_{b_u}\in \Bigl[\frac{1}{2}+\delta,1\Bigr]$, then  $r_{b_w}\in [1-\frac{1+2\delta}{2\delta}\epsilon,1]$

\item Finally for assignments \textbf{A} where $r_{b_u} \in \Bigl(\frac{1}{2}-\delta,\frac{1}{2}+\delta\Bigr)$,  if $r_{b_u} \leq \frac{1}{2}$ then Statement 1 already establishes that $\textbf{r}_{b_w} \preceq 0 \pm \frac{1+2\delta}{2\delta}\epsilon$ is \emph{valid}, while if $r_{b_u}\geq \frac{1}{2}$, then Statement 2 establishes that $1 \pm \frac{1+4\delta}{2\delta}\epsilon \preceq \textbf{r}_{b_v} $ is \emph{valid}. Consequently whenever $r_{b_u}\in \Bigl(\frac{1}{2}-\delta,\frac{1}{2}+\delta\Bigr)$ either $r_{b_u}\in\Bigl[1-\frac{1+4\delta}{2\delta}\epsilon,1\Bigr]$ or $r_{b_w}\in\Bigl[0,\frac{1+2\delta}{2\delta}\epsilon\Bigr]$. \qedhere
\end{enumerate}
\end{proof}

\noindent \textbf{Specifying $\delta$.} Claims \ref{NOT-CLAIM}, \ref{OR-CLAIM} and \ref{PURIFY-CLAIM} establish for the networks $\mathcal{F}_{\text{NOT}}, \mathcal{F}_{\text{OR}}, \mathcal{F}_{\text{PURIFY}}$ respectively the intervals where the recovery rates of the output banks lie as a function of the recovery rates of the input banks, under any $\epsilon$-\emph{CRRV}. All these intervals either span right from 0 or left from 1. We want to find an assignment \textbf{A} for the value of $\delta$ such that under $m_{\delta}$, the three networks $\mathcal{F}_{\text{NOT}}$, $\mathcal{F}_{\text{OR}}$, and $\mathcal{F}_{\text{PURIFY}}$  \emph{simulate} the gates NOT, OR, and PURIFY respectively, where the notion of simulation is defined in the natural way: for any fixed input bank's recovery rate of the financial network, any $\epsilon$-\emph{CRRV} of the financial network must be mapped back by $m_{\delta}$ to a satisfying assignment on the inputs and outputs of the respective gate.
Consequently, to determine the values of $\delta$ such that the $\mathcal{F}_{\text{NOT}}, \mathcal{F}_{\text{OR}}, \mathcal{F}_{\text{PURIFY}}$ networks correctly simulate the NOT-, OR-, PURIFY-gates respectively, we must set the parameter $\delta$ such that the recovery rates of the output banks of the three networks are contained in the appropriate intervals of  $m_{\delta}$. For example, considering assignments \textbf{A} where the values for $\delta$ satisfy $\Bigl[0,\frac{1}{2}-\delta\Bigr] \subset  \Bigl[0,\frac{1+2\delta}{2\delta}\epsilon\Bigr]$, allows numbers $q\in \Bigl(\frac{1}{2}-\delta,\frac{1+2\delta}{2\delta}\epsilon\Bigr]$ to be mapped according to $m_{\delta}$ to a different value than 0, meaning that $\mathcal{F}_{\text{PURIFY}}$ would not correctly simulate the PURIFY-gate~(Claim \ref{PURIFY-CLAIM}), because for input an encoded value 0, a satisfying assignment for the PURIFY-gate must output encoded values 0 for both output variables. Scanning through all intervals, it turns out the biggest interval is $\Bigl[0,\frac{1+8\delta}{2\delta}\epsilon\Bigr]$ from Statement 2 of Claim \ref{OR-CLAIM}. Eventually to find a  feasible choice of $\delta$, we must consider all assignments \textbf{A} of pairs of values for the parameters $\delta,\epsilon$ where 
\begin{equation}\label{encoding}
\frac{1+8\delta}{2\delta}\epsilon = \frac{1}{2} - \delta.
\end{equation}
As we will show next for all assignments \textbf{A} in which a pair $(\delta,\epsilon)$ satisfies (\ref{encoding}), it holds that all three financial networks correctly simulate their respective gate under $m_{\delta}$. Let $(\delta,\epsilon)$ be any pair that satisfies (\ref{encoding}).

\vspace{3mm}
\begin{itemize}[topsep=0pt, noitemsep]
\item \fbox{\textbf{NOT-gate}} Consider the network $\mathcal{F}_{\text{NOT}}$ of Figure \ref{fig:NOTGATE} and let $r$ be an $\epsilon$-\emph{CRRV}. From Statement 1 of Claim \ref{NOT-CLAIM} whenever $r_{b_u}\in \Bigl[0,\frac{1}{2}-\delta\Bigr]$ it holds that $r_{b_w} \in \Bigl[1-\frac{1+10\delta}{4\delta}\epsilon,1\Bigr]$ and since $\Bigl[1-\frac{1+10\delta}{4\delta}\epsilon,1\Bigr] \subset \Bigl[1-\frac{1+8\delta}{2\delta}\epsilon,1\Bigr]$ trivially $r_{b_w} \in \Bigl[1-\frac{1+8\delta}{2\delta}\epsilon,1\Bigr]$. From Statement 2 of the same Claim, whenever $r_{b_u} \in \Bigl[\frac{1}{2}+\delta,1\Bigr]$, then $r_{b_w} \in \Bigl[0,\frac{1+10\delta}{4\delta}\epsilon\Bigr]$ and since $\Bigl[0,\frac{1+10\delta}{4\delta}\epsilon\Bigr]\subset \Bigl[0,\frac{1+8\delta}{2\delta}\epsilon\Bigr]$ trivially $r_{b_w} \in \Bigl[0,\frac{1+8\delta}{2\delta}\epsilon\Bigr]$. Consequently the mapping $m_\delta$ on any $\epsilon$-\emph{CRRV} of $\mathcal{F}_{\text{NOT}}$ generates an assignment x for which it holds that:
\begin{enumerate}[topsep=0pt, noitemsep]
 \item x$[u]$ = 0 $\rightarrow$ x$[w]$ = 1~(Statement 1)
 \item x$[u]$ = 1 $\rightarrow$ x$[w]$ = 0~(Statement 2)
 \end{enumerate}

 As satisfying assignments are indifferent with respect to $\perp$ values, the above argument suffices to establish that $\mathcal{F}_{\text{NOT}}$ correctly simulates a NOT-gate under $m_\delta$.
 
\vspace{3mm}
\item \fbox{\textbf{OR-gate}} Consider the network $\mathcal{F}_{\text{OR}}$ of Figure \ref{fig:ORGATE} and let $r$ be an $\epsilon$-\emph{CRRV}. Similarly from Statement 1 of Claim \ref{OR-CLAIM}, the recovery rate for the output bank $b_w$ lies in $\Bigl[1-\frac{1+8\delta}{2\delta}\epsilon,1\Bigr]$ if one of $r_{b_u},r_{b_v}$ lies in $\Bigl[\frac{1}{2}+\delta,1\Bigr]$ and from Statement 2, $r_{b_w} \in \Bigl[0,\frac{1+8\delta}{2\delta}\epsilon\Bigr]$ if both $r_{b_u},r_{b_v} \in \Bigl[0,\frac{1}{2}-\delta\Bigr]$. As a result the mapping $m_\delta$ on any $\epsilon$-\emph{CRRV} of $\mathcal{F}_{\text{OR}}$ generates an assignment x for which
\begin{enumerate}[topsep=0pt, noitemsep]
\item if x$[u]$ = x$[v]$ = 0 $\rightarrow$ x$[w]$ = 0~(Statement 2)
 \item if $\text{x}[u] = 1 \text{ or } \text{x}[v] = 1 \rightarrow \text{x}[w] = 1$~(Statement 1)
\end{enumerate}

Checking the satisfying conditions in Section 3.1, we conclude that the above argument suffices to establish that $\mathcal{F}_{\text{OR}}$ correctly simulates the OR-gate under $m_\delta$.

\vspace{3mm}
\item \fbox{\textbf{PURIFY-gate}} Consider the network of Figure \ref{fig:PURIFYGATE}, and let $r$ be an $\epsilon$-\emph{CRRV}.
From Statement 1 of Claim \ref{PURIFY-CLAIM} whenever $r_{b_{u}} \in \Bigl[0,\frac{1}{2}-\delta\Bigr]$ it holds that $r_{b_v} \in \Bigl[0,\frac{1+4\delta}{2\delta}\epsilon\Bigr] \subset \Bigl[0,\frac{1+8\delta}{2\delta}\epsilon\Bigr]$ and $r_{b_w}\in \Bigl[0,\frac{1+2\delta}{2\delta}\epsilon\Bigr] \subset \Bigl[0,\frac{1+8\delta}{2\delta}\epsilon\Bigr]$. That means that the mapping $m_\delta$ generates an assignment x such that if x$[u]$ = 0 then x[$v$] = x[$w$] = 0.  From Statement 2, whenever  $r_{b_u}\in \Bigl[\frac{1}{2}+\delta,1\Bigr]$ then $r_{b_v}\in \Bigl[1-\frac{1+4\delta}{2\delta}\epsilon\Bigr] \subset \Bigl[1-\frac{1+8\delta}{2\delta}\epsilon,1\Bigr]$ and $r_{b_w}\in \Bigl[1-\frac{1+2\delta}{2\delta}\epsilon,1\Bigr] \subset \Bigl[1-\frac{1+8\delta}{2\delta}\epsilon,1\Bigr]$, meaning $m_\delta$ generates an assignment x where if x$[u]$ = 1 then x$[v]$ = x$[w]$ = 1. Finally from Statement 3 and using similar arguments, it is not hard to see that $m_\delta$ generates an assignment x such that whenever x$[u]$ = $\perp$ then x[$v$] = 1 or x$[w]$ = 0. So to conclude, from any $\epsilon$-\emph{CRRV} of $\mathcal{F}_{\text{PURIFY}}$, $m_\delta$ generates an assignment x where:
\begin{enumerate}[topsep=0pt, noitemsep]
\item $\text{x}[u] = 0 \rightarrow \text{x}[v] = \text{x}[w] = 0$~(Statement 1).
 \item  $\text{x}[u] = 1 \rightarrow \text{x}[v] = \text{x}[w] = 1$~(Statement 2).
 \item $\text{x}[u] =\perp \rightarrow \text{x}[v]\in \{0,1\} \text{ or } \text{x}[w] \in \{0,1\}$~(Statement 3).
\end{enumerate}
These are exactly the conditions that hold in a satisfying assignment x of a PURIFY-gate.
\end{itemize}

\vspace{3mm}
Given an instance $I$ of \pure\ that operates on gates NOT, OR and PURIFY, we construct a financial network $\mathcal{F}_I$ consisting of debt and CDS contracts and show that any  $\epsilon$-\emph{CRRV} of $\mathcal{F}_I$ is mapped back to a satisfying assignment x of values to the variables of the original \pure\ instance $I$, by applying $m_\delta$, such that $\delta$ and $\epsilon$ satisfies (\ref{encoding}).

\smallskip \noindent \textbf{The reduction.} Let $I = (V,G)$ be a \pure\ instance.
The construction of the financial network $\mathcal{F}_I$ proceeds as follows: For each gate $g = (\text{NOT},u,w)\in G$, we construct a copy of the financial network $\mathcal{F}_{\text{NOT}}$ of Figure \ref{fig:NOTGATE}. For each gate $g = (\text{OR},u,v,w)\in G$,
 we construct a copy of the financial network $\mathcal{F}_{\text{OR}}$ of Figure \ref{fig:ORGATE}. For each gate $g = (\text{PURIFY},u,v,w)\in G$, we construct a copy of the financial network $\mathcal{F}_{\text{PURIFY}}$ of Figure \ref{fig:PURIFYGATE}. 

The interconnection of the gates in a \pure\ instance rises from the fact that the gates may share variables. Remember though that by the definition of the \pure\ problem~(see Definition \ref{defpure}), no variable can be the output of more than one gate, whereas a variable can be input to many gates. Consequently after the execution of the above steps we might end up with more than one bank to represent the same variable. Next we will show how to deal with these situations.

Without loss of generality, let $\chi$ be the output variable of gate $g$ and the input variable of another gate $g'$. After replacing $g$ and $g'$ with their respective financial networks $\mathcal{F}_g$ and $\mathcal{F}_{g'}$, we are left with two banks representing variable $\chi$, both denoted as $b_{\chi}$. To connect $\mathcal{F}_{g}\text{ and }\mathcal{F}_{g'}$ and represent the interconnection between $g$ and $g'$ due to their shared variable $\chi$, we merge the two $b_{\chi}$ banks into one bank, keeping the same notation $b_{\chi}$. It's worth noting that, as constructed, every input and output bank holds one outgoing liability of notional 1. Therefore, both $b_{\chi}$ banks, which represented the common variable $\chi$ prior to merging, held one outgoing liability of notional 1. In order to maintain the recovery rate analysis presented earlier, it is important for the newly merged $b_{\chi}$ bank to retain only one outgoing liability of notional 1. To achieve this, we eliminate one of the two outgoing liabilities after merging.


We have shown how to generate a financial network $\mathcal{F}_I$ from a \pure\ instance $I$ and presented a mapping $m_\delta$ to map back $\epsilon$-\emph{CRRV}es of $\mathcal{F}_I$ to satisfying assignments for $I$, where $(\delta,\epsilon)$ is a pair that satisfies (\ref{encoding}). The next theorem establishes our main result.

\begin{theorem}\label{main_theorem}
$\epsilon$-\proportional\ is \textsf{PPAD}-hard for $\epsilon \leq \frac{3 -\sqrt 5}{16}$. 
\end{theorem}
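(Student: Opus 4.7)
The plan is to establish the theorem by packaging together the gate-simulation analysis already developed (Claims \ref{NOT-CLAIM}, \ref{OR-CLAIM}, \ref{PURIFY-CLAIM}) into a polynomial-time reduction from \pure\ and then optimising the choice of $\delta$ subject to the feasibility constraint \eqref{encoding}. First I would verify that the reduction $I = (V,G) \mapsto \mathcal{F}_I$, described just before the theorem, is computable in polynomial time. Each gate in $G$ is replaced by a constant-size copy of $\mathcal{F}_{\text{NOT}}$, $\mathcal{F}_{\text{OR}}$, or $\mathcal{F}_{\text{PURIFY}}$, and the subnetworks are glued together by identifying the two copies of $b_\chi$ whenever $\chi$ is the output of one gate and an input to another, after deleting one of the duplicated outgoing debts of notional $1$. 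The resulting $\mathcal{F}_I$ has size linear in $|V|+|G|$, and the notionals are rationals whose bit length depends only on $\delta$, which will be a fixed algebraic constant.

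Next I would argue correctness. Suppose $r$ is an $\epsilon$-\emph{CRRV} of $\mathcal{F}_I$ for some $\epsilon$ and $\delta$ satisfying \eqref{encoding}. The merging step preserves the gate-simulation analysis: because every input and output bank carries exactly one outgoing liability of notional $1$, the post-merge bank $b_\chi$ still has total liabilities equal to $1$, and its recovery rate is still determined by the same ratio of incoming payments to liabilities inside the subnetwork of the gate that produced $\chi$. In particular, Claims \ref{NOT-CLAIM}, \ref{OR-CLAIM}, \ref{PURIFY-CLAIM} continue to apply to each copy. Define $\mathrm{x}(v) = m_\delta(r_{b_v})$ for every $v \in V$. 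The case analysis carried out immediately before the theorem shows that, whenever $(\delta,\epsilon)$ satisfies \eqref{encoding}, the relevant output intervals are contained in $[0,\tfrac{1}{2}-\delta]$ or in $[\tfrac{1}{2}+\delta,1]$, so $\mathrm{x}$ is well-defined and satisfies each NOT, OR, and PURIFY constraint in $G$. Hence $\mathrm{x}$ is a solution to $I$, and any efficient algorithm for $\epsilon$-\proportional\ would solve \pure\ in polynomial time.

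The final step is to maximise $\epsilon$ over feasible $\delta$. Rewriting \eqref{encoding} gives
\begin{equation*}
\epsilon(\delta) = \frac{\delta(1-2\delta)}{1+8\delta}, \qquad \delta \in \bigl(0,\tfrac{1}{2}\bigr).
\end{equation*}
Differentiating, the numerator of $\epsilon'(\delta)$ equals $(1-4\delta)(1+8\delta) - 8\delta(1-2\delta) = 1 - 4\delta - 16\delta^2$, whose unique root in $(0,\tfrac12)$ is $\delta^\star = \tfrac{\sqrt 5 - 1}{8}$. Substituting, I use $1+8\delta^\star = \sqrt 5$ and $1-2\delta^\star = \tfrac{5-\sqrt 5}{4}$ to obtain
\begin{equation*}
\epsilon(\delta^\star) = \frac{\tfrac{\sqrt 5 - 1}{8}\cdot \tfrac{5-\sqrt 5}{4}}{\sqrt 5} = \frac{(\sqrt 5 - 1)(5-\sqrt 5)}{32\sqrt 5} = \frac{6\sqrt 5 - 10}{32\sqrt 5} = \frac{3-\sqrt 5}{16}.
\end{equation*}
Combined with the reduction of the previous paragraphs, this yields \textsf{PPAD}-hardness for every $\epsilon \le \tfrac{3-\sqrt 5}{16}$.

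The step I expect to require the most care is the correctness of the merging operation: one must check that removing a duplicated outgoing arc of notional $1$ does not change any payment or liability that enters the formulas underlying the three claims, and that no input bank ever acquires an outgoing CDS obligation (so that the assumption, used implicitly in the gate analyses, that the input variables are freely assignable recovery rates continues to hold at the interface between two subnetworks). Once this bookkeeping is in place, the rest of the argument is a straightforward composition of the already established gate lemmas with the optimisation above.
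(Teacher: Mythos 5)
Your proposal is correct and follows essentially the same route as the paper's proof: the reduction from \pure\ via the gate networks and the merging of shared-variable banks, followed by rewriting \eqref{encoding} as $\epsilon = \frac{\delta(1-2\delta)}{1+8\delta}$ and maximising over $\delta \in (0,\tfrac12)$ to get $\delta^* = \frac{\sqrt5 - 1}{8}$ and $\epsilon = \frac{3-\sqrt5}{16}$. Your calculus and the simplification to $\frac{3-\sqrt5}{16}$ check out, and your extra attention to the merging bookkeeping matches what the paper argues informally in the construction preceding the theorem.
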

\begin{proof}
We reduce from \pure\ to $\epsilon$-\proportional. As established, all values of $\delta,\epsilon$ that satisfy (\ref{encoding}) generate mappings under which each  $\mathcal{F}_{g}$ network correctly simulates gate $g \in \{\text{NOT},\text{OR},\text{PURIFY}\}$. To construct the mapping that maximises the inapproximability parameter $\epsilon$, we rewrite (\ref{encoding}) to $\epsilon = \frac{\delta\cdot(1-2\delta)}{1+8\delta}$ and compute the value $\delta^* = \arg\max \frac{\delta\cdot(1-2\delta)}{1+8\delta}$, where $\delta^*\in (0,1/2)$. Using basic calculus the maximum value for $\epsilon$ is $\frac{3 -\sqrt 5}{16}\approx 0.048$ and is obtained for $\delta^* = \frac{\sqrt 5-1}{8}$. Thus, given an instance $I = (V,G)$ of \pure, we construct a network $\mathcal{F}_I$ of \proportional\ as described above, where we set $\delta = \delta^*$. We can then map, in polynomial time, $\epsilon$-CRRVes of $\mathcal{F}_I$ to a satisfying assignment for $I$ through mapping $m_{\delta^*}$, i.e.:
\begin{itemize}
\item[(i.)] If $r_{b_u} \leq \frac{1}{2}-\delta^*$ then  $\text{x}[u] = 0$;
\item[(ii.)] if $r_{b_u} \geq \frac{1}{2} + \delta^*$ then $\text{x}[u] = 1$
\item[(iii.)] else $\text{x}[u] = \perp$.
\end{itemize}
This establishes the claim.
\end{proof} 

\noindent \textbf{Extension to other payment schemes}.
In the network that we constructed to establish Theorem \ref{main_theorem}, each bank has only one outgoing contract. This allows us to apply the presented analysis to a broader class of payments, including those satisfying \emph{Limited Liability} and \emph{Absolute Priority} conditions.
\begin{corollary}\label{corollary1}
$\epsilon$-\proportional\ is \textsf{PPAD}-hard for $\epsilon \leq \frac{3 -\sqrt 5}{16}$ under any payment scheme that satisfies the Limited Liability and Absolute Priority conditions.
\end{corollary}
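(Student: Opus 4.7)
The strategy is to show that on the specific networks $\mathcal{F}_I$ built in the proof of Theorem \ref{main_theorem}, every payment scheme $\mathbb{P}$ satisfying Limited Liability and Absolute Priority induces exactly the same fixed point map $f_I$ as the proportional scheme, so the reduction and all of the interval bounds derived in Claims \ref{NOT-CLAIM}, \ref{OR-CLAIM} and \ref{PURIFY-CLAIM} transfer verbatim.

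First I would verify, by direct inspection of Figures \ref{fig:NOTGATE}, \ref{fig:ORGATE} and \ref{fig:PURIFYGATE}, that in each gate gadget every bank holds at most one outgoing contract. Then I would check that this single-outgoing-contract invariant is preserved by the merging step in the reduction: when an input bank and an output bank representing a shared variable $\chi$ are identified, the construction explicitly deletes one of the two duplicated outgoing liabilities of notional $1$, so the merged $b_\chi$ still has exactly one outgoing debt edge.

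Next, for any bank $i$ in $\mathcal{F}_I$ with a unique creditor $j$, the payment $p_{i,j}(r)$ is the only degree of freedom of $\mathbb{P}$ at node $i$. Limited Liability forces $p_{i,j}(r) = l_{i,j}(r)$ whenever $a_i(r) \geq l_i(r)$, and Absolute Priority forces $p_{i,j}(r) = a_i(r)$ whenever $a_i(r) < l_i(r)$. In both regimes the induced recovery rate satisfies $r_i = \min\{1, a_i(r)/l_i(r)\}$, matching \eqref{eq:clearing} exactly. Hence the $\epsilon$-\emph{CRRV} condition of Definition \ref{epsilon-crrv} for $\mathbb{P}$ coincides pointwise with the one for proportional payments on $\mathcal{F}_I$, so any $\epsilon$-\emph{CRRV} under $\mathbb{P}$ is an $\epsilon$-\emph{CRRV} under proportional payments and is mapped by $m_{\delta^*}$ to a satisfying assignment of the original \pure\ instance.

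The only delicate point is the treatment of CDS debtor banks, which the gate analyses implicitly fix to recovery rate $1$. By construction these banks are endowed with external assets strictly exceeding the sum of their maximal possible liabilities (cf.\ Definition \ref{epsilon-crrv}(i.)), so Limited Liability alone forces $r_i = 1$ under any admissible $\mathbb{P}$, independently of how ties are broken across multiple creditors. Once this is in place the entire chain of atomic formulas in the language $L(\textbf{R},\textbf{F},\textbf{C})$ and the consequent interval computations remain valid, and the inapproximability threshold $\frac{3-\sqrt{5}}{16}$ carries over unchanged.
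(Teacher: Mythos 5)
Your proposal is correct and is essentially the paper's own argument: the paper justifies Corollary \ref{corollary1} with the single observation that every bank in the constructed networks has at most one outgoing contract, so Limited Liability and Absolute Priority pin each payment to $\min\{a_i(r),\, l_i(r)\}$ and the entire proportional-scheme gate analysis transfers unchanged; you merely spell out the details (the invariant's preservation under the merging step, and the CDS debtors being fixed at recovery rate $1$). One small nit: in the gadgets the CDS debtors' external assets \emph{equal}, rather than strictly exceed, the sum of their CDS notionals, so forcing their rate to exactly $1$ rests on the paper's stated convention for these banks rather than on a literal application of condition (i.) of Definition \ref{epsilon-crrv} — but this mirrors the paper's own treatment and does not affect the reduction.
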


\section{The (in)tractability of special financial topologies}\label{central}

This section studies the possibility of efficiently computing exact solutions for certain simpler network structures. At first we point out that clearing vectors of networks that satisfy the \emph{dedicated CDS debtor property} (introduced in \cite{ioannidis2022strong}), are still hard to approximate. Then we introduce the \emph{central CDS debtor} property, where a single bank, assumed to have sufficient assets, is responsible for clearing the network's debt that is generated from the activation of credit default swaps. Despite this being the simplest topology for a financial network with debt and CDS contracts, Theorem \ref{main_theorem} still applies. 

\subsection{Dedicated and central CDS debtors}
In \cite{ioannidis2022strong} the authors defined the \emph{dedicated CDS debtor} property 
and showed that \proportional\ restricted to instances with this property is \textsf{PPAD}-complete (i.e., it is of lower complexity than the general case, which is \textsf{FIXP}-complete).
\begin{definition}[Dedicated CDS debtor~\cite{ioannidis2022strong}]\label{dedicated}
A financial network $\mathcal{F} = (N,e,c)$ satisfies the \emph{dedicated CDS debtor property} iff for every CDS debtor bank $i \in N$, 
\begin{enumerate}[topsep=0pt, noitemsep]
\item  $ \forall j \in N : c_{i,j} = 0$, i.e, there are no debt contracts where $i$ is the debtor.
\item  $\exists R \in N : \forall j,k \in N : c_{i,j}^k \neq 0\rightarrow k = R$, i.e, all CDSes for which $i$ is the debtor share the same reference bank.
\end{enumerate}
\end{definition} 

It is easy 
to verify that the financial networks constructed in the reduction of the previous section satisfy the dedicated CDS debtor property, thus strengthening  the result of \cite{ioannidis2022strong}. 

\begin{corollary}\label{dedicated_theorem}
$\epsilon$-\proportional\ restricted to instances that satisfy the dedicated CDS debtor property is \textsf{PPAD}-hard for $\epsilon \leq \frac{3-\sqrt 5}{16}$.
\end{corollary}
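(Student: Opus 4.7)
The plan is to observe that Corollary \ref{dedicated_theorem} does not require a fresh reduction: the networks $\mathcal{F}_{\text{NOT}}$, $\mathcal{F}_{\text{OR}}$, $\mathcal{F}_{\text{PURIFY}}$ produced by the reduction behind Theorem \ref{main_theorem} already satisfy the dedicated CDS debtor property, and the gluing that yields $\mathcal{F}_I$ preserves this property. Hence it suffices to inspect the gadgets and the merging step, and then invoke Theorem \ref{main_theorem} verbatim.

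First I would enumerate the CDS debtors in each gadget. In $\mathcal{F}_{\text{NOT}}$ (Figure \ref{fig:NOTGATE}) the CDS debtors are the banks sitting at the tail of the three wavy orange arcs, namely banks $2$, $5$ and $8$; in $\mathcal{F}_{\text{OR}}$ (Figure \ref{fig:ORGATE}) they are banks $2$, $5$, $8$ and $11$; and in $\mathcal{F}_{\text{PURIFY}}$ (Figure \ref{fig:PURIFYGATE}) they are banks $2$, $5$, $8$ and $9$. For each such bank I would verify, directly from the figures, that (i.) it is not the tail of any blue arc, so $c_{i,j}=0$ for every $j$, fulfilling the first clause of Definition \ref{dedicated}; and (ii.) it is the tail of exactly one orange CDS arc, so all its CDSes trivially share the same reference bank (the unique one attached to that arc by the dotted line), fulfilling the second clause. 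Every other bank in the gadgets is either a pure debtor (blue outgoing arcs only), a pure creditor, or a bank with sufficient external assets, so none contributes a violating CDS position.

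Next I would check the merging step that produces $\mathcal{F}_I$. The only banks that are merged are the input and output banks $b_\chi$ of the gadgets, and in every gadget these banks are sources of a single outgoing blue arc and never tails of a CDS arc. After identifying an output $b_\chi$ of some $\mathcal{F}_g$ with an input $b_\chi$ of some $\mathcal{F}_{g'}$ and discarding one redundant outgoing debt, the merged bank still has only debt as outgoing contracts, hence it is not a CDS debtor and the property is vacuously preserved for it. The CDS debtors of the various gadgets are never touched by merging, so their local structure, and hence the dedicated CDS debtor property for them, is inherited unchanged from the gadget analysis.

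Combining these two observations, every instance $\mathcal{F}_I$ produced by the reduction of Theorem \ref{main_theorem} satisfies the dedicated CDS debtor property. Since the mapping $m_{\delta^*}$ still turns $\epsilon$-\emph{CRRV}es of $\mathcal{F}_I$ into satisfying assignments of the underlying \pure\ instance $I$ for $\epsilon \le \frac{3-\sqrt{5}}{16}$, the \textsf{PPAD}-hardness transfers to the restricted class, establishing the corollary. The only real work is the bookkeeping of the two paragraphs above; I expect no serious obstacle, since the gadgets were designed so that each CDS originates from a highly capitalised, debt-free bank whose recovery rate is pinned to $1$ in any $\epsilon$-\emph{CRRV}.
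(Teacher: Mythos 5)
Your proposal is correct and takes the same route as the paper: the paper simply states that the networks $\mathcal{F}_I$ produced in the proof of Theorem \ref{main_theorem} "easily verify" the dedicated CDS debtor property and invokes that theorem. What you supply is precisely the bookkeeping the paper leaves to the reader — a correct enumeration of the CDS debtors in each gadget (each has exactly one outgoing CDS, no outgoing debt, and thus a unique reference bank), plus the observation that the merging step only touches input/output banks, which have only blue outgoing arcs and are therefore never CDS debtors.
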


We can show a similar result for another natural restriction on CDS debtors. 
\begin{definition}[Central CDS debtor]\label{central_CDS_debtor}
A financial network $\mathcal{F} = (N,e,c)$ satisfies the central CDS debtor property iff the following conditions hold,
\begin{enumerate}[topsep=0pt, noitemsep]
\item $\exists i' \in N : \forall i,j,R\in N : c_{i,j}^R\neq 0 \rightarrow i = i'$, i.e, all CDS contracts share the same debtor bank. We refer to this debtor as central CDS debtor and denote it by $\mathcal{CCD}$.
\item $\forall i \in N : c_{\mathcal{CCD},i} = 0$, i.e,  there are no debt contracts where $\mathcal{CCD}$ is the debtor.
\item $e_{\mathcal{CCD}} \geq \sum_{j,R \in N} c_{\mathcal{CCD},j}^R$, i.e, $\mathcal{CCD}$ possess enough assets to fully pay off any of its potential liabilities. 
\end{enumerate}
\end{definition}
It turns out that, despite drastic simplification that this property imposes on the way in which CDSes are present in a financial network, it is still not possible to compute a weakly approximate clearing vector efficiently.

\begin{corollary}\label{CCD_theorem}
$\epsilon$-\proportional\ restricted to instances that satisfy the central CDS debtor property is \textsf{PPAD}-hard for $\epsilon \leq \frac{3-\sqrt 5}{16}$.
\end{corollary}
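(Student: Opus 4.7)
}
The plan is to reuse the reduction from the proof of Theorem \ref{main_theorem} and modify the constructed network so that it additionally satisfies the central CDS debtor property, without disturbing the recovery-rate analysis of any gadget. The crucial observation is that in each of $\mathcal{F}_{\text{NOT}}$, $\mathcal{F}_{\text{OR}}$, and $\mathcal{F}_{\text{PURIFY}}$, every CDS debtor bank shares three convenient features: (i) it has no outgoing debt contracts, (ii) its external assets equal the face value of its unique outgoing CDS, and (iii) it never appears as a CDS creditor or reference bank. These are precisely the features required to collapse all such banks into one.

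Concretely, given a \pure\ instance $I=(V,G)$, I would first construct $\mathcal{F}_I$ exactly as in the proof of Theorem \ref{main_theorem}, instantiating one copy of $\mathcal{F}_{\text{NOT}}$, $\mathcal{F}_{\text{OR}}$, or $\mathcal{F}_{\text{PURIFY}}$ per gate and identifying banks $b_\chi$ across gadgets as before. Then I would merge \emph{every} CDS debtor bank appearing in any gadget (banks $2,5,8$ in each $\mathcal{F}_{\text{NOT}}$; banks $2,5,8,11$ in each $\mathcal{F}_{\text{OR}}$; banks $2,5,8,9$ in each $\mathcal{F}_{\text{PURIFY}}$) into a single bank $\mathcal{CCD}$, redirecting all the incoming CDS contracts to have $\mathcal{CCD}$ as debtor and setting $e_{\mathcal{CCD}} := 1 + \sum_{j,R} c_{\mathcal{CCD},j}^R$ (the extra $+1$ ensures the strict inequality in condition (i.) of Definition~\ref{epsilon-crrv}, so that $r_{\mathcal{CCD}}=1$ in every $\epsilon$-CRRV). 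Call the resulting network $\mathcal{F}'_I$.

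Next I would verify the three clauses of Definition~\ref{central_CDS_debtor}: every CDS has $\mathcal{CCD}$ as debtor by construction; $\mathcal{CCD}$ carries no outgoing debt contract because none of the merged banks did; and the external-assets inequality holds by choice of $e_{\mathcal{CCD}}$. I would then argue that the gadget analysis of Section~3 carries over verbatim: because $r_{\mathcal{CCD}}=1$ in any $\epsilon$-CRRV, each CDS with creditor $j$ and reference $R$ produces the payment $(1-r_R)\,c_{\mathcal{CCD},j}^R$, exactly as in the pre-merge network. Consequently Claims~\ref{NOT-CLAIM}, \ref{OR-CLAIM}, \ref{PURIFY-CLAIM} apply unchanged, and the mapping $m_{\delta^{*}}$ with $\delta^{*}=\frac{\sqrt 5 - 1}{8}$ converts any $\epsilon$-CRRV of $\mathcal{F}'_I$ into a satisfying assignment of $I$ for $\epsilon \le \frac{3-\sqrt 5}{16}$.

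The main obstacle I expect is checking that merging preserves \emph{non-degeneracy} (Definition~\ref{def:nondegenerate}) and does not create a CDS with coincident debtor, creditor, and reference. For non-degeneracy, one must confirm that every reference bank still debtor-owns at least one debt contract; inspecting the figures, the reference banks are always input banks $b_u$ or intermediate banks such as $4$ and $7$, each of which retains its outgoing blue arc after merging, and $\mathcal{CCD}$ itself has positive external assets so clause (ii) holds. For the distinctness requirement on CDS triples, I would note that in every gadget the CDS creditors and references are drawn from a bank set disjoint from the CDS debtors, so identifying the latter into a single node cannot produce a forbidden coincidence. Once these sanity checks are in place, the reduction is polynomial in $|I|$ and the hardness bound from Theorem~\ref{main_theorem} is inherited, proving the corollary.
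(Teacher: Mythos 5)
Your proposal matches the paper's own argument: the paper likewise observes that every CDS debtor in the Theorem \ref{main_theorem} construction already satisfies Conditions 2 and 3 of Definition \ref{central_CDS_debtor} and then merges all CDS debtors into a single node $\mathcal{CCD}$ (with external assets the sum of the merged debtors' assets, which equals the sum of the CDS notionals), so the gadget analysis is untouched. Your extra bookkeeping (the $+1$ in $e_{\mathcal{CCD}}$, the non-degeneracy and distinct-triple checks) is a harmless refinement of the same approach.
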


\begin{proof}
It is verified that in the financial networks constructed in the reduction of Theorem \ref{main_theorem}, each CDS debtor satisfies Conditions 2 and 3 of the above definition. We modify the reduction by (as a final step in the construction of the network) merging all CDS debtors into a single debtor node $\mathcal{CCD}$, whose external assets equals the sum of the external assets of merged debtors. The resulting network satisfies \ref{central_CDS_debtor}.
\end{proof}

\subsection{Optimisation-Based Computation of Clearing Vectors}

 On the positive side, we introduce an optimisation-based framework for computing clearing recovery rate vectors, within the context of \emph{central CDS debtors} and the general case. For financial networks that satisfy the \emph{central CDS debtor} property, we present a Mixed-Binary Linear Program (MBLP) which inherently implies an exponential-time algorithm for computing exact clearing recovery rate vectors. Adapting the constraints to general instances, generates a Mixed-Binary Nonlinear Program (MBNLP) for \proportional.
 
\smallskip \noindent \textbf{Mixed-Binary Linear Program for \emph{central CDS debtors}}: Assume a financial network $\mathcal{F} = (N\cup\{\mathcal{CCD}\},e,c)$ that satisfies the \emph{central CDS debtor} property. W.l.o.g assume that each bank in the network apart from $\mathcal{CCD}$ has at least one positive liability\footnote{We make this assumption to avoid some technicalities for the sake of presentation. The program easily adapts to sink nodes.}. We will formulate a Mixed-Binary Linear Program w.r.t $\mathcal{F}$ denoted as $\textbf{MBLP}$($\mathcal{F}$), whose feasible solution set is essentially $\text{Sol}(\mathcal{F})$, i.e, all clearing recovery rate vectors of $\mathcal{F}$. The constraints of the program are formulated w.r.t the variable $z = (r_i,y_i)_{i\in N}$, where $r= (r_i)_{i\in N}$ represents a recovery rate vector with $r_i\in[0,1]$ and $y = (y_i)_{i\in N}$ is a vector of binary decision variables one for each bank $i\in N$ with the following desired indication: For a given recovery rate vector $r$,

\begin{equation}\label{decision variable}
y_i =
\begin{cases}
  0, & \text{if } a_i(r) > l_i \\
  1, & \text{if } a_i(r)< l_i\\
  0 \text{ or } 1 &\text{if } a_i(r) = l_i  
\end{cases}
\end{equation}

Recall that by definition \ref{eq:clearing}, under any clearing vector $r = (r_i)_{i\in N}$, for each bank $i\in N$ it must hold that $r_i = \min(1,a_i(r)/l_i)$, where for the \emph{central CDS debtor} framework,
\begin{equation}\label{assets}
a_i(r) = \Biggl[e_i + \sum_{j\in N}r_j\cdot c_{j,i}+ \sum_{k\in N}(1-r_k)\cdot c_{\mathcal{CCD},i}^k\Biggr]
\end{equation}
\begin{equation}\label{liability}
l_i = \sum_{j\in N}c_{i,j}
\end{equation}
Observe that within networks adhering to the \emph{central CDS debtor} framework, the liabilities of the banks remain fixed, regardless of the recovery rate vector.

The decision vector $y$ serves the purpose of determining for each bank $i\in N$, which argument within the $\min$ operator is the correct choice, whenever the differentiation among the arguments affects the insolvency of bank $i$. Based on such a choice indicated by $y$, for each bank $i\in N$ the program should compute a feasible solution, if a recovery rate vector $r$ that justifies the indicated choice exists. Essentially, $y$ eliminates the need for the $\min$ operator in the fixed-point condition.

Furthermore, we associate each bank $i\in N$, with the following constant 
\begin{equation}\label{beta}
\text{B}_i = \frac{1}{\sum_{j\in N}c_{i,j}}\cdot\Biggl[e_i +\sum_{j\in N}c_{j,i} + \sum_{k\in N}c^k_{\mathcal{CCD},i}\Biggr] + 1.
\end{equation}

Clearly for each bank $i\in N$ and any recovery rate vector $r=(r_i)_{i\in N}$,  $\min(1,a_i(r)/l_i)\leq \text{B}_i$.

\vspace{2mm}
We proceed by presenting a Mixed-Binary Linear Program for computing exact clearing recovery rate vectors for a given financial network $\mathcal{F} = (N\cup\{\mathcal{CCD}\},e,c)$ that satisfies the \emph{central CDS debtor} property. At this stage we exclusively address the feasibility problem of just computing any clearing vector.  Therefore, the program is presented without specifying an objective function, rather we assume any linear function $f(r)$, w.r.t a recovery rate vector $r = (r_i)_{i\in N}$.

\begin{figure}[htb]
  \centering
\begin{align}
\setcounter{equation}{0}
\text{Maximise / Minimise:} \quad & \text{Linear function } f(r) \nonumber\\
\text{Subject to:} \quad & \text{For each } i\in N \nonumber\\
& r_i \geq \frac{a_i(r)}{l_i} - \text{B}_i\cdot (1 - y_i) \tag{constraint 1}\label{con1}\\
& r_i \geq 1 - \text{B}_i\cdot y_i  \tag{constraint 2}\label{con2}\\
& r_i \leq \frac{a_i(r)}{l_i} \tag{constraint 3}\label{con3}\\
& r_i \in [0,1] \tag{constraint 4}\label{con4}\\
& y_i\in\{0,1\} \tag{constraint 5}\label{con5} 
\end{align}
\caption{The Mixed-Binary Linear Program for \emph{central CDS debtors}.}
  \label{MBLP_1}
\end{figure}
  
\begin{theorem}\label{MBLP}
 Assume a financial network $\mathcal{F} = (N\cup\{\mathcal{CCD}\},e,c)$ that satisfies the central CDS debtor property and construct the Mixed-Binary Linear Program in Figure \ref{MBLP_1} w.r.t. $\mathcal{F}$ denoted as $\textbf{MBLP}(\mathcal{F})$.
 \begin{enumerate}
 \item[i)] If $z=(r_i,y_i)_{i\in N}$ is a feasible solution of $\textbf{MBLP}(\mathcal{F})$ then $(r_i)_{i\in N}$ is a clearing vector of $\mathcal{F}$. 
 \item[ii)] If $(r_i)_{i\in N}$ is a clearing vector of $\mathcal{F}$ then there exist a binary vector  $(y_i)_{i\in N}$ s.t $z=(r_i,y_i)_{i\in N}$ is a feasible solution of $\textbf{MBLP}(\mathcal{F})$. 
\end{enumerate}
\end{theorem}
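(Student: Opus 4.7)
The plan is to prove both directions by a case split on the binary decision variable $y_i$ for each bank $i\in N$, showing that in each case the constraints collapse to exactly one of the two alternatives of the $\min$ in the clearing condition \eqref{eq:clearing}. The constant $\text{B}_i$ defined in \eqref{beta} is an upper bound on both $1$ and $a_i(r)/l_i$ for any $r\in[0,1]^n$, so it is large enough to deactivate whichever of the two lower-bound constraints $y_i$ is meant to switch off.

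For direction (i), assume $z=(r_i,y_i)_{i\in N}$ is feasible. If $y_i=0$, constraint \ref{con1} becomes the vacuous $r_i\geq a_i(r)/l_i-\text{B}_i$, while constraint \ref{con2} becomes $r_i\geq 1$; combined with $r_i\in[0,1]$ from \ref{con4}, this gives $r_i=1$. Constraint \ref{con3} then forces $a_i(r)/l_i\geq 1$, so $\min\{1,a_i(r)/l_i\}=1=r_i$. If instead $y_i=1$, constraint \ref{con2} becomes vacuous while \ref{con1} gives $r_i\geq a_i(r)/l_i$, and combined with \ref{con3} this yields $r_i=a_i(r)/l_i$; since $r_i\leq 1$ from \ref{con4}, we have $a_i(r)/l_i\leq 1$, so again $\min\{1,a_i(r)/l_i\}=a_i(r)/l_i=r_i$. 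In either case the fixed-point equation \eqref{eq:clearing} holds, so $(r_i)_{i\in N}$ is clearing.

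For direction (ii), given a clearing vector $(r_i)_{i\in N}$, define $y_i=0$ whenever $a_i(r)\geq l_i$ (equivalently $r_i=1$) and $y_i=1$ otherwise (in which case $r_i=a_i(r)/l_i<1$). Ties at $a_i(r)=l_i$ can be broken either way. It remains to verify the five constraints in each case, which is routine: when $y_i=0$, constraint \ref{con2} reduces to $r_i\geq 1$ which holds with equality, \ref{con3} holds since $a_i(r)/l_i\geq 1=r_i$, and \ref{con1} holds because $r_i=1\geq a_i(r)/l_i-\text{B}_i$ by the choice of $\text{B}_i$; when $y_i=1$, constraints \ref{con1} and \ref{con3} both become equalities at $r_i=a_i(r)/l_i$, while \ref{con2} reduces to $r_i\geq 1-\text{B}_i$, which holds trivially.

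I do not expect any significant obstacle; the content is essentially verifying that the standard big-$\text{M}$ encoding of a $\min$ with an indicator variable works here with $\text{B}_i$ playing the role of $\text{M}$. The one point requiring mild care is confirming that $\text{B}_i$ as defined in \eqref{beta} actually dominates both $1$ and the largest possible value of $a_i(r)/l_i$ over $r\in[0,1]^n$, which follows by bounding $a_i(r)$ by $e_i+\sum_{j}c_{j,i}+\sum_{k}c^k_{\mathcal{CCD},i}$ and dividing by $l_i=\sum_j c_{i,j}>0$ (using the WLOG assumption that every non-$\mathcal{CCD}$ bank has positive total liability). The fact that liabilities $l_i$ are constants independent of $r$ in the central CDS debtor setting is what keeps all constraints linear in the variables $(r,y)$.
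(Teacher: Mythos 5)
Your proof is correct and mirrors the paper's argument almost exactly: both directions proceed by the same case split on $y_i\in\{0,1\}$, using $\text{B}_i$ as a big-$\text{M}$ constant to deactivate the inactive lower-bound constraint, and recovering $r_i=\min\{1,a_i(r)/l_i\}$ (or, conversely, choosing $y_i$ according to which argument of the $\min$ is attained). Your parenthetical observation that $\text{B}_i$ in fact dominates $a_i(r)/l_i$ itself (not merely the $\min$) for all $r\in[0,1]^n$ is a slightly more precise statement than the one the paper records, and it is the property actually needed to make constraint~\ref{con1} vacuous when $y_i=0$; otherwise the two proofs are the same.
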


 \begin{proof}
Assume the $\textbf{MBLP}(\mathcal{F}$) of Figure \ref{MBLP_1}. The program's constraints contain a mixture of real valued variables $r_i\in[0,1]$ and binary valued decision variables $y_i\in \{0,1\}$ for each bank $i\in N$. The linearity of all constraints w.r.t $(r_i)_{i\in N}$ and $(y_i)_{i\in N}$ is justified from equations \ref{assets}, \ref{liability} and \ref{beta}.

\emph{i)} Let $z = (r_i,y_i)_{i\in N}$ be a feasible solution to \ref{MBLP_1} and w.l.o.g fix an index $\kappa$. If  $y_{\kappa} = 0$ then from \ref{con2} and \ref{con4} we get that $r_{\kappa} = 1 $, while subsequently from \ref{con3} it holds that $1\leq a_{\kappa}(r)/l_{\kappa}$. Note that \ref{con1} is trivially satisfied due to the choice of B$_{\kappa}$. If $y_{\kappa} = 1$, then from \ref{con1} and \ref{con3} it holds that $r_{\kappa} = a_{\kappa}(r)/l_{\kappa}$, while combining the latter with \ref{con4} implies that $a_{\kappa(r)}/l_{\kappa}\leq 1$. Note that \ref{con2} is trivially satisfied due to the choice of B$_{\kappa}$. Consequently for both possible values of the binary variable $y_{\kappa}$, we showed that the satisfied constraints imply that $r_{\kappa} = \min(1,a_{\kappa}(r)/l_{\kappa})$, thus establishing that the fixed point condition of definition \ref{eq:clearing} is satisfied by $(r_i)_{i\in N}$. Therefore if $z = (r_i, y_i)_{i \in N}$ is a feasible solution of $\textbf{MBLP}({\mathcal{F})}$ then $(r_i)_{i \in N}$ is a clearing vector of $\mathcal{F}$.

\emph{ii)} Let $(r_i)_{i\in N}$ be a clearing vector of $\mathcal{F}$ and w.l.o.g fix an index $\kappa$. We will prove that the sole arrangement for $y_{\kappa}$ that satisfies the constraints for $\kappa$, corresponds to the configuration implied by \ref{decision variable} w.r.t $(r_i)_{i\in N}$. Given that $(r_i)_{i\in N}$ is assumed to be a clearing vector, from definition \ref{eq:clearing}, it must hold that $r_{\kappa} = \min(1,a_{\kappa}(r)/l_{\kappa})$. If $\min(1,a_{\kappa}(r)/l_{\kappa}) = 1$, then $r_{\kappa} = 1$. By setting $y_{\kappa} = 0$, both \ref{con2} and \ref{con4} are satisfied, while by assumption $1\leq a_{\kappa}(r)/l_{\kappa}$ thus \ref{con3} is also satisfied. Trivially \ref{con1} is satisfied under this choice of $y_{\kappa}$ due to B$_{\kappa}$. If $\min(1,a_{\kappa}(r)/l_{\kappa}) = a_{\kappa}(r)/l_{\kappa}$, then $r_{\kappa} = a_{\kappa}(r)/l_{\kappa}$. Consequently by setting $y_{\kappa} = 1$, both \ref{con1} and \ref{con3} are satisfied, while \ref{con4} is also satisfied since by assumption $a_{\kappa}(r)/l_{\kappa}\leq 1$. Trivially \ref{con2} is satisfied under this choice for $y_{\kappa}$. In summary we showed that for a given clearing vector $(r_i)_{i\in N}$ of $\mathcal{F}$, setting $(y_i)_{i\in N}$ according to configuration \ref{decision variable} constitutes $z = (r_i,y_i)_{i\in N}$ a feasible solution for $\textbf{MBLP}(\mathcal{F})$.
\end{proof}

If we could determine which configurations of the binary decision variable vector $y =  (y_i)_{i\in N}$ can generate a clearing vector $r = (r_i)_{i\in N}$, then computing $r$ simply comes down to solving a Linear Program. As a result, a direct consequence of the proposed Mixed-Binary Linear Program is an exponential-time algorithm for computing clearing recovery rate vectors for networks meeting the \emph{central CDS debtor} property.

\begin{theorem}\label{exp_alg}
\proportional\ restricted to instances that satisfy the central CDS debtor property  admits an exponential time algorithm.
\end{theorem}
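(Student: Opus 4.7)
The plan is to leverage Theorem \ref{MBLP}, which establishes that the clearing recovery rate vectors of $\mathcal{F}$ coincide exactly with the projections on the $r$-coordinates of feasible solutions of $\textbf{MBLP}(\mathcal{F})$. The only non-continuous aspect of $\textbf{MBLP}(\mathcal{F})$ lies in the binary decision variables $(y_i)_{i\in N}$, so the natural strategy is brute-force enumeration over their $2^n$ possible configurations, combined with a polynomial-time LP solver applied to each resulting linear subproblem.

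Concretely, I would proceed as follows. First, enumerate all $2^n$ assignments of values to $\bar{y} \in \{0,1\}^n$. For each such fixed assignment, substitute $\bar{y}$ into constraints \ref{con1}--\ref{con4}. Because the network satisfies the central CDS debtor property, the expression $a_i(r)$ in \ref{assets} is affine in $r$ and, crucially, $l_i$ as given by \ref{liability} does not depend on $r$. The constants $\text{B}_i$ defined in \ref{beta} are likewise fixed rationals. Hence, after substitution, the program reduces to a pure Linear Program in the variables $(r_i)_{i\in N}$ with rational coefficients whose encoding length is polynomial in $|\mathcal{F}|$. This LP can be solved (or its infeasibility certified) in time $\text{poly}(|\mathcal{F}|)$ using a standard polynomial-time LP algorithm such as the ellipsoid method or an interior-point method.

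Next, I would check the feasibility of each of the $2^n$ LPs obtained in this manner. By part (i) of Theorem \ref{MBLP}, any feasible solution $(r_i)_{i\in N}$ extracted in this way is a clearing recovery rate vector of $\mathcal{F}$, so we may return it and halt. Conversely, since \proportional\ is a total search problem, a clearing vector $(r_i)_{i\in N}$ is guaranteed to exist; by part (ii) of Theorem \ref{MBLP}, it is witnessed by at least one binary configuration $\bar{y}$ (namely the one prescribed by \ref{decision variable}), so the enumeration is guaranteed to succeed on some branch.

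The total running time is $2^n \cdot \text{poly}(|\mathcal{F}|)$, which is exponential only in the number of banks $n$ and otherwise polynomial in the bit-length of the network parameters $e$ and $c$, as recorded in Remark \ref{expremark}. There is no real obstacle in the argument; the only point to verify carefully is that, once $\bar{y}$ is fixed, all constraints genuinely become linear with rational data of polynomially bounded encoding size, which is immediate from the central CDS debtor structure and from the explicit form of $\text{B}_i$. Correctness and termination then follow directly from Theorem \ref{MBLP}.
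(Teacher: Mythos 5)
Your proposal is correct and follows essentially the same route as the paper: iterate over all $2^{\lvert N\rvert}$ configurations of the binary vector $y$, solve the resulting Linear Program $\textbf{LP}(\mathcal{F}\lvert y)$ in polynomial time for each, and return the first feasible point, with correctness and termination guaranteed by Theorem \ref{MBLP} and the totality of \proportional. The running-time bound $2^{\lvert N\rvert}\cdot \mathcal{O}(poly\lvert\mathcal{F}\rvert)$ you state matches the paper's analysis.
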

\begin{proof}
Consider a financial network $\mathcal{F} = (N\cup\{\mathcal{CCD}\},e,c)$, that satisfies the \emph{central CDS debtor} property and construct  the \textbf{MBLP}($\mathcal{F}$) as indicated in Figure \ref{MBLP_1}. Based on the structure of \textbf{MBLP}($\mathcal{F}$) and the preceding  discussion that states the linearity of the constraints, it is evident that when fixed values for the binary decision variable vector $y$ are introduced, the formulation of Figure \ref{MBLP_1} transforms into a Linear Program w.r.t to the real variable vector $r = (r_i)_{i\in N}$. We denote the generated LP for a fixed vector $y$ as \textbf{LP}$(\mathcal{F}\lvert y$). The algorithm iterates over all configurations of $y \in \{0,1\}^{\lvert N \vert}$ one at a time and invokes any polynomial time algorithm designed for solving linear programs as a subroutine for solving \textbf{LP}($\mathcal{F}\lvert y$). In case \textbf{LP}($\mathcal{F}\lvert y$) is feasible the subroutine algorithm will compute and return a clearing recovery rate vector $r = (r_i)_{i\in N}$ and the algorithm will terminate. In case \textbf{LP}($\mathcal{F}\lvert y$) is infeasible the subroutine algorithm returns nothing, the algorithm considers a new unprocessed configuration for $y$ and repeats the execution of the steps described so far. The algorithm is illustrated below. 

\begin{algorithm}[h]
   \caption{An exponential time algorithm for computing a clearing vector for \emph{central CDS debtors}.\label{exptimealgo}}
    \begin{algorithmic}[1]
        \STATE  Let $\mathcal{F} = (N\cup \{\mathcal{CCD}\},e,c)$ be the input network that satisfies the \emph{central CDS debtor} property.
        \STATE Construct the Mixed-Binary Linear Program for $\mathcal{F}$ as described in Figure \ref{MBLP_1}.
        \FOR{$y \in \{0,1\}^{\mid N\mid}$}
        \IF{ LP($\mathcal{F}\lvert y$) is feasible }
        \STATE Return the feasible point.
        \STATE Break.
        \ENDIF
        \ENDFOR
    \end{algorithmic}
\end{algorithm}

As \proportional\ is a total search problem the existence of a clearing recovery rate vector is guaranteed. Therefore Algorithm \ref{exptimealgo} is guaranteed to output a clearing recovery rate vector. In the worst case the algorithm would have to execute a polynomial time subroutine on all $2^{\lvert N \lvert}$ possible configurations for vector $y$.
This implies a running time of $\mathcal{O}(poly\lvert\mathcal{F}\lvert)\cdot2^{\lvert N\lvert}$, where $\vert \mathcal{F}\lvert$ is the bit length of the input.
\end{proof}
\begin{remark}\label{expremark}
We intentionally avoided specifying a particular objective function. This deliberate choice was made to show the framework's versatility in optimising over a wide range of linear objectives tied to the clearing vector, which results, in an algorithmic scheme for computing the clearing vector, that effectively optimises any given linear objective function. Many concepts on proposed objective functions of interest have been addressed in the literature, but primarily from a computational hardness standpoint \cite{papp2022default,DBLP:journals/tcs/IoannidisKV23}.
In \cite{papp2022default}, the authors address ambiguity of CRRVs by highlighting how the multiplicity of CRRVs gives rise to optimisation problems whereby one attempts to select the appropriate clearing vector that optimises a desired linear objective. They prove a set of \textsf{NP}-hardness results regarding the choice of the clearing vector $r$ that can satisfy certain objectives expressed as linear functions in $r$. The reductions therein are based on high capitalised CDS debtors\footnote{CDS debtors with no debt contracts and $\infty$ external assets.} that can be merged into a single central CDS debtor. As an application, our algorithm could be used to resolve this ambiguity for linear objective functions when central CDS debtors are present.
 
Moreover, despite the exponential time complexity of the algorithm, the procedure is exponential solely in the number of banks in the network: The size of the coefficients in the input (i.e., contract notionals and external assets) have no impact on the exponent.

\end{remark}
\smallskip \noindent \textbf{Mixed-Binary Nonlinear Program for \proportional\ }:  At this point we mention that adapting  the assets, liabilities, the constants $(\text{B}_i)_{i\in N}$ and the binary decision variable $y$ to the expressions
\begin{equation*}\label{assets_2}
a_i(r) = \Biggl[e_i + \sum_{j\in N} r_j\cdot c_{j,i}+ \sum_{j,k\in N}r_j\cdot(1-r_k)\cdot c_{j,i}^k\Biggr]
\end{equation*}
\begin{equation*}\label{liability_2}
l_i(r) = \sum_{j,k\in N}(1-r_k)\cdot c_{i,j} +\sum_{j\in N}c_{i,j}
\end{equation*}
\begin{equation*}\label{newB}
\text{B}_i =\frac{1}{\sum_{j\in N}c_{i,j}}\cdot\Biggl[ e_i +\sum_{j\in N}c_{j,i} + \sum_{j,k\in N}c^k_{j,i}\Biggr] + 1
\end{equation*}
we form a Mixed-Binary Nonlinear Program denoted as \textbf{MBNLP}($\mathcal{F}$) tailored for general instances $\mathcal{F}$ of \proportional\ .
 
\begin{remark}
Constructing heuristic methods for \proportional\ has not been addressed in the literature. The construction of the Mixed-Binary Nonlinear Program for general instances of \proportional\ suggests heuristic-based approaches and originates the empirical study of the problem. 
\end{remark}

\subsection{Positive results for central CDS debtors with covered CDSes}
Section 4.1 identifies two severe restrictions of $\epsilon$-\proportional\ under which the inapproximability result of Theorem \ref{main_theorem} remains to hold. Nonetheless, we would like to identify non-trivial and important special cases of \proportional\ whose solutions admit efficient algorithms. It turns out that this is achieved when we insist on the use of \emph{covered} CDSes  (a notion defined in \cite{schuldenzucker2016clearing}) on top of central CDS debtors~\ref{central_CDS_debtor}. A CDS is said to be covered if there exists a debt contract from the reference bank to the creditor of the CDS, with a notional that exceeds the CDS notional. 
\begin{definition}[Covered CDS]\label{covered_cds} A credit default swap $(i,j,R)$ is \emph{covered} iff $c_{i,j}^R \leq c_{R,j}$.
\end{definition}

Covered CDSes form a common occurence in practice, as using a CDS in this way provides a form of insurance against a debt contract with an insolvent debtor. We provide a polynomial-time algorithm for computing clearing vectors for networks with the \emph{central CDS debtor} property and \emph{covered CDS contracts}. 

\begin{theorem}\label{positive}
\proportional\ restricted to instances that only contain covered CDSes and satisfy the central CDS debtor property admits a polynomial time algorithm.
\end{theorem}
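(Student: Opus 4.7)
The plan is to reduce $\proportional$ on these instances to the classical clearing problem on debt-only networks, which admits the polynomial-time Eisenberg--Noe algorithm \cite{eisenberg2001systemic}. Given an input $\mathcal{F} = (N \cup \{\mathcal{CCD}\}, e, c)$ satisfying both restrictions, I would build a debt-only network $\tilde{\mathcal{F}} = (N \cup \{\mathcal{CCD}\}, \tilde e, \tilde c)$ by ``pre-paying'' every CDS using $\mathcal{CCD}$'s external capital. Concretely, for each $i \in N$ set
\[
\tilde e_i \;=\; e_i + \sum_{k \in N} c_{\mathcal{CCD},i}^{k}, \qquad \tilde e_{\mathcal{CCD}} \;=\; e_{\mathcal{CCD}} - \sum_{j,k \in N} c_{\mathcal{CCD},j}^{k},
\]
define new debt notionals $\tilde c_{j,i} = c_{j,i} - c_{\mathcal{CCD},i}^{j}$ (treating absent contracts as zero), and include no CDSes. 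A short sanity check then confirms $\tilde{\mathcal{F}}$ is a well-formed financial network: $\tilde c_{j,i} \geq 0$ is exactly the covered condition of Definition \ref{covered_cds}, and $\tilde e_{\mathcal{CCD}} \geq 0$ is condition (3) of Definition \ref{central_CDS_debtor}.

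The core step is to show that the set of clearing vectors of $\mathcal{F}$ coincides with that of $\tilde{\mathcal{F}}$. For any $i \neq \mathcal{CCD}$, since $i$ is not a CDS debtor in $\mathcal{F}$, its liabilities are a constant equal in both networks: $l_i(r) = \sum_j c_{i,j} = \tilde l_i(r)$. For assets, one expands $\sum_k (1-r_k) c_{\mathcal{CCD},i}^{k} = \sum_k c_{\mathcal{CCD},i}^{k} - \sum_k r_k c_{\mathcal{CCD},i}^{k}$, absorbs the first sum into $\tilde e_i$, and, re-indexing $k$ as a debtor $j$, combines the second with $\sum_j r_j c_{j,i}$ to obtain $\sum_j r_j(c_{j,i} - c_{\mathcal{CCD},i}^{j}) = \sum_j r_j \tilde c_{j,i}$. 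This yields $a_i(r) = \tilde a_i(r)$, so the fixed-point equation \eqref{eq:clearing} is identical in the two networks at every bank $i \neq \mathcal{CCD}$. For $\mathcal{CCD}$ itself, condition (2) of Definition \ref{central_CDS_debtor} and the removal of CDSes force $l_{\mathcal{CCD}} = \tilde l_{\mathcal{CCD}} = 0$, so $r_{\mathcal{CCD}} = 1$ in both networks regardless of $\tilde e_{\mathcal{CCD}}$.

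Once this equivalence is proved, the theorem follows by running the Eisenberg--Noe fictitious default algorithm on $\tilde{\mathcal{F}}$ in polynomial time and returning its output as a clearing vector for $\mathcal{F}$. The principal technical obstacle is the asset identity $a_i(r) = \tilde a_i(r)$: the re-indexing that converts the CDS summation over reference banks into a sum over debtors must be matched carefully against the covered condition $c_{\mathcal{CCD},i}^{j} \leq c_{j,i}$ to ensure that all transferred notionals stay non-negative. Once that identity is in hand, the rest of the argument is essentially bookkeeping, and the polynomial running time is inherited from \cite{eisenberg2001systemic}.
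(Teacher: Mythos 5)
There is a genuine gap in your proof, and it lies exactly where you flagged ``bookkeeping.'' Your transformation lowers the notional of the debt contract $(R,j)$ from $c_{R,j}$ to $\tilde c_{R,j} = c_{R,j} - c_{\mathcal{CCD},j}^{R}$, but does not compensate $R$ with an offsetting outgoing liability. Consequently, for any bank $R$ that is the reference of some covered CDS, the total liability drops:
\[
\tilde l_R \;=\; \sum_j \tilde c_{R,j} \;=\; \sum_j c_{R,j} - \sum_j c_{\mathcal{CCD},j}^{R} \;<\; \sum_j c_{R,j} \;=\; l_R.
\]
Your claim ``$l_i(r) = \sum_j c_{i,j} = \tilde l_i(r)$'' is therefore false for reference banks. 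Since the recovery rate of $R$ is $\min\{1, a_R(r)/l_R\}$, an unchanged numerator divided by a strictly smaller denominator generically gives a \emph{different} recovery rate for $R$, and that error propagates to every downstream bank through $r_R$. A two-line example makes this concrete: take $e_R = 0$, $a_R = 1$, one covered CDS with $c_{R,j} = 2$ and $c_{\mathcal{CCD},j}^{R} = 1$; then $r_R = 1/2$ in $\mathcal{F}$ but $\tilde r_R = 1$ in $\tilde{\mathcal{F}}$, so the clearing vectors of the two networks do not coincide.

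The paper's construction exists precisely to avoid this. Instead of shrinking $R$'s liability outright, it re-routes the ``covered'' portion $x = c_{\mathcal{CCD},j}^{R}$ to a fresh sink node $\text{dummy}_{(R,j)}$ via a new debt contract of notional $x$, so that $R$'s \emph{total} outgoing liability ($k + x = c_{R,j}$) is preserved, while $j$'s incoming exposure is redistributed between $R$'s residual debt $k$ and $j$'s new external assets $x$. It is this invariance of $l_R$ across the transformation that lets the fixed-point equation for $R$ carry over unchanged, and it is the step your construction is missing. The rest of your argument (the asset identity via re-indexing, $r_{\mathcal{CCD}} = 1$, non-negativity from the covered condition, and calling Eisenberg--Noe at the end) is sound and in line with the paper; you would need to add the dummy-node liability-preservation device (or an equivalent mechanism) to repair the proof.
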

To prove this, first we perform a transformation step on each credit default swap in the given network, which locally modify the network. Repeating this process on every CDS contract results in a financial network consisting only of debt contracts, which we can then solve through the polynomial-time algorithm presented in \cite{eisenberg2001systemic} that computes an exact \emph{CRRV}. 

\smallskip \noindent \textbf{Network transformation.} 
Let $\mathcal{F} = (N\cup \mathcal{CCD},e,c)$ be a financial network that satisfies the \emph{central CDS debtor} property where all CDS contracts are \emph{covered}. Consider a CDS contract $(\mathcal{CCD},j,R)$ in the network, let $x = c_{\mathcal{CCD},j}^R$, let $y = c_{R,j}$ and let $k$ be such that $y = x + k$. A \emph{network transformation step} on $(\mathcal{CCD},j,R)$ consists of the following consecutive operations.
\begin{enumerate}[topsep=0pt, noitemsep]
    \item Update the external assets of $j$ to  $e^*_j = e_j + x$;
    \item Decrease the contract notional of the debt contract $(R,j)$ to $c^*_{R,j} = k$;
    \item Add a dummy node, which we call $\text{dummy}_{(R,j)}$;
    \item Add a debt contract $(R,\text{dummy}_{(R,j)})$ with contract notional $c_{R,\text{dummy}_{(R,j)}} = x$;
    \item Erase $(\mathcal{CCD},j,R)$. 
\end{enumerate}
Figure \ref{COVERED_CCP} illustrates this transformation step.

\begin{figure}[htbp!]
\centering
\scalebox{0.8}   
    {\begin{tikzpicture}
[shorten >=1pt,node distance=2cm,initial text=]
\tikzstyle{every state}=[draw=black!50,very thick]
\tikzset{every state/.style={minimum size=0pt}}
\tikzstyle{accepting}=[accepting by arrow]
\node(1){$R$};  
\node(2)[below left of =1,yshift=-5mm]{$\mathcal{CCD}$}; 
\node(3)[below right of =1,yshift=-5mm]{$j$}; 
\draw[orange,->,very thick,snake=snake] (2)--node[midway,black,yshift=-2mm]{$x$}(3);
\draw[blue,->,very thick] (1)--node[midway,black,xshift=1cm,yshift=2mm]{$y = x + k$}(3);
\path [orange,->,draw,dashed,thick] (1) -- ($ (2) !.5! (3) $);
\node[teal,below of=3,yshift=1.5cm]{$e_j$};

\end{tikzpicture}
\qquad\tikz[baseline=-\baselineskip]\draw[ultra thick,->,yshift=0.8cm] (1,0) -- ++ (1,0);
\qquad
{
\begin{tikzpicture}
[shorten >=1pt,node distance=2cm,initial text=]
\tikzstyle{every state}=[draw=black!50,very thick]
\tikzset{every state/.style={minimum size=0pt}}
\tikzstyle{accepting}=[accepting by arrow]
\node(1){$R$};  
\node(4)[right of=1]{$\text{dummy}_{(R,j)}$};
\node(3)[below right of =1,yshift=-5mm]{$j$}; 
\draw[blue,->,very thick] (1)--node[midway,black,yshift=2mm]{$x$}(4);
\node[teal,below of=3,yshift=1.5cm]{$e^*_{j} = e_i + x$};
\node(5)[left of =3]{$\mathcal{CCD}$};
\draw[blue,->,very thick](1)--node[midway,black,xshift=4mm]{$k$}(3);
\end{tikzpicture}}}
\caption{The reconfiguration of the dynamics after the removal of the covered CDS.}
    \label{COVERED_CCP}
\end{figure}
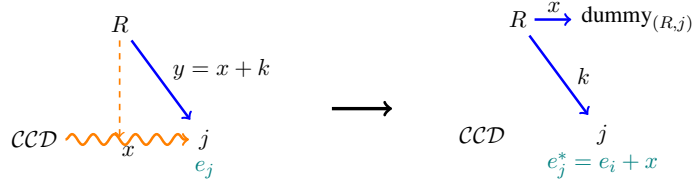
Let $\mathcal{F}$ be a financial network with the central CDS debtor property and consider the execution of a single transformation step on a covered CDS  $(\mathcal{CCD},j,R)$ of $\mathcal{F}$. Let $r$ be a CRRV of $\mathcal{F}$, and let $\mathcal{F'}$ be the financial network after the transformation step. We define $r'$ as the recovery rate vector of $\mathcal{F'}$ that is obtained from $r$ by letting $r'_{\text{dummy}(R,j)} = 1$ and letting $r'$ coincide with $r$ for all other banks. Note that, by construction, the total liabilities of $R$ are equal in $\mathcal{F}$ and $\mathcal{F}'$, and $j$'s assets in $\mathcal{F}$ under $r$ are equal to $j'$'s assets in $\mathcal{F}'$ under $r'$. Hence, $r'$ is a CRRV for $\mathcal{F}'$ under which every node's assets and liabilities remain unaffected with respect to $\mathcal{F}$ under $r$. 

Thus, for a financial network that has only covered CDSes and satisfies the central CDS debtor property, it is possible to obtain an equivalent financial network without CDSes by repeatedly executing the above transformation step on each of the network's CDSes. The resulting CDS-free network then has essentially the same clearing recovery rate vectors as the original network, where the only difference is that the newly introduced dummy nodes always get assigned a recovery rate of $1$.

A CRRV for the resulting CDS-free network can be found using e.g. the polynomial time algorithm of \cite{eisenberg2001systemic}, hence by throwing away from that CRRV the coordinates of the introduced dummy nodes, we obtain a CRRV for the original network in polynomial time. 
This procedure is summarised as Algorithm \ref{polytimealgo} below.

\begin{algorithm}[h]
   \caption{A polynomial time algorithm for computing a CRRVs with central CDS debtors  and covered CDSes.\label{polytimealgo}}
    \begin{algorithmic}[1]
        \STATE  Let $\mathcal{F} = (N\cup \{\mathcal{CCD}\},e,c)$ with $N = [n]$ be the input network with only covered CDSes, and the central CDS debtor property.
        \FOR{$j$ = $1$ to $[n]$}
        \FOR{$R$ = $1$ to $[n]$}
        \IF{ $c_{\mathcal{CCD},j}^R\neq 0$}
        \STATE $e_j = e_j + c_{\mathcal{CCD},j}^R$
        \STATE $c_{R,j} = c_{R,j} - c_{\mathcal{CCD},j}^R $
        \STATE $N = N \cup \text{dummy}_{R,j}$
        \STATE $c_{R,\text{dummy}_{R,j}} = c_{\mathcal{CCD},j}^R $
        \STATE $c_{\mathcal{CCD},j}^R = 0$
        \ENDIF
        \ENDFOR
        \ENDFOR
         \STATE Run the polynomial time algorithm of \cite{eisenberg2001systemic} to obtain a CRRV $r$, and return $r$ restricted to $N \cup \{\mathcal{CCD}\}$.
    \end{algorithmic}
\end{algorithm}

 From Lines 2 through 12 the algorithm repeatedly executes the network transformation step, consisting of Operations 1 through 5, on all CDS contracts of the form $(\mathcal{CCD},j,R)$ of the input financial network. This results in a network $\mathcal{F}'$ that is composed entirely of debt contracts. The final line of the algorithm from Line 13 runs the polynomial time algorithm of \cite{eisenberg2001systemic} on $\mathcal{F}'$ to compute the clearing recovery rate vector, and throws away the coordinates corresponding to the introduced dummy nodes.

Lastly, we note that this polynomial time procedure also works correctly under a generalised version of the \emph{central CDS debtor} property, where we allow multiple banks to be debtors of CDSes, but still require that these CDSes have an amount of external assets exceeding the sum of the notionals, akin to Conditions 2 and 3 of Definition \ref{central_CDS_debtor}. This reflects a setting where banks are risk averse and refuse (or are only authorised) to act as debtor in a CDS when it is certain that all liabilities resulting from these CDSes can be paid off through their assets a-priori.

\section{Discussion and Future work}
We consider this work fundamental for future theoretical study on connections among financial networks and computation, as well as for empirical and heuristic-based methods oriented towards financial applications.
Numerous open questions and potential avenues for future research pop up. Whether the bound of Theorem \ref{main_theorem} is tight is the main open question. Our intuition is that for higher bounds the financial gate analysis should induce less errors. This is achieved by reducing the number of CDS contracts in the gates, which in our opinion is challenging if not impossible.
\begin{conjecture}
Theorem \ref{main_theorem} is tight for central CDS debtor instances.
\end{conjecture}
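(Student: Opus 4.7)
The conjecture asks for a matching upper bound to Corollary \ref{CCD_theorem}: a polynomial-time algorithm for $\epsilon$-\proportional\ on central CDS debtor instances whenever $\epsilon > \epsilon^\star := \frac{3-\sqrt{5}}{16}$. The plan is to establish such an algorithm in three parts, leveraging the MBLP formulation of Theorem \ref{MBLP}.

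First, I would establish that $\epsilon^\star$ is genuinely a structural threshold for \emph{Pure-Circuit}-style reductions into central CDS debtor instances. The bound emerges from the single tight equation $\frac{1+8\delta}{2\delta}\epsilon = \frac{1}{2}-\delta$, which is driven by the OR-gate's output error $\frac{1+8\delta}{2\delta}\epsilon$ (the worst among the three gates). I would show that \emph{any} mapping $m\colon[0,1]\to\{0,1,\perp\}$ realising a central-CDS-debtor simulation of NOT, OR, and PURIFY must satisfy an inequality of this form, so that $\epsilon^\star$ is the best achievable under this methodology. This formalises the authors' intuition that reducing the number of CDS contracts per gate is structurally impossible under the central-CDS constraint.

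Second, I would exploit the fact that in a central CDS debtor network $r_{\mathcal{CCD}}=1$ in every (approximate) CRRV, so the assets $a_i(r)$ become \emph{linear} in $(r_j)_{j\in N}$. Starting from the MBLP of Figure \ref{MBLP_1}, the natural candidate algorithm relaxes the decision variables to $y_i \in [0,1]$, solves the resulting LP, and rounds $y_i \mapsto \lfloor y_i + 1/2 \rfloor$ before attempting to restore feasibility. An equivalent iterative view is to run an Eisenberg-Noe-style fictitious-default scheme starting from the all-ones vector, maintaining for each bank an interval containing its rate; once every interval has width at most $2\epsilon$ any midpoint is an $\epsilon$-CRRV. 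The hope is that the extra slack $\epsilon - \epsilon^\star$ is always sufficient to collapse these intervals within polynomially many iterations.

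The main obstacle lies in the third part: controlling cumulative error across the CDS dependency DAG. The hardness construction of Theorem \ref{main_theorem} is tuned so that errors precisely saturate the garbage zone at $\epsilon=\epsilon^\star$; for $\epsilon$ only marginally above $\epsilon^\star$, deep reference chains may still blow up a naive Lipschitz bound on $f_I$. The hardest step will be to show that the central CDS debtor property, by forcing $\mathcal{CCD}$ to be the unique intermediary between CDS creditors and reference banks, flattens the effective depth of the dependency graph and allows a combinatorial argument ruling out such cascades. If this argument fails in full generality, a plausible fallback is to prove tightness only for bounded-depth networks, or to refine the algorithmic threshold to $\epsilon^\star + o(1)$; either outcome would already substantially clarify the precise location of the complexity boundary and guide where the conjecture itself might need adjustment.
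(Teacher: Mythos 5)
The statement you are addressing is a \emph{conjecture}: the paper poses it as open in the discussion section and gives no proof. There is therefore no reference argument to compare against, and your submission should be read as a research plan rather than a candidate proof. With that framing in mind, the plan identifies the right objects (the MBLP of Theorem~\ref{MBLP}, the linearity gained from $r_{\mathcal{CCD}}=1$, the fact that tightness requires a matching algorithm above $\epsilon^\star$), but it does not prove the conjecture and contains one logical misstep worth flagging.

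\textbf{Part 1 does not contribute to tightness.} Proving that every $m_\delta$-style, Pure-Circuit-based simulation of NOT/OR/PURIFY in central CDS debtor networks saturates at $\epsilon^\star$ only shows that this one reduction technique cannot be pushed further. It says nothing about whether some \emph{other} reduction establishes hardness above $\epsilon^\star$, nor — and this is the actual content of ``tight'' — whether a polynomial-time algorithm exists for $\epsilon>\epsilon^\star$. An optimality argument over a family of reductions is not an upper bound on the problem; conflating the two would render the plan unsound even if everything else were filled in.

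\textbf{Parts 2 and 3 have no correctness argument.} Relaxing the binary variables $y_i$ to $[0,1]$, solving the LP, and rounding is a heuristic: in general the rounded point need not be feasible for the MBLP, and ``attempting to restore feasibility'' is not an algorithm. The fictitious-default analogue inherits the same problem; the monotonicity that makes Eisenberg--Noe converge in the debt-only case relies on a lattice structure of clearing vectors that provably fails once CDSes are present (otherwise the problem would not be \textsf{PPAD}-hard), so there is no a priori reason the intervals you maintain shrink at all, let alone polynomially fast. You acknowledge the resulting error-propagation obstacle in Part 3, and this acknowledgment is where the proposal ends — it is precisely the missing idea that would have to be supplied. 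As written, the plan establishes neither the algorithmic side nor a principled reason to expect that $\epsilon^\star$ is the genuine threshold rather than an artefact of the particular gadgets in Theorem~\ref{main_theorem}.
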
The programs presented in section 4.2 are a solid base for stronger positive results. A compelling challenge is to devise a polynomial-time approximation algorithm for $\epsilon$-\proportional\ and gain intuition on how close to the inapproximability parameter we can get. In light of the community's growing interest in exploring the smoothed complexity of problems associated with the classes \textsf{PLS} but also \textsf{PPAD}, another research direction involves studying the smooth complexity of \proportional. An underline connection could possibly be drawn between the program presented in 4.2 and the smoothed analysis of integer programming \cite{DBLP:journals/mp/RoglinV07}.

In absence of pure algorithmic methods, the empirical investigation of the problem can be initiated by referring to Section 4.2. This could contribute in clearing mechanisms and heuristic methods for monitoring systemic risk in complex networks as well as providing a better understanding of the problem itself. An interesting challenge would be devising machine learning based algorithms or methods to tackle the problem. Clearing mechanisms that utilise reinforcement learning algorithms trained for this purpose would not only pique the interest of the financial community but also garner attention from the theoretical machine learning community.

\bibliographystyle{alpha}
\bibliography{bibl.bib}

\appendix
\section{Example of a financial network}\label{apx:A}

The financial system of Figure \ref{fig:1} consists of six banks, $N = \{1,2,3,4,5,6\}$. Banks 2 and 5 have external assets $e_2 = e_5 = 1-c$, for some constant $c \in (0,1)$,  
while all other banks have zero external assets. The set of debt contracts is $\mathcal{DC} = \{(2,3),(5,4)\}$ and the set of CDS contracts is $\mathcal{CDS} = \{(2,1,5),(5,6,2)\}$. All contract notionals are set to 1. For example, $c_{2,3} = c_{2,1}^5 =1$.

\noindent{\bf{Clearing recovery rate computation.}} By \eqref{eq:clearing} it must be $r_2 =  \min \left\{1,a_2(r)/l_2(r) \right\}$ and $r_5 =  \min \left\{1,a_5(r)/l_5(r) \right\}$. For node 2 its liability is $l_2 = l_{2,3} + l_{2,1} = c_{2,3} + c_{2,3}^5 = 1 + ( 1 - r_5) = 2-r_5$ and symmetrically for node 5 it holds $l_5 = l_{5,6} + l_{5,4} = 1-r_2$. For the assets, since no payment incomes neither nodes 2 and 5 it holds that $a_2 = e_2 = 1-c$ and also $a_5 = e_5 = 1-c$. So $r_2 = \min\{1,a_2(r)/l_2(r)\} = \min\{1,\frac{1-c}{2-r_5}\}$ and symmetrically $r_5 = \min\{1,\frac{1-c}{2-r_2}\}$. To compute the clearing recovery rate we have to solve the system of the two equations, which evetually comes down to solving the system $r_2 = \frac{1-c}{2-\frac{1-c}{2-r_2}} \rightarrow r_2^2-2r_2 + 1-c = 0 \rightarrow r_2 = 1-\sqrt c$. So $r_2 = \min\{1,1-\sqrt c\}$ and symmetrically $r_5 = \min\{1,1-\sqrt c\}$. Depending on the value of $c$ we get different values for the clearing recovery rates of 2 and 5. If $c = 1/4$ then $r_2 = r_5 = 1/2$.
In the proportional payment scheme, each bank pays proportionally to its recovery rate. The payment of node 2 towards node 3 is $p_{2,3} = r_2l_{2,3} = 1-\sqrt c$ and $p_{2,1} = r_2l_{2,1}^5  = (1-\sqrt c)(1-r_5) = (1-\sqrt c)\sqrt c$. So finally $p_{2}(r) = p_{2,1}(r)+p_{2,3}= 1-\sqrt c + (1-\sqrt c)\sqrt c$. The payments for node 5 are symmetrical.

\begin{figure}[htbp!]
   \centering
\scalebox{1}    
{\begin{tikzpicture}
[shorten >=1pt,node distance=2cm,initial text=]
\tikzstyle{every state}=[draw=black!50,very thick]
\tikzset{every state/.style={minimum size=0pt}}
\node (1) {$1$}; 
\node (2) [right of=1] {$2$};
\node (3) [right of=2] {$3$};

\draw[orange,very thick,->,snake=snake] (2)--node[midway,black,yshift=2.5mm]{1}(1);
\draw[blue,very thick,->] (2)--node[midway,black,yshift=2mm]{1}(3);

\node(7) [below of=2] {$5$};
\node(8) [right of=7] {$6$};
\node(6) [left of=7]  {$4$};
\draw[blue,very thick,->] (7)--node[midway,black,yshift=-3mm]{1}(6);
\draw[orange,very thick,->,snake=snake] (7)--node[midway,black,yshift=-3mm]{1}(8);
\node[teal,right of=2,xshift=-2cm,yshift=5mm]{$1-c$};
\node[teal,right of=7,xshift=-2cm,yshift=-6mm]{$1-c$};
\path [orange,-,draw,dashed,thick] (7) -- ($ (2) !.5! (1) $);
\path [orange,-,draw,dashed,thick] (2) -- ($ (7) !.5! (8) $);
\end{tikzpicture}}
 \caption{Example of a financial network.}
    \label{fig:1}
\end{figure}
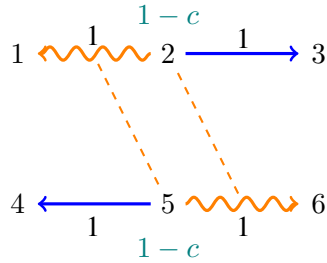

\section{Illustration, example and simulation of PURE-CIRCUIT}\label{purecircuitinstance}

We illustrate each gate $g$ by a directed graph that contains one corresponding node for each variable in $g$ and a special node that indicates the type of the gate. Nodes standing for input variables have an directed edge towards the special node who itself has a directed edge towards every node that represents an output variable. The graphs are presented below.

\begin{figure}[htbp!]
    \centering
\begin{tikzpicture}[shorten >=1pt,node distance=1.5cm,initial text=]
\tikzstyle{every state}=[draw=black!50,very thick]
\tikzset{every state/.style={minimum size=0pt}}
\tikzstyle{accepting}=[accepting by arrow]
\node(1){$u$};  
\node(2)[below  of =1]{$\text{NOT}$}; 
\node(3)[below of =2]{$w$}; 
\draw[black,->] (1)--(2);
\draw[black,->] (2)--(3);
\end{tikzpicture}
\qquad
\begin{tikzpicture}[shorten >=1pt,node distance=1.5cm,initial text=]
\tikzstyle{every state}=[draw=black!50,very thick]
\tikzset{every state/.style={minimum size=0pt}}
\tikzstyle{accepting}=[accepting by arrow]
\node(1){$u$};  
\node(4)[right of=1,xshift=5mm]{$v$};
\node(2)[below  of =1,xshift=1cm]{$\text{OR}$}; 
\node(3)[below of =2]{$w$}; 
\draw[black,->] (1)--(2);
\draw[black,->] (2)--(3);
\draw[black,->] (4)--(2);
\end{tikzpicture}
\qquad
\begin{tikzpicture}[shorten >=1pt,node distance=1.8cm,initial text=]
\tikzstyle{every state}=[draw=black!50,very thick]
\tikzset{every state/.style={minimum size=0pt}}
\tikzstyle{accepting}=[accepting by arrow]
\node(1){$u$};  
\node(2)[below  of =1]{$\text{PURIFY}$}; 
\node(3)[below right of =2]{$w$}; 
\node(4)[below left of=2]{$v$};
\draw[black,->] (1)--(2);
\draw[black,->] (2)--(3);
\draw[black,->] (2)--(4);
\end{tikzpicture}
\caption{Graphical representation for the NOT, OR, PURIFY gates}
    \label{Gates}
\end{figure}
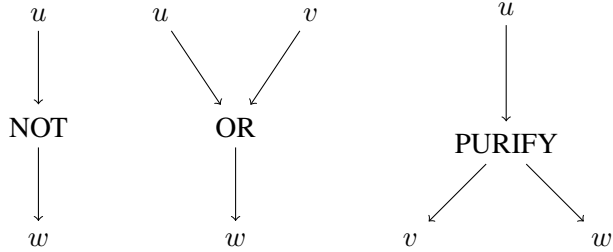

As an example, consider the instance $I = (V,G)$ with $V = \{u,v,w,y\}$ and  $G = \{(\text{NOT},u,v),(\text{OR},v$ $,w,y), (\text{PURIFY},v,u,w)\}$. No variable is output to more than one gate, which constitutes $I$ a valid \pure\ instance. 
In any satisfying assignment x, neither $\text{x}[u] \text{ nor }\text{x}[v]$ can lie in $\{0,1\}$, thus it must hold that x$[v] = \text{x}[u]  = \perp$ and $\text{x}[w] \neq \perp$, otherwise the PURIFY gate is not satisfied. If we set $\text{x}[w] = 1$ then also $\text{x}[y] = 1$ due to the OR-gate. Therefore, a satisfying assignment is $\text{x}[u] = \text{x}[v] = \perp$ and $\text{x}[w] = \text{x}[y] = 1$
The graphical illustration of $I$ is presented below. 

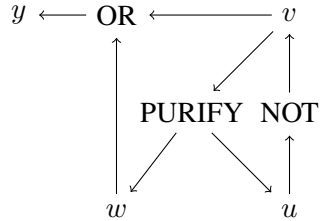
\begin{figure}[htbp!]   
    \centering
    \begin{tikzpicture}
[shorten >=1pt,node distance=1.3cm,initial text=]
\tikzstyle{every state}=[draw=black!50,very thick]
\tikzset{every state/.style={minimum size=0pt}}
\tikzstyle{accepting}=[accepting by arrow]
\node(1){$v$};  
\node(2)[below  of =1]{$\text{NOT}$}; 
\node(3)[below of =2]{$u$}; 
\draw[black,->] (2)--(1);
\draw[black,->] (3)--(2);
\node(4)[left of =2]{$\text{PURIFY}$};
\draw[black,->] (1)--(4);
\draw[black,->] (4)--(3);
\node(5)[left of=3,xshift=-1cm]{$w$};
\draw[black,->] (4)--(5);
\node(6)[above of=4,xshift=-1cm]{$\text{OR}$};
\draw[black,->] (1)--(6);
\draw[black,->] (5)--(6);
\node(7)[left of=6]{$y$};
\draw[black,->] (6)--(7);
\end{tikzpicture}
\caption{The graphical illustration of $I$.}
    \label{purecircuitinstance_I}
\end{figure}
Executing the steps presented in the reduction we can construct the financial network $\mathcal{F}_I$ that simulates $I$. Observe that  each variable $x$ is represented by a bank $b_x$. Specifically since variable $v$ is input to two gates, bank $b_v$ is reference to three CDS contracts two belonging in $\mathcal{F}_{\text{PURIFY}}$ and one in $\mathcal{F}_{\text{OR}}$. Since no variable $x$ is output to more than one gate, there exists no bank $b_x$ with more than one outgoing blue arc. Finally as pointed above, all CDS debtors posses enough assets to pay all generated liabilities and there exist no bank with more than one outgoing arc. The network is presented below where $\delta$ is defined in section 3.2.

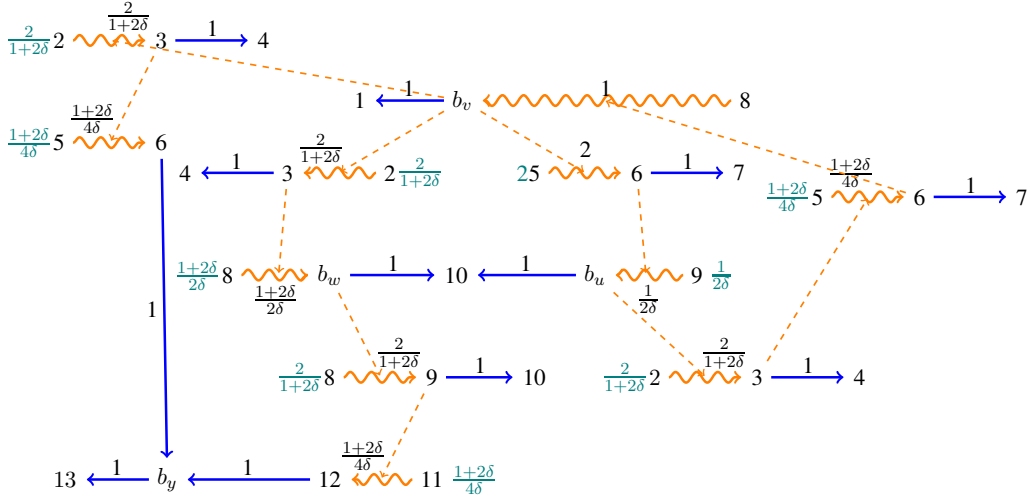
\begin{figure}[htbp!]
  
    \centering
   
\scalebox{0.8}
{
\begin{tikzpicture}
[shorten >=1pt,node distance=1.7cm,initial text=]
\tikzstyle{every state}=[draw=black!50,very thick]
\tikzset{every state/.style={minimum size=0pt}}
\tikzstyle{accepting}=[accepting by arrow]

\node(1) {$b_v$};
\node(2)[left of=1]{1};
\draw[blue,->,very thick] (1)--node[midway,black,yshift=2mm]{1}(2);

\node(3)[below right of=1]{5};
\node[teal,left of=3,xshift=1.5cm]{2};
\node(4)[right of=3]{6};
\node(5)[right of=4]{7};
\draw[blue,->,very thick] (4)--node[midway,black,yshift=2mm]{1}(5);
\draw[orange,->,very thick,snake=snake] (3)--node[midway,black,yshift=4mm]{2}(4);

\node(6)[below left of=1]{2};
\node(7)[left of=6]{3};
\draw[orange,->,very thick,snake=snake] (6)--node[midway,black,yshift=4mm,xshift=-3mm]{$\frac{2}{1+2\delta}$}(7);
\node(8)[left of=7]{4};
\draw[blue,->,very thick] (7)--node[midway,black,yshift=2mm]{1}(8);
\node[teal,right of=6,xshift=-1.2cm]{$\frac{2}{1+2\delta}$};
\path [orange,->,draw,dashed,thick] (1) -- ($ (3) !.5! (4) $);
\path [orange,->,draw,dashed,thick] (1) -- ($ (6) !.5! (7) $);

\node(9)[below of=4,xshift=1cm]{9};
\node[teal,right of=9,xshift=-1.3cm]{$\frac{1}{2\delta}$};
\node(10)[left of=9]{$b_u$};
\draw[orange,->,very thick,snake=snake] (9)--node[midway,black,yshift=-4mm]{$\frac{1}{2\delta}$}(10);
\node(11)[below of=7,xshift=-1cm]{8};
\node[teal,left of=11,xshift=1.2cm]{$\frac{1+2\delta}{2\delta}$};
\node(12)[right of=11]{$b_w$};
\draw[orange,->,very thick,snake=snake] (11)--node[midway,black,yshift=-4mm]{$\frac{1+2\delta}{2\delta}$}(12);
\node(13)[right of=12,xshift=4mm]{10};

\draw[blue,->,very thick] (12)--node[midway,black,yshift=2mm]{1}(13);
\draw[blue,->,very thick] (10)--node[midway,black,yshift=2mm]{1}(13);
\path [orange,->,draw,dashed,thick] (7) -- ($ (11) !.5! (12) $);
\path [orange,->,draw,dashed,thick] (4) -- ($ (9) !.5! (10) $);


\node(3)[below of=10,xshift=1cm]{$2$};
\node[teal,left of=3,xshift=1.2cm]{$\frac{2}{1+2\delta}$};
\node(4)[right of=3]{3};
\node(5)[right of=4]{4};
\draw[blue,->,very thick] (4)--node[midway,black,yshift=2mm]{1}(5);
\draw[orange,->,very thick,snake=snake] (3)--node[midway,black,yshift=4mm,xshift=3mm]{$\frac{2}{1+2\delta}$}(4);
\node(6)[right of =3,,xshift=1cm,yshift=3cm]{5};
\node(7)[right of=6]{6};
\draw[orange,->,very thick,snake=snake](6)--node[midway,black,yshift=4mm,xshift=-3mm]{$\frac{1+2\delta}{4\delta}$}(7);
\node[teal,left of=6,xshift=1.2cm]{$\frac{1+2\delta}{4\delta}$};
\node(8)[right of=7]{7};
\draw[blue,->,very thick] (7)--node[midway,black,yshift=2mm]{1}(8);
\path [orange,->,draw,dashed,thick] (4) -- ($ (6) !.5! (7) $);

\path [orange,->,draw,dashed,thick] (10) -- ($ (3) !.5! (4) $);

\node(11)[right of =1,xshift=3cm]{8};
\draw[orange,->,very thick,snake=snake] (11)--node[midway,black,yshift=2mm]{1}(1);
\path [orange,->,draw,dashed,thick] (7) -- ($ (11) !.5! (1) $);


\node(3)[left of=1,xshift=-5cm,yshift=1cm]{$2$};
\node[teal,left of=3,xshift=1.2cm]{$\frac{2}{1+2\delta}$};
\node(4)[right of=3]{$3$};
\node(5)[right of=4]{4};
\draw[blue,->,very thick] (4)--node[midway,black,yshift=2mm]{1}(5);
\draw[orange,->,very thick,snake=snake] (3)--node[midway,black,yshift=4mm,xshift=3mm]{$\frac{2}{1+2\delta}$}(4);
\path [orange,->,draw,dashed,thick] (1) -- ($ (3) !.5! (4) $);

\node(6)[below of =3]{5};
\node(7)[right of=6]{6};
\draw[orange,->,very thick,snake=snake] (6)--node[midway,black,yshift=4mm,xshift=-3mm]{$\frac{1+2\delta}{4\delta}$}(7);
\node[teal,left of=6,xshift=1.2cm]{$\frac{1+2\delta}{4\delta}$};
\path [orange,->,draw,dashed,thick] (4) -- ($ (6) !.5! (7) $);

\node(130)[below of= 12]{8};
\node[teal,left of=130,xshift=1.2cm]{$\frac{2}{1+2\delta}$};
\node(14)[right of=130]{9};
\node(15)[right of=14]{10};
\draw[blue,->,very thick] (14)--node[midway,black,yshift=2mm]{1}(15);
\draw[orange,->,very thick,snake=snake] (130)--node[midway,black,yshift=4mm,xshift=3mm]{$\frac{2}{1+2\delta}$}(14);
\path [orange,->,draw,dashed,thick] (12) -- ($ (130) !.5! (14) $);

\node(16)[below of =130]{12};
\node(17)[right of=16]{11};
\draw[orange,->,very thick,snake=snake] (17)--node[midway,black,yshift=4mm,xshift=-3mm]{$\frac{1+2\delta}{4\delta}$}(16);
\node[teal,right of=17,xshift=-1cm]{$\frac{1+2\delta}{4\delta}$};
\path [orange,->,draw,dashed,thick] (14) -- ($ (16) !.5! (17) $);

\node(20)[left of=16,xshift=-1cm]{$b_y$};
\draw[blue,->,very thick](7)--node[midway,black,xshift=-2mm]{1}(20);
\draw[blue,->,very thick](16)--node[midway,black,yshift=2mm]{1}(20);

\node(21)[left of=20]{$13$};
\draw[blue,->,very thick](20)--node[midway,black,yshift=2mm]{1}(21);
\end{tikzpicture}
}
\caption{The financial network $\mathcal{F}_I$ that simulates instance $I$ of Figure \ref{purecircuitinstance_I}.}
    \label{purecircuitnetwork}
    \end{figure}
    
\section{Omitted Proofs }\label{apx:C}
\begin{proof}[State of the art bound.]
In \cite{schuldenzucker2017finding}, it is established that there exists a small constant $\epsilon$ for which computing an $\epsilon$-\emph{CRRV} is $\textsf{PPAD}$-\emph{hard}. To prove that, the authors defined a variation of $\epsilon$-\emph{Generalised-Circuit} (definition 5.4~\cite{schuldenzucker2017finding}) and showed that there exists an $\epsilon>0$ s.t computing $\epsilon$ approximate solutions to circuits $C'$ of this variation is $\textsf{PPAD}$-hard~(Lemma 5.6 \cite{schuldenzucker2017finding}). Subsequently they connect their version with the original used in \cite{rubinstein2015inapproximability} and show that $\epsilon/5$ solutions to circuits $C'$ of their version correspond to $\epsilon$ solutions to circuits $C$ of the original version~(Lemma 5.6 \cite{schuldenzucker2017finding}). From Theorem 4.1~\cite{DBLP:conf/focs/DeligkasFHM22} it is established that for the original version of the problem used in \cite{rubinstein2015inapproximability}, $\epsilon$-\emph{Generalised-Circuit} is $\textsf{PPAD}$-hard for $\epsilon<0.1$. Combined these facts imply that the version presented in \cite{schuldenzucker2017finding} must be inapproximable up to any factor less than 1/50. Finally it is proved that $\epsilon/3$-\emph{CRRVes} correspond to $\epsilon$ solutions for circuits $C'$ of the new version of $\epsilon$-\emph{Generalised-Circuit}~(Theorem 5.1 \cite{schuldenzucker2017finding}), which combined with the inapproximability bound implies that computing $\epsilon$-\emph{CRRV}es is \textsf{PPAD}-hard for $\epsilon<1/150$.
\end{proof}

\begin{proof}[Lemma \ref{lemma1}]
 For each substitution $\sigma$ and a term $\tau$, we will prove that $(\sigma(\tau))^\textbf{I,A} = \tau^{\textbf{I,A}}$, for any assignment \textbf{A}. Namely the interpretation dictated by \textbf{I} of the terms connected via any of the above substitutions represent the same set of values from the domain \textbf{D}. Notice that assignments where both terms are mapped to $\emptyset$, are trivially satisfied and, as such, will not be explicitly addressed in the subsequent analysis.
 \begin{enumerate}
 \item[$\sigma_{1-}$:] Let $\textbf{k} = 1-\textbf{x}$. Combining the hypothesis with the definition of operation \ref{interval substraction} we get that $\textbf{k}$ represents a range of values within $[0,1]$ and $\sup\{\textbf{k}\} = 1-\inf\{\textbf{x}\}$ and $\inf\{\textbf{k}\} = 1-\sup\{\textbf{x}\}$. If $x\in[0,1]$ then 
 \begin{enumerate}
 \item $\sup\{\textbf{k}\} = 1-\max(0,x-\epsilon) = \min(1,(1-x)+\epsilon)$. 
 \item $\inf\{\textbf{k}\} = 1-\min(1,x+\epsilon) = \max(0,(1-x)-\epsilon)$.
 \end{enumerate}
 In the case where $x>1$ then $1-x< 0$ and
 \begin{enumerate}
 \item $\sup\{\textbf{k}\} = 1-\max(0,(1-\epsilon)) = \min(1,\epsilon)$
 \item $\inf\{\textbf{k}\} = 1-\min(1,1+\epsilon) =0$ 
 \end{enumerate}
 And if $x<0$ then $1-x>1$ meaning that
 \begin{enumerate}
 \item $\sup\{\textbf{k}\} = 1-\max(0,1-\epsilon) = 1$.
 \item $\inf\{\textbf{k}\} = 1-\min(1,\epsilon) = \max(0,1-\epsilon)$.
 \end{enumerate}
 From the above analysis we conclude that,
 \begin{equation*}
 1-^\textbf{I}\textbf{x} =
 \begin{cases}
 
 \vspace{1mm}
  \Bigl[(1-x)-\epsilon,(1-x)+\epsilon\Bigr]\cap\Bigl[0,1\Bigr],&\text{ if } 1-x\in[0,1]\\

  \vspace{1mm}
  \Bigl[1-\epsilon,1\Bigr]\cap\Bigl[0,1\Bigr],&\text{ if }1-x>1\\
   
   \vspace{1mm}
\Bigl[0,\epsilon\Bigr]\cap\Bigl[0,1\Bigr],&\text{ if }1-x<0
 \end{cases}
 \end{equation*}
 which proves the claim that if $\textbf{x} = x\pm\epsilon$ then $\Bigl(1-\textbf{x}\Bigr)^\textbf{I,A} = \Bigl((1-x)\pm\epsilon\Bigr)^\textbf{I,A}$, for any assignment \textbf{A}. Consequently $\sigma_{1-}(1-\textbf{x}) = (1-x)\pm\epsilon$ is a \emph{valid substitution}.
 \item[$\sigma_{\pm}$:] Let $\textbf{k} = \textbf{x} \pm \epsilon_2$. Combining the given hypothesis with the definition of the operation $\ref{pm operation on interval}$, it is evident that the variable $\textbf{k}$ must represent a range of values within the interval $[0, 1]$. According to \ref{pm operation on interval}, we can determine that $\sup\{\textbf{k}\} = \min(1, \sup\{\textbf{x}\} + \epsilon_2)$ and
$\inf\{\textbf{k}\} = \max(0, \inf\{\textbf{x}\} - \epsilon_2)$. Furthermore, considering the operation described in \ref{pm operation on numbers}, if $x\in[0,1]$, then:
\begin{enumerate}
\item $\sup\{\textbf{k}\} = \min(1, x + \epsilon_1 + \epsilon_2)$.
\item $\inf\{\textbf{k}\} = \max(0, x - \epsilon_1 - \epsilon_2)$.
\end{enumerate}
In cases where $x > 1$, then $\sup\{\textbf{x}\} = 1$ and $\inf\{\textbf{x}\} = \max(0, 1 - \epsilon_1)$ which implies that:
\begin{enumerate}
\item $\sup\{\textbf{k}\} = 1$.
\item $\inf\{\textbf{k}\} = \max(0, 1 - \epsilon_1 - \epsilon_2)$.
\end{enumerate}
And when $x < 0$, it holds that:
$\sup\{\textbf{x}\} = \min(1, \epsilon_1)$ and $\inf\{\textbf{x}\} = 0$, which leads to the conclusion that
\begin{enumerate}
\item $\sup\{\textbf{k}\} = \min(1, \epsilon_1 + \epsilon_2)$.
\item $\inf\{\textbf{x}\} = 0$.
\end{enumerate}
 In summary of the preceding analysis, we can conclude that,
 \begin{equation*}
 \textbf{x}\pm^\textbf{I}\epsilon_2 =
 \begin{cases}
 
 \vspace{1mm}
  \Bigl[x-(\epsilon_1+\epsilon_2),x+(\epsilon_1+\epsilon_2)\Bigr]\cap\Bigl[0,1\Bigr],&\text{ if } x\in[0,1]\\

  \vspace{1mm}
  \Bigl[1-(\epsilon_1+\epsilon_2),1\Bigr]\cap\Bigl[0,1\Bigr],&\text{ if }x>1\\
   
   \vspace{1mm}
\Bigl[0,\epsilon_1+\epsilon_2\Bigr]\cap\Bigl[0,1\Bigr],&\text{ if }x<0
 \end{cases}
 \end{equation*}
 which proves the claim that if $\textbf{x}=x\pm\epsilon_1$ then $\Bigl(\textbf{x}\pm\epsilon_2\Bigr)^{\textbf{I,A}} = \Bigl(x\pm(\epsilon_1+\epsilon_2)\Bigr)^\textbf{I,A}$. Consequently $\sigma_{\pm}(\textbf{x}\pm\epsilon_2)=x\pm(\epsilon_1+\epsilon_2)$ is a valid substitution.
 
 \item[$\sigma_{+}$:] Consider $\textbf{k} = \textbf{x} + \textbf{y}$. It's essential to note that, as per the definition in operation \ref{pm operation on numbers}, both variables $\textbf{x}$ and $\textbf{y}$ represent intervals constrained to the range $[0,1]$. This, in conjunction with operation \ref{interval addition}, underscores that the addition of intervals is exclusively defined within the boundaries of $[0,1]$. 
 
 In line with operation \ref{interval addition}, when $x + y\in [0,1]$, we observe that
 \begin{enumerate}
 \item $\sup\{\textbf{k}\} = \min(1, \sup\{\textbf{x}\} + \sup\{\textbf{y}\})$ 
 \item $\inf\{\textbf{k}\} = \max(0, \inf\{\textbf{x}\} + \inf\{\textbf{y}\})$.
 \end{enumerate}

Whenever $x + y > 1$, then in line with the definition provided in operation \ref{interval addition}, it holds that:
\begin{enumerate}
\item $\sup\{\textbf{k}\} = 1$
\item $\inf\{\textbf{k}\} = \max(0, 1 - (\epsilon_1 + \epsilon_2))$
\end{enumerate}
Finally whenever $x + y < 1$:
\begin{enumerate}
\item $\sup\{\textbf{k}\} = \min(1, \epsilon_1 + \epsilon_2)$.
\item $\inf\{\textbf{k}\} = 0$.
\end{enumerate}
Directly from \ref{interval addition} and by the above analysis we get that, 
\begin{equation*}
 \textbf{x} +^\textbf{I} \textbf{y} =
 \begin{cases}
 
 \vspace{1mm}
  \Bigl[(x+y)-(\epsilon_1+\epsilon_2),(x+y)+(\epsilon_1+\epsilon_2)\Bigr]\cap\Bigl[0,1\Bigr],&\text{ if } x+y\in[0,1]\\

  \vspace{1mm}
  \Bigl[1-(\epsilon_1+\epsilon_2),1\Bigr]\cap\Bigl[0,1\Bigr],&\text{ if }x>1\\
   
   \vspace{1mm}
\Bigl[0,\epsilon_1+\epsilon_2\Bigr]\cap\Bigl[0,1\Bigr],&\text{ if }x<0
 \end{cases}
 \end{equation*} 
which establishes that $\Bigl(\textbf{x}+ \textbf{y}\Bigr)^\textbf{I,A} = \Bigl((x+y)\pm(\epsilon_1+\epsilon_2)\Bigr)^\textbf{I,A}$. This implies that whenever $\textbf{x} = x\pm\epsilon_1$ and $\textbf{y} = y\pm\epsilon_2$ then $\sigma_{+}(\textbf{x}+\textbf{y}) = (x+y)\pm(\epsilon_1+\epsilon_2)$ is a valid substitution.

\item[$\sigma_{*}$:] Assume that $\textbf{k} = l \cdot \textbf{x}$. The claim naturally stems out from the definition of operation \ref{number interval mult}. If $l \cdot x\in[0,1]$, we have the following:
\begin{enumerate}
\item $\sup\{\textbf{k}\} = \min(1, l \cdot \sup\{\textbf{x}\})$.
\item $\inf\{\textbf{k}\} = \max(0, l \cdot \inf\{\textbf{x}\})$.
\end{enumerate}

In situations where $l \cdot x > 1$, as defined in operation \ref{number interval mult}:
\begin{enumerate}
\item $\sup\{\textbf{k}\} = 1$
\item $\inf\{\textbf{k}\} = \max(0, 1 - l \cdot \epsilon)$
\end{enumerate}
Conversely, if $l \cdot x < 0$, we have:
\begin{enumerate}
\item $\sup\{\textbf{k}\} = \min(1, l \cdot \epsilon)$
\item $\inf\{\textbf{k}\} = 0$
\end{enumerate}
 Summing up we get that:
 \begin{equation*}
 l\cdot^\textbf{I}\textbf{x} =
 \begin{cases}
 
 \vspace{1mm}
  \Bigl[l\cdot x-l\cdot\epsilon,l\cdot x+l\cdot\epsilon\Bigr]\cap\Bigl[0,1\Bigr],&\text{ if }l\cdot x\in[0,1]\\

  \vspace{1mm}
  \Bigl[1-l\cdot\epsilon,1\Bigr]\cap\Bigl[0,1\Bigr],&\text{ if }l\cdot x>1\\
   
   \vspace{1mm}
\Bigl[0,l\cdot\epsilon\Bigr]\cap\Bigl[0,1\Bigr],&\text{ if }l\cdot x<0
 \end{cases}
 \end{equation*} 
 which establishes that if $\textbf{x} = x\pm\epsilon$ and $l$ represents a number then $\Bigl(l\cdot\textbf{x}\Bigr)^\textbf{I,A} = \Bigl(l\cdot x\pm l\cdot\epsilon\Bigr)^\textbf{I,A}$. This implies that $\sigma_{*}(l\cdot\textbf{x}) = l\cdot x\pm l \cdot\epsilon$ is a valid substitution.
\end{enumerate}
\end{proof}

\begin{proof}[Lemma \ref{lemma2}]

Given the assumptions, we will prove that the pair of values $(\textbf{y}^{\textbf{I,A}}, \textbf{x}^{\textbf{I,A}})$ are contained in the relation \ref{relation}.

When $x \in [0, 1]$ and  $x \geq y$ it is clear that $y \leq 1$. This leads to the following comparisons:
\begin{enumerate}
\item $\sup\{\textbf{x}\} = \min(1, x + \epsilon) \geq \min(1, y + \epsilon) = \sup\{\textbf{y}\}$.
\item $\inf\{\textbf{x}\} = \max(0, x - \epsilon) \geq \max(0, y - \epsilon) = \inf\{\textbf{y}\}$.
\end{enumerate}
Now, when $x > 1$ and $x \geq y$, we have:
\begin{enumerate}
\item $\sup\{\textbf{x}\} = 1 \geq \sup\{\textbf{y}\}$.
\item $\inf\{\textbf{x}\} = \max(0, 1 - \epsilon)$.
If $y > 1$, then $\inf\{\textbf{y}\} = \max(0, 1 - \epsilon)$, while otherwise $\inf\{\textbf{y}\} = \max(0, y - \epsilon)$. In both cases, $\inf\{\textbf{x}\} \geq \inf\{\textbf{y}\}$.
\end{enumerate}
Finally, if $y \leq x < 0$, according to \ref{pm operation on numbers}, we have $\sup\{\textbf{x}\} = \sup\{\textbf{y}\}$ and
$\inf\{\textbf{x}\} = \inf\{\textbf{y}\}$.
In all scenarios, it is evident that $\sup\{\textbf{x}\} \geq \sup\{\textbf{y}\}$ and $\inf\{\textbf{x}\} \geq \inf\{\textbf{y}\}$, which, by definition, implies that $(\textbf{y}^\textbf{I,A},\textbf{x}^\textbf{I,A})\in \preceq^\textbf{I}$. 

\end{proof}

\end{document}